  \newcommand{\blind}{1}
\def\spacingset#1{\renewcommand{\baselinestretch}%
{#1}\small\normalsize} \spacingset{1}
\providecommand{\tabularnewline}{\\}
\long\def\symbolfootnote[#1]#2{\begingroup
\def\thefootnote{\fnsymbol{footnote}}\footnote[#1]{#2}\endgroup}
\numberwithin{equation}{section}
\numberwithin{figure}{section}
  \theoremstyle{remark}
  \newtheorem{rem}{\protect\remarkname}
  \theoremstyle{plain}
  \newtheorem{assumption}{\protect\assumptionname}
\theoremstyle{plain}
\newtheorem{thm}{\protect\theoremname}
  \theoremstyle{plain}
  \theoremstyle{plain}
  \newtheorem{lem}{\protect\lemmaname}
  \providecommand{\algorithmname}{Algorithm}
  \providecommand{\assumptionname}{Assumption}
  \providecommand{\lemmaname}{Lemma}
  \providecommand{\remarkname}{Remark}
\providecommand{\theoremname}{Theorem}
\begin{document}

\def\spacingset#1{\renewcommand{\baselinestretch}%
{#1}\small\normalsize} \spacingset{1}

\if1\blind
{
  \title{\bf   Linear Hypothesis Testing \\ in Dense High-Dimensional Linear Models}
  \author{Yinchu Zhu  and Jelena Bradic \thanks{
    The authors gratefully acknowledge \textit{NSF support} through the grant DMS-1205296}\hspace{.2cm}\\
    Rady School of Management and Department of Mathematics\\ University of California at San Diego}
    \date{}
  \maketitle
} \fi

\if0\blind
{
  \bigskip
  \bigskip
  \bigskip
  \begin{center}
    {\LARGE\bf Linear Hypothesis Testing \\ in Dense High-Dimensional Linear Models}
\end{center}
  \medskip
} \fi

\begin{abstract}
\spacingset{1 }


We propose a methodology for   testing linear hypothesis in high-dimensional linear models. The proposed test 
does not  impose any  restriction on the size of the model, i.e. model  sparsity  or the loading vector representing the hypothesis.  
 Providing asymptotically valid methods for testing general linear functions of  the  regression parameters in high-dimensions is extremely challenging -- especially  without making restrictive or unverifiable assumptions on the number of non-zero elements.  
We propose to test the moment conditions  related to the newly designed restructured regression,  where the inputs are transformed and augmented features. These new features incorporate the structure of the null hypothesis directly. 
The test statistics  are constructed in such a way that lack of sparsity in the original model parameter does not present a problem for the theoretical justification of our procedures. 
We  establish asymptotically exact control on Type I error without imposing any sparsity assumptions on model parameter or the vector representing the linear hypothesis. 
 Our method is  also shown to achieve certain optimality in detecting deviations from the null hypothesis. We demonstrate  the favorable finite-sample performance of the proposed methods, via  a number of   numerical and  a real data example.

 \end{abstract}

\noindent%
{\it Keywords:}  High-dimensional linear models; Inference;   Non-sparse models; Linear Testing, Dantzig, Lasso.
\vfill

\spacingset{1 }

\section{Introduction}
A high-dimensional inference is a fundamental topic of interest in  modern scientific problems that are   widely recognized to be of  high-dimensional nature, i.e., that  require estimation of  parameters with dimensionality exceeding the number of observations. Applications   span a wide variety of scientific fields, such as biology, medicine, genetics,
neuroscience, economics, and finance.    Minimizing a  suitably regularized (quasi-)likelihood function was developed \citep{tibshirani96regression,fan2001variable} as a suitable approach for the estimation in such models.   In particular, high-dimensional linear models have been studied extensively in   recent years and take the following form 
\begin{equation}
y_{i}=x_{i}^\top \beta_{*}+\varepsilon_{i}, \qquad i=1,2,\dots,n \ \label{eq: original model}
\end{equation}
for a response   $y_i \in \mathbb{R}$, a feature vector
  $x_{i}\in\mathbb{R}^{p}$ and  the noise $\varepsilon_{i}\in\mathbb{R}$, such that  $E[\varepsilon_{i}]=0$
and $E[\varepsilon_{i}^{2}]=\sigma_{\varepsilon}^{2}$ with $0<\sigma_{\varepsilon}^{2}<\infty$.  The vector $\beta_* \in \mathbb{R}^p$ is the unknown model parameter and we allow  for $p\gg n$. We consider a random design setting with the feature vectors  satisfying $Ex_{i}=0$ and  $E[x_{i}x_{i}^\top ]=\Sigma_{X}$.    Under certain 
regularity conditions on the design matrix $X=(x_{1},x_2 ,\dots, x_n)^\top $, regularized methods with a suitable
choice of the tuning parameter have been shown to achieve the optimal rate of estimation
  as long as the vector $\beta_*$ is sparse in that   $ \|\beta_*\|_0 =o(n /\log p)$.

 The goal of the present article is to address  the testing problem for  linear hypotheses  of the form 
 \begin{equation} 
 H_{0}:\ a^\top \beta_{*}=g_{0}\label{eq: null hypo},
 \end{equation}
 where the loading vector $a\in \mathbb   R^{p}$ is pre-specified  and $g_{0}\in \mathbb R$ is  given, and design an asymptotically valid test statistic that does not rely on sparsity assumptions.
 Some central challenges have hindered the systematic
development of tools for statistical inference in such settings. 
The non-sparse nature of the  model parameter $\beta_*$
poses serious challenges to consistent estimation; moreover, the  size and structure of the loading vector $a$ introduce additional difficulty for the inference.  However,  in this article  we consider potentially dense vectors $\beta_*$ with $0\leq \|\beta_*\|_0 \leq p$. We  also  allow for the non-sparse loadings with \(1\leq \|a\|_0\leq p\).
 The inference problem for the mean of the response \(y_{i}\) conditional on \(x_{i}=a\),
 is a prototypical case for the general functional $a^\top \beta_*$ and 
 is a representative case for dense loading $a$.

 
 We develop the principles of  \textit{restructured regression,} where a hypothesis-driven {\it feature synthetization} is introduced. The feature augmentation is done in such a way to separate useful inferential information from the useless one, by ``projecting'' the original feature
 space to the space spanned by the vector $a$ and the space orthogonal
 to $a$. This orthogonal projection is introduced to achieve the above separation and avoid the curse of dimensionality. 
 Then, an appropriate moment condition in invoked on the restricted regression and a suitable test statistic constructed. 
 The structure of the moment condition and its test depends on whether or not  the covariance of the features \(\Sigma_X\) is known. When prior knowledge of \(\Sigma_X\) is available, the synthesized features can be created in such a way that the resulting moment condition and testing procedure do not depend on \(\beta_*\); thus,  estimation of \(\beta_*\) is completely avoided. As a result, no assumption on the sparsity of \(\beta_*\) is required. We establish  theoretical guarantees for Type I error control and show that the test can detect  the deviation from the null hypothesis  of the order \(O(\|a\|_2/\sqrt n)\). To the best of our knowledge, our approach  provides the first
 result on testing general linear hypothesis (\ref{eq: null hypo})   in     high-dimensional linear 
 models with potentially non-sparse (dense) parameters.

 When prior knowledge of \(\Sigma_X\) is unavailable, the  orthogonalization and perfect separation is not achievable due to the unknown projection matrix. We design an estimator of the  projection matrix and  further condition the new and augmented features in such a way that their correlations are estimable and yet the format of the restructured regression remains unchanged. The developed hypothesis-driven feature separation diminishes the impact of the  inaccuracy of an estimator of a transformation of \(\beta_*\).
%
%
%
Consequently, we can establish asymptotically exact control of  Type I error. We believe there is currently no result on  testing \(a^{\top }\beta_*\) in the case where \(\Sigma_X\) is unknown, and  both \(\beta_*\) and \(a\) are allowed to be dense. Moreover, when sparsity assumptions hold, our procedure is shown to achieve   optimality guarantees; hence, it does not  loose  efficiency.

  Since we do not assume sparsity in \(\beta_*\), our work does not directly compare to the existing results, which are only valid for sparse \(\beta_*\). However, in some cases, our work generalizes existing results to the non-sparse models. For example, \cite{cai2015confidence} show that when \(\Sigma_X\) is known, the   minimax length of the confidence
  interval for $a^{\top}\beta_{*}$ is of the order $O(\|a\|_{2}/\sqrt{n})$ if \(\|\beta_*\|_0=O(n/\log p)\). As confidence sets for \(a^{\top}\beta_*\) can be easily constructed by inverting the proposed tests, our results indicate that their  conclusion continues to hold for non-sparse models, where  \(\|\beta_*\|_0\) can be as large as \(p\).
  For the case of dense \(a\), we do not impose any constraint on \(a\). However,  existing work, such as 
  \cite{cai2015confidence}, imposes a lower bound (in terms of \(\|a\|_\infty\)) on the minimal non-zero coordinate of \(a\) -- a condition that is seldom satisfied for inference  of conditional mean,  when \(a\) is typically drawn   from   a continuous distribution (e.g. $a$ is drawn from the same distribution as  the distribution of the $x_i$'s).

 \subsection{Relation to existing literature}

Confidence intervals  and hypothesis testing play a fundamental role in statistical theory and applications. However, compared to the point estimation  there is still much
work to be done   for  statistical inference of high-dimensional models. Existing work on the inference problems  predominantly focuses on  individual
coordinates of \(\beta_*\).  Early work typically imposes conditions that guarantee consistent variable selection (see \cite{fan2001variable,zou2006adaptive,Zhao:2006})
or develops methods that lead  to conservative inferential guarantees (e.g. \cite{Buhlmann2013}). However, recent work focusses on asymptotically accurate inference  without  relying on  the variable selection consistency. Current advances in this domain are, however,  restricted to the ultra-sparse case, where  $\|\beta_*\|_0= o(\sqrt{n}/\log p)$; see \cite{zhang2014confidence,belloni2014inference, van2014asymptotically,javanmard2014confidence, ning2014general, 2015arXiv150802757J,mitra2014benefit,buhlmann2015,Belloni2015,Chernozhukov2015}. Under
such  sparsity condition, the expected length of the confidence intervals for individual coordinates
is of the order  $O(1/\sqrt{n})$ \citep{2016arXiv160100815V}.  \cite{cai2015confidence}
study the length of the confidence intervals allowing for \(\|\beta_*\|_0=o(n/\log p)_{}\) and discover that lack of explicit knowledge of $\|\beta_*\|_0$ can fundamentally limit the efficiency of confidence intervals.

However, there is little reason to believe that the sparsity of $\beta_*$
needs to hold in practice \citep{hall2014,5319752,2014arXiv1410.4578J,pritchard2001rare}. Unfortunately, there
is  almost no work  on estimating or testing  the true sparsity level of the underlying
parameter. Hence, the theory of  hypothesis testing   under general sparsity structures  is still a very challenging and important open problem. In particular, progress
is very much required when  $\|\beta_*\|_0$  is allowed to grow faster
than $n/\log p$ and perhaps even larger than the sample size $n$.
There are several
articles showing that the  regularized procedures
 have non-vanishing estimation errors in such settings \citep{donoho1994minimax,raskutti2011minimax,2016arXiv160303474C}.
However, is it still possible to develop a general methodology for
 testing ${\beta_*}$ in this case? Can one construct
valid inference procedures that do not require knowledge of  \(\|\beta_*\|_0\)?

In the proposed inference procedure, we handle the  high-dimensional, possibly non-sparse model parameters and/or non-sparse loadings,  by developing a new methodology for testing. The proposed methodology is centered around a construction of  augmented and synthesized features that are driven by a specific form of the null hypothesis. Compared with the previous approaches of de-biasing \citep{zhang2014confidence, javanmard2014confidence,van2014asymptotically,mitra2014benefit}, scoring  \citep{ning2014general,Chernozhukov2015}, double-selection \citep{belloni2014inference,Belloni2015}, our new  approach    has two major distinctive features:
\begin{itemize}
	\item We do not rely on a $l_1$norm consistent estimation of the unknown model parameters. In high-dimensional models with the lack of sparsity in the parameters, this may no longer be possible. Instead, we propose to reformulate the original parametric null hypothesis into a 
	moment condition  that can be successfully estimated even without sparsity in the model. 
	This moment condition is   different from the score equations employed for estimation as those are not estimable in non-sparse high-dimensional models. 
	\item  We advocate for  a study  and exploration of the correlation between feature vectors (and not the model parameters); this proves to be a valuable tool that overcomes the limit of estimation. Namely,  we propose that the features be  split and projected  onto the loading vector $a$ of the hypothesis \eqref{eq: null hypo}, thereby fully utilizing the null hypothesis structure. 
	This ``decoupling'' scheme allows for a successful estimation of the moment condition even without sparsity assumption. As a result the developed method provides a rich alternative to the classical Wald or Score principles.
\end{itemize}

  \subsection{Notation and organization of the article}
 We briefly describe notations used in the article.  
We use 
 $\rightarrow^{d}$  to denote convergence in distribution and 
 $\mathcal N(0,1)$ to denote  the standard normal distribution with its cumulative distribution function  denoted by  $\Phi(\cdot)$. The (multivariate) normal distribution with mean (vector) \(\mu\) and variance (matrix) \(\Sigma\) is denoted by \(\mathcal N (\mu,\Sigma)\).
We use  $^\top $ to denote the transpose of (a vector or matrix) and denote by    $I_p$ the $p \times p$ identity matrix.
For a vector \(a=(a_1,\cdots,a_p)^{\top }\in \mathbb R ^p\),  its $l_0$ norm  is the cardinality of $\text{supp}(a)= \{ i\mid a_i \neq 0 \}  $ and $\|a\|_\infty=\max \{ |a_1|,\cdots,|a_p|\} $; \(\|a\|_1\) and \(\|a\|_2\) denote the $l_1$ and $l_2$  norm of  $a$, respectively. In this case, \(a_{-i}\) denotes the vector \(a\) with its \(i\)th coordinate removed. For two sequences of positive constants \(a_n\) and \(b_n\), we use \(a_n \asymp b_n\) to denote that \(a_n/b_n=O(1)\) and \(b_n/a_n=O(1)\). For two real numbers \(a_1\) and \(a_2\), \(a_1 \vee a_2\) and \(a_1\wedge a_2\) denote \(\max\{a_1,a_2\}\) and \(\min\{a_1,a_2\}\), respectively.

The rest of this article is organized as follows.  Section \ref{sec:Methodology} introduces the main methodology under known \(\Sigma_X\) and establishes theoretical properties of the proposed test.  Section  \ref{sec: extension} extends the proposed methodology to the case of the unknown $\Sigma_X$ and provides    theoretical results.  Section \ref{sec:examples} contains examples illustrating new methods that the proposed methodology brings to the literature on high-dimensional inference. 
 Section \ref{sec:Numerical-results} contains detailed numerical experiments on a number
of dense  high-dimensional linear models, including  sparse and dense loadings $a$. In Section \ref{sec:MC}, we demonstrate the excellent finite-sample performance of  the proposed methods through Monte Carlo simulations; in Section \ref{sec:rd}, we illustrate our method via  a real data study. Appendix \ref{ap:a} contains complete  details of the theoretical derivations.

\section{\label{sec:Methodology} Testing \(H_0:\ a^\top \beta_*=g_0\) with prior knowledge of  $\Sigma_X$}

In this section we promote a unified approach to a wide class of decision problems.   Our main building block (which we believe is important in
its own right) is a construction, named {\it restructured regression}
allowing, under weak assumptions, to build 
tests for hypotheses on  \(a^{\top}\beta_*\), where \(\beta_*\) and/or \(a\)  can be  non-sparse. Considering the potential failure of sparsity in many practical problems, we strongly believe that our
approach  permits  a diverse spectrum of applications. In this section our focus is to introduce the method with known
$\Sigma_{X}$ (an assumption relaxed in the next section).

Throughout the paper, we denote   $\Omega_X = \Sigma_X^{-1}$. In the sequel, given the feature vector $x_{i}\in\mathbb{R}^{p}$
and loading vector $a\in\mathbb{R}^{p }$, we consider the following  decomposition
\begin{equation}
x_{i}=az_{i}+w_{i},\label{eq: feature decomposition}
\end{equation}
with a scalar  
$$z_{i}=\left( \frac{\Omega_X a}{a^\top\Omega_X a} \right)^\top x_{i}$$
and  a $p$-dimensional vector 
$$w_{i}=\left[I_{p}- \frac{aa^\top\Omega_X }{a^\top\Omega_X a}\right]x_{i}.$$
Observe that  $az_i$ can be  viewed as the  projection   of $x_{i}$ onto the vector $a$ -- taking into account \( \Omega_X\), hence extracting information in \(x_{i}\) regarding the null hypothesis. Notice that the model
(\ref{eq: original model}) and decomposition (\ref{eq: feature decomposition})
imply 
\begin{equation}
y_{i}=z_{i}\cdot(a^\top\beta_{*})+w_{i}^\top \beta_{*}+\varepsilon_{i},\label{eq: restructured model}
\end{equation}
referred to as \textit{restructured regression}.  
The proposed construction gives rise to the method of  \textit{feature
customization}. Given covariate vector $x_{i}$ and the loading vector  $a$
representing the structure of the null hypothesis, we create the synthesized features $\tilde{x}_{i}:=(z_{i},w_{i}^\top )^\top $
so that the regression coefficient for $z_{i}$ in the restructured
regression (\ref{eq: restructured model}) is the quantity under testing.

\begin{rem}
The synthesized features are not only an artifact of our new methodology but also admit intuitive interpretations. Consider the case where \(\Sigma_X\) is known to be \(I_p\). The synthesized features \(z_i\) and \(w_i\) represent the relevant and the irrelevant information with respect to the  null, respectively. To see this, suppose that  the true distribution of the data is known. With the population expectations, we can  identify the parameters in the restructured regression (\ref{eq: restructured model}): \(E(z_i y_i)=Ez_i^2 (a^\top \beta_*)\) and \(Ew_i y_i=Ew_{i}w_{i}^\top \beta_{*}\). Notice that the latter equation contains no information regarding \(a^\top \beta_*\) because it can be shown that \(a\) is orthogonal to columns in \(Ew_i w_i^\top\). In other words, knowing \(Ew_{i}w_{i}^\top \beta_{*}\) does not lead to knowing  \(a^\top \beta_*\). Therefore, \(a^\top \beta_*\) is identified with the distribution of \((y_i,z_i)\) and \(w_i\) does not contain information about the null hypothesis.
\end{rem}

It is not hard to verify that, by the construction of the transformed features,
$E[w_{i}z_{i}]=0$. It follows that $E\left[z_{i}(y_{i}-z_{i}g_{0})\right]=E\left[z_{i}\left(\varepsilon_{i}+w_{i}^\top \beta_{*}+z_{i}(a^\top\beta_{*}-g_{0})\right)\right]=E\left[z_{i}^{2}(a^\top\beta_{*}-g_{0})\right].$ Observe that the last expression is $0$ if and only if  the null hypothesis \eqref{eq: null hypo} holds. 
As a result, testing \(H_{0}\) in  \eqref{eq: null hypo}  is equivalent to testing   the following moment condition:
\begin{equation}
H_{0}:E \left[z_{1}(y_{1}-z_{1}g_{0})\right]=0.\label{eq:corr}
\end{equation}
To test the above condition,
we propose a studentized test statistic,  $T_n (g_0)$, taking the form 
 \begin{equation}\label{eq:T_n}
T_{n}(g_{0}):=\frac{n^{-1/2}\sum_{i=1}^{n}l_{i}(g_{0})}{\sqrt{n^{-1}\sum_{i=1}^{n}l_{i}(g_{0})^{2}}},
\end{equation}
with $l_{i}(g_{0})=z_{i}(y_{i}-z_{i}g_{0})$. For a test of \(H_{0}\) with nominal size \(\alpha \in (0,1)\), we reject \(H_0\) if
$$
 |T_n(g_0)|>\Phi(1-\alpha/2).
 $$
 
 The methodology proposed above is novel in a number of aspects. Unlike Wald or Score or Likelihood  principles,  centered around  a consistent estimator of $\beta_*$, our methodology allows for  extremely fast implementation and  does not  estimate the unknown parameter $\beta_*$. The novel methodology consists of two-stages.
 At the first stage,  our procedure  establishes a data-driven feature decomposition    based on the structure of the null hypothesis directly. At the second stage, only ``a moment condition'' of the restructured regression is tested. 
 It is critical to observe that restructured regression
by itself is not sufficient to guarantee  valid inference.
The  novel properties of the proposed method are based on the built-in, i.e., designed  orthogonality of the synthesized features $z_i$ and $w_i$. As such it 
enables us to construct  a test statistic  that  does not contain the unknown parameter \(\beta _{*}\), thereby allowing our methodology to handle dense (and thus possibly non-estimable) $\beta_{*}$.  
 Moreover, no assumption is imposed
on the loadings $a$ either. As we will see in the next section, these
properties under known $\Sigma_{X}$ propagate  to the case of
the unknown $\Sigma_{X}$ and underline all further developments.


%

\begin{assumption}
\label{assu: lyapunov CLT} Let the following hold: 
 (i) there exists a  positive constant $C$ such that $E|z_{i}\sigma_{z}^{-1}|^{8}\leq C$,
$E\varepsilon_{i}^{8}\leq C$ and $E|w_{i}^\top \beta_{*}|^{8}<C$ with $C<\infty$.  Moreover,
 (ii) there exists a constant $c\in(0,\infty)$, such that $\sigma_{\varepsilon}\geq c$.
Lastly,
 (iii) there exist constants $D_{1},D_{2}>0$ such that the eigenvalues
of $\Sigma_{X}$ lie in $[D_{1},D_{2}]$. 
\end{assumption}
The stated conditions in Assumption \ref{assu: lyapunov CLT} are very weak and 
intuitive. 
Assumption \ref{assu: lyapunov CLT}(i) requires components in the
restructured regression (\ref{eq: restructured model}) to have bounded
eighth moments.  Assumption \ref{assu: lyapunov CLT}(ii) rules out
the noiseless regression setting in the original model (\ref{eq: original model}).
Assumption \ref{assu: lyapunov CLT}(iii) is very weak in that it
only imposes well-designed covariance matrix of  the features $x_{i}$ (see \cite{bickel2009simultaneous}).

Notice that Assumption \ref{assu: lyapunov CLT} does not require
any condition regarding the sparsity of $\beta_{*}$. Even in the
case of sparse $a$, existing work, such as the debiasing method,
heavily relies on the sparsity of $\beta_{*}$. Results regarding
dense $a$ are very limited even for sparse \(\beta_*\). \citet{cai2015confidence} impose the
condition of $\max_{j\in{\rm supp}(a)}|a_{j}|/\min_{j\in{\rm supp}(a)}|a_{j}|=O(1)$; however, such a condition is quite hard to satisfy if $a$ is drawn
from a continuous distribution whose support contains zero. In contrast,
our results do not require any condition on $a$ and, hence, bridge the gap in the existing literature on high-dimensional inference.  

\begin{thm}
\label{thm: known variance X}Consider the model in (\ref{eq: original model})
and the definition of $z_{i}$ and $w_{i}$ as  in (\ref{eq: feature decomposition}).
Suppose that Assumption \ref{assu: lyapunov CLT} holds. Under $H_{0}$
in (\ref{eq: null hypo}),  we have that (1) the test statistic $T_n$, \eqref{eq:T_n}, satisfies 
$
T_{n}(g_{0}) \rightarrow^{d}\mathcal{N}(0,1)$ as $n,p\rightarrow \infty$ and that (2)\begin{displaymath}
\lim_{n,p\to \infty}P\Bigl(|T_{n}(g_{0})|>\Phi^{-1}(1-\alpha/2)\Bigl)=\alpha.
\end{displaymath}
\end{thm}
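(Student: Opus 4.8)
The plan is to show that, under $H_0$, the summands $l_i(g_0)=z_i(y_i-z_ig_0)$ form an i.i.d.\ array of mean-zero variables whose standardized fourth moment is bounded by an absolute constant, and then let a Lyapunov central limit theorem govern the numerator of $T_n(g_0)$, a weak law of large numbers govern the self-normalizing denominator, and Slutsky's theorem combine the two. First I would use that under $H_0$, $a^\top\beta_*=g_0$, so that
\[
l_i(g_0)=z_i\bigl(\varepsilon_i+w_i^\top\beta_*+z_i(a^\top\beta_*-g_0)\bigr)=z_i\varepsilon_i+z_iw_i^\top\beta_* .
\]
As already observed in the derivation of \eqref{eq:corr}, $E[l_i(g_0)]=0$ under $H_0$: indeed $E[z_i\varepsilon_i]=0$ by exogeneity in \eqref{eq: original model}, and the designed orthogonality $E[z_iw_i]=0$ from \eqref{eq: feature decomposition} forces $E[z_iw_i^\top\beta_*]=E[z_iw_i^\top]\beta_*=0$ for \emph{every} $\beta_*$. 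Write $\sigma_z^2:=Ez_i^2=(a^\top\Omega_Xa)^{-1}$ and $\sigma_l^2:=E[l_i(g_0)^2]$.

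The crux is a scale-free bound on $\kappa:=E|l_i(g_0)|^4/\sigma_l^4$. For the variance, the orthogonality between the $\varepsilon_i$- and $w_i^\top\beta_*$-components of $l_i(g_0)$ gives $\sigma_l^2\geq\sigma_\varepsilon^2\sigma_z^2\geq c^2\sigma_z^2$ by Assumption \ref{assu: lyapunov CLT}(ii), while $\sigma_z^2=(a^\top\Omega_Xa)^{-1}\geq D_1\|a\|_2^{-2}>0$ by Assumption \ref{assu: lyapunov CLT}(iii). For the fourth moment, writing $u_i=\varepsilon_i+w_i^\top\beta_*$ and combining the Cauchy--Schwarz inequality with the convexity bound $(s+t)^8\leq 2^7(s^8+t^8)$,
\[
E|l_i(g_0)|^4=E\!\left[z_i^4u_i^4\right]\leq\sqrt{Ez_i^8}\,\sqrt{Eu_i^8}\leq\sqrt{C\sigma_z^8}\,\sqrt{2^7\bigl(E\varepsilon_i^8+E|w_i^\top\beta_*|^8\bigr)}\leq C'\sigma_z^4,
\]
with $C'$ depending only on the constant $C$ of Assumption \ref{assu: lyapunov CLT}(i) (the same computation gives $\sigma_l^2\leq C''\sigma_z^2$, so in fact $\sigma_l^2\asymp\sigma_z^2$). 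Dividing, $\kappa\leq C'/c^4$, a bound free of $n$, $p$, $a$ and $\beta_*$. This is the whole point of the restructured regression: dividing $z_i$ by $\sigma_z$ makes the statistic insensitive to the magnitude and direction of the loading $a$, and collapsing $w_i^\top\beta_*$ into one scalar with a bounded eighth moment absorbs the possible density of $\beta_*$, so that neither sparsity of $\beta_*$ nor any restriction on $a$ is needed.

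With $\kappa=O(1)$ the conclusions follow quickly. For the numerator, the i.i.d.\ Lyapunov ratio of order four is $E|l_i(g_0)|^4/(n\sigma_l^4)=\kappa/n\to0$, hence $n^{-1/2}\sum_{i=1}^n l_i(g_0)/\sigma_l\to^{d}\mathcal N(0,1)$. For the denominator, $\mathrm{Var}\bigl(n^{-1}\sum_{i=1}^n l_i(g_0)^2\bigr)=n^{-1}\mathrm{Var}(l_i(g_0)^2)\leq n^{-1}E|l_i(g_0)|^4=\kappa\sigma_l^4/n$ and $E\bigl[n^{-1}\sum_{i=1}^n l_i(g_0)^2\bigr]=\sigma_l^2$ under $H_0$, so Chebyshev's inequality gives $n^{-1}\sum_{i=1}^n l_i(g_0)^2/\sigma_l^2\to^{p}1$. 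Writing $T_n(g_0)=\bigl(n^{-1/2}\sum_{i=1}^n l_i(g_0)/\sigma_l\bigr)\big/\sqrt{n^{-1}\sum_{i=1}^n l_i(g_0)^2/\sigma_l^2}$ and applying Slutsky's theorem proves part (1); part (2) follows since $\pm\Phi^{-1}(1-\alpha/2)$ are continuity points of the standard normal law, whence $P\bigl(|T_n(g_0)|>\Phi^{-1}(1-\alpha/2)\bigr)\to P\bigl(|\mathcal N(0,1)|>\Phi^{-1}(1-\alpha/2)\bigr)=\alpha$. The main obstacle — and the reason the construction is arranged as it is — is the uniform bound on $\kappa$ of the second paragraph: one has to verify that every constant emerging from the Cauchy--Schwarz steps and from the lower bound $\sigma_l^2\geq c^2\sigma_z^2$ is genuinely independent of $n$, $p$, $a$ and $\beta_*$, which is exactly what Assumption \ref{assu: lyapunov CLT} together with the orthogonal decomposition \eqref{eq: feature decomposition} is engineered to guarantee.
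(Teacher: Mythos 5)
Your proposal is correct and follows essentially the same route as the paper: the key lower bound $\sigma_l^2\geq\sigma_z^2\sigma_\varepsilon^2$, a H\"older/Cauchy--Schwarz bound on higher moments of $l_i(g_0)$ normalized by $\sigma_z$, a Lyapunov CLT for the numerator, a Chebyshev/Markov law of large numbers for the self-normalizer, and Slutsky. The only cosmetic difference is that the paper verifies the Lyapunov condition with third moments (via $L^6$ norms) while you use fourth moments (via $L^8$ norms); both are covered by Assumption \ref{assu: lyapunov CLT}(i).
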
  
Theorem \ref{thm: known variance X}  gives an asymptotic approximation for the null distribution of the test statistic $T_n(g_0)$ under
general sparsity  structure.
The   result  of Theorem \ref{thm: known variance X}  has two striking features. The first is that it holds, no matter  the size or sparsity of the loading vector $a$. The second is that the proposed test guarantees Type I error control when $p \geq n$ and $p,n \to \infty$  no matter of the sparsity of $\beta _* $ and without the knowledge of the  noise level $\sigma_\varepsilon$; in particular, it allows $\|\beta_*\|_0 =p$. Therefore, our test is fully adaptive, in the sense that its validity does not depend on   in  the sparse/dense level of either the model parameter \(\beta_*\) or the hypothesis loading \(a\).
We also show that our test can detect deviations from the null that
are larger than $O(\|a\|_{2}/\sqrt n)$ while allowing   $\beta_*$ to be non-sparse and $p \geq n$. \begin{thm}
\label{thm: known variance X power}Under the conditions of Theorem \ref{thm: known variance X},
suppose that $a^\top\beta_{*}=g_{0}+h_{n}$ and  $\sqrt{n}|h_{n}|/\|a\|_{2}\rightarrow\infty$. Then, for any $\alpha\in(0,1)$. 
$$ \lim_{n,p\to \infty}P\Bigl(|T_{n}(g_{0})|>\Phi^{-1}(1-\alpha/2)\Bigl)=1.$$
  \end{thm}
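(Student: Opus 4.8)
The plan is to show that under the stated local alternative the statistic diverges, $|T_n(g_0)|\to_{P}\infty$, which immediately yields power tending to $1$ since the critical value $\Phi^{-1}(1-\alpha/2)$ is a fixed finite number. The first step is to substitute the restructured regression \eqref{eq: restructured model} into $l_i(g_0)=z_i(y_i-z_i g_0)$ and use $E[w_i z_i]=0$ to obtain the exact identity $l_i(g_0)=z_i^2 h_n+z_i u_i$, where $u_i:=w_i^\top\beta_*+\varepsilon_i$ satisfies $E[z_i u_i]=0$. Writing $\sigma_z^2:=E z_1^2$, I would next record, from Assumption \ref{assu: lyapunov CLT}(i) and the Cauchy--Schwarz inequality, the moment bounds $E z_i^4\lesssim\sigma_z^4$ and $E[z_i^2 u_i^2]\lesssim\sigma_z^2$ (here $E u_i^2, E u_i^4=O(1)$ by the triangle inequality and Assumption \ref{assu: lyapunov CLT}(i)), where $\lesssim$ hides a constant depending only on $C$. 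Finally, since $z_i=(\Omega_X a/a^\top\Omega_X a)^\top x_i$ one computes $\sigma_z^2=1/(a^\top\Omega_X a)$, and Assumption \ref{assu: lyapunov CLT}(iii) gives $\sigma_z^2\asymp\|a\|_2^{-2}$; hence $\delta_n:=\sqrt n\,|h_n|\,\sigma_z\asymp\sqrt n\,|h_n|/\|a\|_2\to\infty$ by hypothesis.

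Next I would treat the numerator and denominator of $T_n(g_0)$ separately. Summing the identity, the numerator equals $\sqrt n\,h_n\bigl(n^{-1}\sum_{i=1}^n z_i^2\bigr)+n^{-1/2}\sum_{i=1}^n z_i u_i$; by Chebyshev's inequality (using the bounded fourth moment of $z_i/\sigma_z$) $n^{-1}\sum_i z_i^2=\sigma_z^2(1+o_P(1))$, while $n^{-1/2}\sum_i z_i u_i$ has mean zero and variance $n^{-1}\sum_i E[z_i^2 u_i^2]=O(\sigma_z^2)$, hence is $O_P(\sigma_z)$. Dividing by $\sigma_z$, the numerator is, in absolute value, at least $\delta_n(1+o_P(1))-O_P(1)$. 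For the denominator, $(a+b)^2\le2a^2+2b^2$ gives $n^{-1}\sum_i l_i(g_0)^2\le2h_n^2\,n^{-1}\sum_i z_i^4+2\,n^{-1}\sum_i z_i^2 u_i^2=O_P(h_n^2\sigma_z^4+\sigma_z^2)$ by Markov's inequality and the moment bounds, so $\sqrt{n^{-1}\sum_i l_i(g_0)^2}$ divided by $\sigma_z$ is $O_P(|h_n|\sigma_z+1)=O_P(\delta_n/\sqrt n+1)$. Combining the two bounds,
\[
|T_n(g_0)|\;\ge\;\frac{\delta_n(1+o_P(1))-O_P(1)}{O_P(\delta_n/\sqrt n+1)},
\]
and since $\delta_n\to\infty$ one has $\delta_n/(\delta_n/\sqrt n+1)\ge\tfrac12\min\{\delta_n,\sqrt n\}\to\infty$, so the right-hand side tends to $\infty$ in probability; therefore $P(|T_n(g_0)|>\Phi^{-1}(1-\alpha/2))\to1$.

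The hardest part is the uniform-in-$(n,p)$ control of the denominator, i.e.\ establishing $n^{-1}\sum_i z_i^4=O_P(\sigma_z^4)$ and $n^{-1}\sum_i z_i^2 u_i^2=O_P(\sigma_z^2)$. This is the step that genuinely uses Assumption \ref{assu: lyapunov CLT}(i): because $z_i$ and $u_i$ are only uncorrelated and not independent, $E[z_i^2 u_i^2]$ must be bounded through Cauchy--Schwarz from the eighth-moment bounds on $z_i/\sigma_z$, $\varepsilon_i$ and $w_i^\top\beta_*$ rather than by factoring. The relation $\sigma_z^2=1/(a^\top\Omega_X a)\asymp\|a\|_2^{-2}$, supplied by Assumption \ref{assu: lyapunov CLT}(iii), is what converts the abstract requirement $\sqrt n\,|h_n|\,\sigma_z\to\infty$ into the stated detection radius $O(\|a\|_2/\sqrt n)$; everything else is routine bookkeeping with Markov's and Chebyshev's inequalities.
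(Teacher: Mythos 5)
Your proposal is correct and follows essentially the same route as the paper's proof: the same decomposition $l_i(g_0)=z_i^2h_n+z_i(\varepsilon_i+w_i^\top\beta_*)$, the same eighth-moment/Cauchy--Schwarz bounds from Assumption \ref{assu: lyapunov CLT}, and the same identification $\sigma_z^2=(a^\top\Omega_X a)^{-1}\asymp\|a\|_2^{-2}$ to convert $\sqrt{n}|h_n|/\|a\|_2\to\infty$ into divergence of the drift. The only (harmless) difference is that you merely upper-bound the denominator by $O_P(\sigma_z^2(\sigma_z^2h_n^2\vee 1))$ via Markov, whereas the paper proves the slightly stronger ratio consistency $n^{-1}\sum_i l_i(g_0)^2/El_i(g_0)^2\to_P 1$ before comparing it to the mean term.
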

\begin{rem}
Theorem \ref{thm: known variance X power} also suggests that we can expect  the length of the confidence interval for $a^\top\beta_{*}$ (obtained by inverting the proposed test)
to be of  the order of  $O(\|a\|_{2}/\sqrt n)$  regardless of the sparsity of $\beta_{*}$
or $a$.  To the best of our knowledge,
it is the first result to explicitly allow  non-sparse and simultaneously high-dimensional parameters $\beta^*$ or vector loadings $a$.
It is also closely connected with the existing results
for the case of sparse parameters $\beta^*$.  \citet{cai2015confidence},
state that under Gaussianity and sparsity
in both $\beta_{*}$ and $a$ together with known $\Sigma_{X}$ and
$\sigma_{\varepsilon}$, the optimal expected length of confidence
intervals for $a^\top\beta_{*}$ is of the order $O( \|a\|_{2}/\sqrt n)$ (see Theorem 7 therein). Observe that our procedure achieves the same optimality without the knowledge of $\sigma_{\varepsilon}$ and allowing dense vectors $\beta_*$.
\end{rem}
We do not formally claim that this is the optimal rate for dense $\beta_{*}$,
but we can consider   an obvious benchmark. Let $\bar {\beta}$ be
an estimator that attains an efficiency similar to (ordinary least square) OLS in low dimensions, i.e., 
$\bar{\beta}$ is  distributed as $\mathcal N(\beta_{*},\Omega_X\sigma_{\varepsilon}^{2}/n)$.
Then  $a^\top\bar{\beta}$  follows $\mathcal N(a^\top\beta_{*},a^\top\Omega_Xa\sigma_{\varepsilon}^{2}/n)$ distribution.
Since $\Omega_{X}$ has eigenvalues bounded away from infinity, the standard
deviation of $a^\top \bar{\beta}$ is of the order $\|a\|_{2}/\sqrt n$.
Such an estimator might not be feasible in practice, but could serve
as a benchmark for dense $\beta_{*}$. A rigorous study of the efficiency issue is likely to yield results that are quite  different from current literature since existing results, e.g., \cite{cai2015confidence}, do not naturally extend to dense problems. For example, consider the case of $\|a\|_0=\|\beta_{*}\|_0=p$, naively extending Theorem 8 of  \cite{cai2015confidence} would conclude that the minimax expected length of a confidence interval for $a^{\top}\beta_{*}$ is of the order $\|a\|_{\infty}p\sqrt{(\log p)/n}$; however, this rate is larger than the rate $\|a\|_{2}/\sqrt{n}$, which is bounded above by $\|a\|_{\infty}\sqrt{p/n}$.  Lastly, according to Theorem \ref{thm: known variance X power} our proposed test achieves the same rate at the benchmark $\bar \beta$.

\section{\label{sec: extension} Testing \(H_0:\ a^\top \beta_*=g_0\) without prior knowledge of  $\Sigma_{X}$}

The approach proposed in this section tackles the high-dimensional  inference problem in a very general setting. The focus is the more realistic scenario in which the covariance matrix $\Sigma_X$ and  the variance of the model \eqref{eq: original model} are both unknown. We synthesize new features, create a new reference model and explore the correlations therein  in order to design  a suitable inferential procedure that is stable without sparsity assumption.

\subsection{Feature synthetization and restructured regression}

In order to design inference when  \(\Sigma_X\) unknown, we take on a new perspective and build upon the methodology of Section \ref{sec:Methodology}. Consider feature synthetization of Section \ref{sec:Methodology} where $\Sigma_X$ is naively treated as  $I_p$,
\begin{equation}\label{eq: z and w with unknown SigmaX}
z_{i}=\left( \frac{a}{a^{\top} a} \right)^\top x_{i} \in \mathbb{R} \quad{\rm and}\quad w_{i}=\left(I_{p}-aa^\top/(a^\top a)\right)x_{i}\in \mathbb{R}^p.
\end{equation}   Although the decomposition   $x_{i}=az_{i}+w_{i}$ still holds, features
   $z_{i}$ and $w_{i}$ might be correlated (because $\Sigma_{X} \neq I_p$). 
If such correlation is estimated successfully, we can  use certain decoupling method to eliminate the impact of dense parameters while allowing exponentially growing dimensions.

 The first challenge is that   directly estimating the correlation
between $z_{i}$ and $w_{i}$  (as defined)  is not achievable (as  the restricted eigenvalue (RE)  condition \cite{bickel2009simultaneous} on \(W=(w_1,\cdots,w_n)^\top\) is violated).
To address this problem, we propose to {\it stabilize} the feature vector $w_{i}$ and define {\it stabilized}  features $\tilde w_i$. We stabilize the features in such a way that  the RE condition on the stabilized design $\tilde W=(\tilde w_{1},\cdots,\tilde {w}_{n})^\top $ is satisfied with high probability.
Since $I_{p}-aa^\top/(a^\top a)$ is a projection matrix, we can find  $U_{a}\in\mathbb{R}^{p\times(p-1)}$  an orthogonal matrix such that 
$$U_{a}^\top U_{a}=I_{p-1} \qquad \mbox {and}
 \qquad I_{p}-aa^\top/(a^\top a)=U_{a}U_{a}^\top . $$
 Then 
$$W\beta_{*}=X(I_{p}-aa^\top/(a^\top a))\beta_{*}=XU_{a}U_{a}^\top \beta_{*}=\tilde{W}\pi_{*},$$
where 
$$\tilde{W}=  WU_a \qquad \mbox{and} \qquad \pi_{*}=U_{a}^\top \beta_{*}.$$ Since $y_{i}=z_{i}\cdot(a^\top\beta_{*})+w_{i}^\top \beta_{*}+\varepsilon_{i}$,
we have the {\it stabilized model }
\begin{equation}\label{eq:consolidated}
y_{i}=z_{i}\cdot(a^\top\beta_{*})+\tilde{w}_{i}^\top \pi_{*}+\varepsilon_{i}.
\end{equation}
The model is balanced in the sense that 
$E\tilde{W}^\top \tilde{W}/n=U_{a}^\top \Sigma_{X}U_{a}\in\mathbb{R}^{(p-1)\times(p-1) }$ with eigenvalues   bounded away from zero and infinity. Therefore, RE condition on \(\tilde{W}\) holds under weak conditions; see  \cite{rudelson2013reconstruction}.


\begin{rem}
\label{rem: sparse example}The synthesized feature $w_{i}\in\mathbb{R}^{p}$
is consolidated into $\tilde{w}_{i}\in\mathbb{R}^{p-1}$, in that $\tilde{w}_{i}$
has a smaller dimensionality and can be used to recover $w_{i}$ via
$w_{i}=U_{a}\tilde{w}_{i}$. In this sense, $\tilde{w}_{i}$ contains
all the information in $w_{i}$. As an example, consider the case with $a$ being the first column of
$I_{p}$. In this case, it is not hard to verify that $z_{i}=x_{i,1}$, $w_{i}=(0,x_{i,2},\cdots,x_{i,p})^\top \in\mathbb{R}^{p}$,
$U_{a}=(0,I_{p-1})^\top \in\mathbb{R}^{p\times(p-1)}$ and thus $\tilde{w}_{i}=U_{a}^\top w_{i}=(x_{i,2},\cdots,x_{i,p})^\top \in\mathbb{R}^{p-1}$.

\end{rem}
We now introduce an additional model to account for the dependence
between the {\it synthesized feature} $z_{i}$ and the {\it stabilized  feature}
$\tilde{w}_{i}$: 
\begin{equation}
z_{i}=\tilde{w}_{i}^\top\gamma_{*}+u_{i},\label{eq: graphical model}
\end{equation}
where $\gamma_{*}\in\mathbb{R}^{p-1}$ is an unknown parameter and $u_{i}$
is independent of $\tilde{w}_{i}$ with $Eu_{i}=0$ and $Eu_{i}^{2}=\sigma_{u}^{2}$.

In this article, we will assume that $\gamma_{*}$ is sparse, in order  to decouple the
dependence between $z_{i}$ and $\tilde{w}_{i}$ with the unknown $\Sigma_{X}$. %
In fact, sparse $\gamma_{*}$ is a generalization of the sparsity
condition on the precision matrix $\Omega_{X}$,
a regularity condition typically imposed in the literature; see \citet{van2014asymptotically},
\citet{belloni2014inference,Belloni2015} and \citet{ning2014general}. Recall
the example in Remark \ref{rem: sparse example}. Since $x_{i,1}=z_{i}=\tilde{w}_{i}^\top \gamma_{*}+u_{i}=x_{i,-1}^\top \gamma_{*}+u_{i}$,
it is not hard to show that the first row of $\Omega_{X}$ is $(\sigma_{u}^{-2},-\sigma_{u}^{-2}\gamma_{*}^\top )$.
Hence, the sparsity of $\gamma_{*}$ is equivalent to the sparsity
in the first row of $\Omega_{X}$. The sparsity of $\gamma_{*}$ can be justified for dense $a$ as well. Consider the case of $\Sigma_X=c I_{p}$ for some $c>0$;   a prototypical model in compressive sensing corresponds to $c=1$  \citep{nickl2013confidence}. In this case, one can easily show that $z_i$ and $\tilde{w}_{i}$ are uncorrelated, meaning that $\gamma_{*}=0$ for any $a$. The synthesized features also admit intuitive interpretations in this case:   the feature $z_{i}$ contains useful information in testing the null hypothesis $a^\top \beta_*=g_0$, while the consolidated $\tilde{w}_{i}$ contain information not useful for inference.

Now, we are ready to construct the moment condition of interest.
 Observe that under $H_{0}$ in (\ref{eq: null hypo}), $y_{i}-z_{i}g_{0}-\tilde{w}_{i}^\top \pi_{*}=\varepsilon_{i}$
is uncorrelated with $z_{i}-\tilde{w}_{i}^\top \gamma_{*}=u_{i}$. If $H_{0}$
is false, then $y_{i}-z_{i}g_{0}-\tilde{w}_{i}^\top \pi_{*}=\varepsilon_{i}+z_{i}(\theta_{*}-g_{0})=\varepsilon_{i}+\tilde{w}_{i}^\top \gamma_{*}(\theta_{*}-g_{0})+u_{i}(\theta_{*}-g_{0})$
has non-zero correlation with $u_{i}=z_{i}-\tilde{w}_{i}^\top \gamma_{*}$. 
Hence, the initial null hypothesis, \eqref{eq: null hypo} is equivalent to the following null hypothesis  
\begin{equation}
H_0: E\Bigl[\left(z_{1}-\tilde{w}_{1}^\top \gamma_{*}\right) \left(y_{1}-z_{1}g_{0}-\tilde{w}_{1}^\top \pi_{*}\right)\Bigl]=0.\label{eq: moment condition}
\end{equation}
Directly testing this moment condition is not feasible, due to the unknown values of parameters $\gamma_{*}$ and $\pi_{*}$. As a result, we first provide estimates for these unknown parameters and consider the test statistic given by the studentized  statistics.

We make a few  remarks about the above  proposed methodology.  As
mentioned above, the existing literature on high-dimensional inference adopts the approach of relying on an (almost) unbiased estimate of the model parameter to distinguish the null and alternative hypotheses. 
The existing methods largely differ by the means of constructing the unbiased estimate and/or its asymptotic variance. 
Many use  an approximation of a  one-step Newton method \citep{zhang2014confidence,van2014asymptotically, javanmard2014confidence}   to achieve consistency in estimation of possibly all $p$ parameters. In order to test $a^{\top}\beta_{*}$ in this framework, one need to show that the debiased estimator for $\beta_{*}$ can be used to construct an asymptotically unbiased and normal estimator for $a^{\top}\beta_{*}$; to the best of our knowledge, a formal theoretical justification is yet to be established even under sparse $\beta_{*}$. Other than the debiasing technique, some proposals center around Neyman's score orthogonalization ideas \citep{belloni2014inference,Belloni2015,Chernozhukov2015,ning2014general}. It is worth pointing out that such a method requires a clear separation of parameter under testing and the nuisance parameter. In the original problem, the model parameter is $\beta_{*}$ and the quantity under testing is $a^{\top}\beta_{*}$; hence, it is not clear how to define the nuisance parameter since the $a^{\top}\beta_{*}$ is not just one entry (or a subset) of the parameter vector $\beta_{*}$.  Lastly, the work of \cite{cai2015confidence} propose  a minimax optimal test  that allows for dense loadings vector $a$, however in the dense case it provides a conservative error bounds and requires the knowledge of the sparsity size $s$.


 However, our proposal deviates from the above methodologies in a few aspects. Firstly, we design a test statistic irrespective of a consistency of high-dimensional estimators for the model parameter; hence, any refitting or one-step approximations are unnecessary. Secondly, we aim to orthogonalize design features (rather than model parameters) by directly taking into  account the structure of the null hypothesis (represented by $a$ and $g_{0}$). In this way we achieve full adaptivity to the hypothesis testing problem of interest. Thirdly, we reformulate the original parametric hypothesis into a moment condition of which we provide  adaptive estimators. The moment condition itself is not a simple first-order optimality identification (related to Z-estimators), but rather a moment that utilizes the special feature orthogonalization and fusion. Hence, even in setting where the existing work applies, our proposed method provides an alternative. However, apart from existing work, our proposed method applies much more broadly.

\subsection{Adaptive estimation of the unknown quantities}

 In this subsection, we start with a brief introduction of the Dantzig selector, which is the basis of our estimators. Then we introduce the intuition and steps of our estimator as well as  implementation details.

\subsubsection{Dantzig selector review}

Numerous studies have been conducted in regards to  the consistent estimation of high-dimensional parameters   in linear models. The canonical examples of successful estimators represent Lasso and Dantzig selector, defined as $\hat \beta_l$ and $\hat \beta_d$ below,
\begin{equation}\label{eq: DS original}
\hat \beta_l = \arg\min_{\beta \in \mathbb{R}^p} \left\{ \|Y - X \beta  \|_2^2 + \lambda_l \|\beta\|_{1} \right\} , \qquad 
\begin{array}{ccrcl}
\hat{\beta}_d&= &\underset{\beta \in \mathbb{R}^{p}}{\arg\min} \ \|\beta\|_{1}&& \\ 
  & s.t  \qquad  &  \left \|n^{-1} X^\top (Y- X \beta) \right \|_{\infty}& \leq&\lambda_d .
\end{array}
\end{equation}
Although Lasso and Dantzig selector are defined in different times,  \cite{bickel2009simultaneous}  established equivalence between the two estimators  under the conditions of  moderate design correlations and model sparsity, $\|\beta_*\|_0 \ll n$.  Between these two estimator, the Dantzig selector, $\hat \beta_d$, offers easy implementation through linear programming techniques.
 Moreover, the constraint in the Dantzig selector can be interpreted as a relaxation of the least squares normal equations, $X^\top Y = X^\top X \beta$. However, the performance of both estimators is tightly connected to the choice of their respective tuning parameters $\lambda_l$ and $\lambda_d$, i.e. the size of such relaxation.
 Several empirical and
theoretical studies emphasized that  tuning parameters should be chosen proportionally to the noise standard deviation $\sigma_\varepsilon$, i.e. $\lambda_d=\lambda_d(\sigma_\varepsilon)=\sigma_\varepsilon \sqrt{(\log p)/n}$. In such settings one can guarantee  $\|\hat \beta_l - \beta_*\|_1 =O(\|\beta_{*}\|_{0}\sqrt{(\log p)/n})$. Unfortunately,
in most applications, the variance of the noise is unavailable. It is therefore vital to design statistical procedures that estimate unknown parameters  together with the size of model variance in
a joint fashion. 
 This topic received special attention, cf. \cite{giraud2012} and the references therein. Most popular $\sigma$-adaptive
procedures, the square-root Lasso  \citep{Belloni01122011}, the  scaled Lasso \citep{sun2012scaled} and the self-tuned Dantzig selector \citep{gautier2013pivotal,2010arXiv1012.1297B} can be
seen as maximum a posteriori   estimators with a particular choice of prior distribution.
 However they do not provide estimates that  are reasonable in non-sparse and high-dimensional models -- after all in such settings it is impossible to consistently estimate the model parameters (see for more details \cite{2016arXiv160303474C} and \cite{raskutti2011minimax}).
  The aim of the present section 
  is to present an alternative to these methods, which are closely related, but presents some advantages
in terms of implementation and a more transparent theoretical analysis in not necessarily sparse models; the main benefit is that our estimates are well controlled in certain sense.

\subsubsection{Modified  Dantzig selector: adaptive to signal-to-noise ratio}

We start with the estimator for \(\pi_*\), a parameter that is high-dimensional and yet not necessarily sparse. We extend the Dantzig selector above to conform to the testing problem that we have to perform. We begin by splitting the tuning parameter into a constant independent of the variance of the noise and introduce a parameter $\rho$, a  square root of the noise to response ratio  as an unknown in the optimization problem. At
the population level, $\rho$ is intended to represent  $\sigma_{\varepsilon}/\sqrt{E(y_{1}-z_{1}g_{0})^{2}}$
and $\rho_{0}$ is a lower bound for this ratio. One might attempt
to use scaled Lasso by \citet{sun2012scaled} or self-tuning dantzig
selector proposed by \citet{gautier2013pivotal}, but for non-sparse
$\pi_{*}$, these methods cannot ensure that the estimated noise variance
is bounded away from zero whenever the vector \(\pi_*\) is a dense vector (a case of special interest here).

For \(Z=(z_1,\cdots,z_n)^\top \) and \(Y=(y_1,\cdots,y_n)^\top \) defined in (\ref{eq: z and w with unknown SigmaX}), we  introduce  the following version of Dantzig selector  of $\pi_*$  
\begin{equation}\label{eq: dantzig pi}
\begin{array}{ccrcl}
(\hat{\pi},\hat{\rho}) &= & \underset{(\pi,\rho)\in\mathbb{R}^{p-1}\times \mathbb{R} }{\arg\min}\|\pi\|_{1}& &\\
  & s.t  \qquad  & \left \| \tilde{W}^\top (Y-Zg_{0}-\tilde{W}\pi) \right \|_{\infty}&\leq&\eta \ \rho \   \sqrt{n}\|Y-Zg_{0}\|_{2} \\
    & & \left(Y-Zg_{0}\right)^\top \left(Y-Zg_{0}-\tilde{W}\pi \right)&\geq& \rho_{0}\ \rho \  \|Y-Zg_{0}\|_{2}^{2}/2 \\
    & &   \rho  &\in& [\rho_0,1],
\end{array}
\end{equation}
where $\eta\asymp\sqrt{n^{-1}\log p}$   and $\rho_{0}\in(0,1)$ are  scale-free
tuning parameters.

The estimator (\ref{eq: dantzig pi}) is different from  (\ref{eq: DS original}) in two ways. First, the estimator (\ref{eq: dantzig pi}) simultaneously estimates $\pi_{*}$ and $\rho$. We introduce a $\rho_{0}$ the lower bound for $\rho$ as a tuning parameter. Second, the estimator (\ref{eq: dantzig pi}) has an additional constraint, which essentially serves as an upper bound for $\rho$. The intuition of this bound is the following. When $\pi$ is replaced by the true $\pi_*$ and the null hypothesis holds, this constraint (scaled by $1/n$) becomes $\pi_{*}^{\top}\tilde{W}^{\top}\varepsilon/n+\varepsilon^{\top}\varepsilon/n \geq \rho_{0}\rho\|\tilde{W}\pi_{*}+\varepsilon\|_{2}^2/n $. By the law of large numbers, this means that $o_P(1)+\sigma_{\varepsilon}^2\geq\rho_{0}\rho E(y_{1}-z_{1}g_{0})^2$, which is satisfied if $\rho=\sigma_{\varepsilon}/\sqrt{E(y_{1}-z_{1}g_{0})^{2}}$ and $\rho>\rho_{0}$.

The vector $\varepsilon \ = Y-Zg_{0}-\tilde{W}\pi_* $  is a residual vector of the stabilized  model \eqref{eq:consolidated} under the null hypothesis $H_0$. The first constraint  on the residual vector  imposes that for each $i$, much like the Dantzig selector, $\hat \beta_l$, maximal correlation $\|\tilde{W}^{\top}\varepsilon /n\|_{\infty}$ is not larger than the  noise level $\eta   \sigma_{\varepsilon} $. Yet, in contrast to $\hat \beta_l$, our estimator treats $\rho$ as an unknown quantity and estimates it simultaneously with $\pi_*$. Moreover, we introduce the second constraint to stabilize estimation of the moment of interest \eqref{eq: moment condition} in the presence of non-sparse vectors $\pi_*$. Under the null hypothesis, this constraint prevents choice of $\rho$ that is too large; namely, it constraints $ \rho \leq C \left(Y-Zg_{0}\right)^\top \varepsilon/ \left\|Y-Zg_{0}\right\|_2^2$ for a finite constant $C>0$. In sparse settings, this additional constraint is redundant, so we remove it from our estimator of $\gamma_*$ defined below (a vector that is assumed to be sparse).
Hence, we consider the following  estimator,$\hat \gamma$ \begin{equation}\label{eq: dantzig gamma}
\begin{array}{ccrcl}
\hat{\gamma}&= &\underset{\gamma \in \mathbb{R}^{p-1}}{\arg\min} \ \|\gamma\|_{1}&& \\ 
  & s.t  \qquad  &  \left \|n^{-1} \tilde W^\top (Z-\tilde{W}\gamma) \right \|_{\infty}& \leq&\lambda n^{-1/2}\|Z\|_{2} 
\end{array}
\end{equation} 
where $\lambda\asymp\sqrt{n^{-1}\log p}$ is a scale-free tuning parameter and $n^{-1/2}\|Z\|_{2}$ serves as an upper bound of the unknown $\sigma_u$ in the model \eqref{eq: graphical model}. It is worth pointing out that the defined estimators change with a change in the hypothesis testing problem \eqref{eq: null hypo} through the new, synthesized and stabilized feature vectors $\tilde W$ and $Z$ together with $g_0$. We present a few examples in Section 4.

\subsubsection{Implementation}
The  optimization
problem in (\ref{eq: dantzig pi}), a generalization of the Dantzig selector \citep{candes2007dantzig}, can be recast as a linear program; the computational burden of our method is comparable to the Dantzig selector.
  Define scalars $d_{1}=\rho_{0}\|Y-Zg_{0}\|_{2}^{2}/2$,
 	$d_{2}=\|Y-Zg_{0}\|_{2}^{2}$, vectors $D_{1}=\tilde{W}^{\top}(Y-Zg_{0})\in\mathbb{R}^{p-1}$
 	and $D_{2}=\sqrt{n}\eta\|Y-Zg_{0}\|_{2}\mathbf{1}_{p-1}$ and matrix
 	$D_{3}=\tilde{W}^{\top}\tilde{W}\in\mathbb{R}^{(p-1)\times(p-1)}$.
	
 	Then, (\ref{eq: dantzig pi}) is equivalent to the following linear program
\begin{equation}
	 	\begin{array}{cccc}
 		\min_{(c,\pi,\rho)\in\mathbb{R}^{p-1}\times\mathbb{R}^{p-1}\times\mathbb{R}} &   \mathbf{1}_{p-1}^{\top}c&\\
 		s.t. &   -c\leq&\pi&\leq c\\
 		&    \rho_{0}\leq&\rho&\leq1\\
 		&   &d_{1}\rho+D_{1}^{\top}\pi&\leq d_{2}\\
 		&    -D_{2}\rho\leq &D_{1}-D_{3}\pi&\leq D_{2}\rho,
 	\end{array}
	\end{equation}
 	where the optimization variables are $c\in\mathbb{R}^{p-1}$, $\pi\in\mathbb{R}^{p-1}$
 	and $\rho\in\mathbb{R}$. 
	 For application purposes  we propose to choose the following choices of the tuning parameters: $\rho_{0}=0.01$ and $\eta=\sqrt{2\log(p)/n}$. They are universal choices and we show in simulations that they provide good results.

\subsection{Test Statistic}
With defined estimators of $\gamma_*$
and $\pi_*$, we are ready to define a sample analog of the moment condition \ref{eq: moment condition}.
Under our proposed method, a test of nominal size \(\alpha \in (0,1)\) rejects $H_{0}$ in (\ref{eq: null hypo}) if $|S_{n}|>\Phi^{-1}(1-\alpha/2)$,
where 
\begin{equation}\label{eq:sn}
S_n = \sqrt{n}\frac{ (Z-\tilde{W}\hat{\gamma})^\top (Y-Zg_{0}-\tilde{W}\hat{\pi})}{ \|Z-\tilde{W}\hat{\gamma}\|_{2}   \|Y-Zg_{0}-\tilde{W}\hat{\pi}\|_{2}}.
\end{equation}
Other estimators of the first moment \eqref{eq: moment condition} are certainly possible, however we focus and analyze the natural case above; we leave future efficiency studies for future work since it is not apparent that any other choice is preferred.
Moreover, the self-normalizing statistic above is directly dependent on the hypothesis of interest and is a function of synthesized features. Compared with the existing  approaches where the normalization   factor
is a consistent estimator of the asymptotic variance, our self-normalized approach adopts
an inconsistent estimator as the normalization factor, which in a sense corresponds to
``inefficient Studentizing'' (cf. \cite{RSSB:RSSB737}).
However,  we establish that the asymptotic distribution of the resulting statistic is pivotal and its
percentiles can be obtained from the normal distribution.

In constructing   estimates of $\gamma_{*}$ and $\pi_{*}$, we  do not impose any assumption regarding the sparsity of $\pi_{*}$
or $\beta_{*}$. Notice that, except for the case of sparse $a$,
it is in general unreasonable to expect sparsity in $\pi_{*}$, even
if $\beta_{*}$ is sparse. Although we use estimates
for both $\gamma_{*}$ and $\pi_{*}$ denoted by $\hat{\gamma}$ and
$\hat{\pi}$, respectively, we only require  $l_1$ consistency properties for
$\hat{\gamma}$; in fact, $\hat{\pi}$ only serves to satisfy our
decoupling argument in the proof and does not need to be consistent. We now briefly explain this point.  The constraints imposed in the estimator (\ref{eq: dantzig pi}) guarantee that for the test statistic $S_n$, the term $n^{-1/2}(Z-\tilde{W}\hat{\gamma})^\top (Y-Zg_{0}-\tilde{W}\hat{\pi})$ can be approximated by a product of two independent terms, i.e. $n^{-1/2}(Z-\tilde{W}\gamma_{*})^{\top}(Y-Zg_{0}-\tilde{W}\hat{\pi})$. Then, the only requirement needed is to guarantee that the second term in the last expression does not grow to fast (it does not need to converge to zero) which in turn is provided by the constraints of the optimization problem \eqref{eq: dantzig pi}. 


\subsection{Theoretical properties}
 
 In deriving the theoretical properties of our test, we impose the following assumption.
\begin{assumption}
\label{assu: regularity condition} Let   (i) $x_{i}$  and $\varepsilon_{i} $   have Gaussian distributions, $\mathcal{N}(0,\Sigma_X)$ and $\mathcal{N}(0,\sigma_{\varepsilon}^2)$, respectively. Moreover, 
 assume (ii) that there exist constants $c_{1},c_{2}>0$, such that $\sigma_{\varepsilon}$
and the  eigenvalues of $\Sigma_{X}$ lie in $[c_{1},c_{2}]$. Lastly, let
(iii) there exist   constants $c_{3},c_4\in(0,1)$, such that $\sigma_{u}^{2}/\sigma_{z}^{2}\geq c_{3}$
and $\sigma_{\varepsilon}^{2}/\sigma_{y}^{2}\geq c_{4}$.
\end{assumption} 

Assumption \ref{assu: regularity condition}(i) is only imposed to simplify
the proof. 
In high-dimensional literature Gaussian design is a very common assumption (e.g. \cite{javanmard2014hypothesis,cai2015confidence}). The same results, at the expense of  more complicated proofs,  can be derived for sub-Gaussian designs and errors.   Assumption \ref{assu: regularity condition}(ii) is very standard in high-dimensional literature (see \cite {
bickel2009simultaneous,
ning2014general,
van2014asymptotically} for more details). 

%
 
Assumption \ref{assu: regularity condition}(iii) imposes nondegeneracy of signal-to-noise
ratios for models (\ref{eq: original model}) and (\ref{eq: graphical model}). Since
$\|a\|_{2}$ is allowed to tend to infinity, $\sigma_{z}^{2}=a^\top\Sigma_{X}a/(a^\top a)^{2}$
can tend to zero and thus it is too restrictive to assume that $\sigma_{u}$
is bounded away from zero. 
Hence, Assumption \ref{assu: regularity condition}(iii) is a relaxation, as it only rules
out the uninteresting case of asymptotic noiselessness.
\begin{rem}
The sparsity condition is imposed on neither $a$ nor $\beta_{*}$.
Theorem \ref{thm: unknown variance size X} below says that we can
conduct valid inference of a non-sparse linear combination of a non-sparse
high-dimensional parameter without knowing $\Sigma_{X}$. To the best
of our knowledge, this is the first result that allows for such generality. \end{rem}
\begin{thm}
\label{thm: unknown variance size X}Let Assumption \ref{assu: regularity condition}
hold. Consider estimators \eqref{eq: dantzig pi} and \eqref{eq: dantzig gamma} with suitable choice of
tuning parameters: $\eta,\lambda\asymp\sqrt{n^{-1}\log p}$, $\rho_{0}^{-1}=O(1)$
and $\rho_{0}\leq[1+c_{2}c_{1}^{-1}(c_{3}^{-1}-1)]^{-1/2}$. Suppose
that $\|\gamma_{*}\|_{0}=o(\sqrt{n}/\log p)$. Then, under $H_{0}$
in (\ref{eq: null hypo}), optimization
problems \eqref{eq: dantzig pi} and \eqref{eq: dantzig gamma} are feasible with probability
approaching one and 
\[
\lim_{n,p\to \infty}P\left(|S_{n}|>\Phi^{-1}(1-\alpha/2)\right)=\alpha\qquad\forall\alpha\in(0,1),
\]
where $S_{n}$ is defined in Equation \eqref{eq:sn}. 
\end{thm}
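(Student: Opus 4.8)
The plan is to prove that, under $H_0$, $S_n \rightarrow^{d} \mathcal{N}(0,1)$; the size statement then follows from Slutsky's lemma and continuity of $\Phi$. Write $v:=Y-Zg_0-\tilde W\hat\pi$ and let $u:=Z-\tilde W\gamma_*$ be the noise vector of model \eqref{eq: graphical model}. Under $H_0$ we have $a^{\top}\beta_*=g_0$, hence $Y-Zg_0=\tilde W\pi_*+\varepsilon$, so $v=\varepsilon+\tilde W(\pi_*-\hat\pi)$ and the numerator of $S_n$ factors as
\[
(Z-\tilde W\hat\gamma)^{\top}v \;=\; u^{\top}v \;+\; (\gamma_*-\hat\gamma)^{\top}\tilde W^{\top}v .
\]
The proof has three parts: (a) the programs \eqref{eq: dantzig pi}--\eqref{eq: dantzig gamma} are feasible with probability tending to one; (b) the second term above and the randomness in $\|Z-\tilde W\hat\gamma\|_2$ are asymptotically negligible in $S_n$; (c) an exact conditional-Gaussian computation identifies the law of $u^{\top}v/(\sigma_u\|v\|_2)$. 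Part (a), matching $\rho_0$ to the signal-to-noise constants, is the most delicate bookkeeping.

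For (a): in \eqref{eq: dantzig gamma} the point $\gamma=\gamma_*$ is feasible with high probability, because $\|n^{-1}\tilde W^{\top}u\|_{\infty}=O_P(\sigma_u\sqrt{(\log p)/n})$ (a union bound over $p-1$ Gaussian bilinear forms, using that $u$ is independent of $\tilde W$), while $n^{-1/2}\|Z\|_2\rightarrow^{P}\sigma_z$ and $\sigma_z\asymp\sigma_u$ by Assumption~\ref{assu: regularity condition}(iii); a sufficiently large constant in $\lambda\asymp\sqrt{(\log p)/n}$ closes the gap. In \eqref{eq: dantzig pi} I would test the candidate $(\pi_*,\rho^{\dagger})$ with $\rho^{\dagger}=\sigma_{\varepsilon}/\sqrt{M}$, $M:=E(y_1-z_1g_0)^2$. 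At this point the first constraint reads $\|\tilde W^{\top}\varepsilon\|_{\infty}\le\eta\rho^{\dagger}\sqrt{n}\|Y-Zg_0\|_2$, whose left side is $O_P(\sqrt{n\log p})$ and whose right side is exactly of that order with a constant proportional to $\rho^{\dagger}\sqrt{M}=\sigma_{\varepsilon}\ge c_1$, so a large enough constant in $\eta$ suffices; the second constraint, divided by $n$, becomes $\sigma_{\varepsilon}^2(1+o_P(1))\ge\tfrac12\rho_0\rho^{\dagger}M(1+o_P(1))$, which holds since $\rho_0\le\rho^{\dagger}$; and the box constraint $\rho_0\le\rho^{\dagger}\le 1$ holds because $\rho^{\dagger}\le 1$ always and $(\rho^{\dagger})^{-2}=1+\mathrm{Var}(\tilde w_1^{\top}\pi_*)/\sigma_{\varepsilon}^2$, with $\mathrm{Var}(\tilde w_1^{\top}\pi_*)=\mathrm{Var}(w_1^{\top}\beta_*)\le c_2\|\beta_*\|_2^2$ bounded by a constant multiple of $\sigma_{\varepsilon}^2$ (the SNR bound in Assumption~\ref{assu: regularity condition}(ii)--(iii) caps $\|\beta_*\|_2$), so the stated threshold $\rho_0\le[1+c_2c_1^{-1}(c_3^{-1}-1)]^{-1/2}$ is precisely calibrated to yield $\rho_0\le\rho^{\dagger}$.

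For (b): on the feasible event $\hat\gamma$ is a Dantzig selector in the regression $Z=\tilde W\gamma_*+u$, whose Gaussian design $\tilde W$ has population Gram matrix $U_a^{\top}\Sigma_XU_a$ with eigenvalues in $[c_1,c_2]$; by \citet{rudelson2013reconstruction} the restricted eigenvalue condition holds with high probability when $n\gtrsim\|\gamma_*\|_0\log p$, guaranteed by $\|\gamma_*\|_0=o(\sqrt{n}/\log p)$, so the analysis of \citet{bickel2009simultaneous} gives $\|\hat\gamma-\gamma_*\|_1=O_P(\|\gamma_*\|_0\,\sigma_u\sqrt{(\log p)/n})$. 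The first constraint of \eqref{eq: dantzig pi} gives $\|\tilde W^{\top}v\|_{\infty}\le\eta\hat\rho\sqrt{n}\|Y-Zg_0\|_2\le\eta\sqrt{n}\|Y-Zg_0\|_2$ (since $\hat\rho\le 1$), and the second constraint with Cauchy--Schwarz gives $\|v\|_2\ge\tfrac12\rho_0^2\|Y-Zg_0\|_2$, while $n^{-1/2}\|Y-Zg_0\|_2\rightarrow^{P}\sqrt{M}$; hence the term $\sqrt{n}(\gamma_*-\hat\gamma)^{\top}\tilde W^{\top}v/(\|Z-\tilde W\hat\gamma\|_2\|v\|_2)$ is $O_P(\|\gamma_*\|_0\log p/\sqrt{n})=o_P(1)$. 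Similarly, expanding $\|Z-\tilde W\hat\gamma\|_2^2=\|u\|_2^2+2u^{\top}\tilde W(\gamma_*-\hat\gamma)+\|\tilde W(\gamma_*-\hat\gamma)\|_2^2$ and using the $\ell_1$ rate with $\|n^{-1}\tilde W^{\top}u\|_{\infty}=O_P(\sigma_u\sqrt{(\log p)/n})$ and $\lambda_{\max}(n^{-1}\tilde W^{\top}\tilde W)=O_P(1)$ gives $n^{-1}\|Z-\tilde W\hat\gamma\|_2^2/\sigma_u^2\rightarrow^{P}1$. Therefore
\[
S_n \;=\; \frac{u^{\top}v}{\sigma_u\|v\|_2}\cdot\frac{\sigma_u\sqrt{n}}{\|Z-\tilde W\hat\gamma\|_2}\;+\;o_P(1)\;=\;\frac{u^{\top}v}{\sigma_u\|v\|_2}\,(1+o_P(1))\;+\;o_P(1).
\]

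For (c): under $H_0$ the vector $Y-Zg_0=\tilde W\pi_*+\varepsilon$ does not involve $u$, so $\hat\pi$ — hence $v$ — is a measurable function of $(\tilde W,\varepsilon)$ alone. Under the Gaussian design of Assumption~\ref{assu: regularity condition}(i), $u=(u_1,\dots,u_n)^{\top}$ with $u_i=z_i-\tilde w_i^{\top}\gamma_*$ the population regression residual is $\mathcal{N}(0,\sigma_u^2 I_n)$ and independent of $(\tilde W,\varepsilon)$; consequently, conditionally on $(\tilde W,\varepsilon)$ and on the event $\{v\neq 0\}$ (which has probability tending to one by the lower bound on $\|v\|_2$), $u^{\top}v/(\sigma_u\|v\|_2)\sim\mathcal{N}(0,1)$ exactly, and hence unconditionally as well. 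Combined with the last display and Slutsky's lemma this yields $S_n\rightarrow^{d}\mathcal{N}(0,1)$, so $P(|S_n|>\Phi^{-1}(1-\alpha/2))\rightarrow\alpha$ for every $\alpha\in(0,1)$. The conceptual heart of the proof is this final step: self-normalizing by the (inconsistent) residual $v$ turns a plug-in statistic — in which $\hat\pi$ need not even converge — into an exactly pivotal ratio.
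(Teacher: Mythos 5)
Your proposal is correct and follows essentially the same route as the paper: the same split of the numerator into $u^{\top}v$ plus a cross term controlled by the Dantzig constraint and the $\ell_1$-rate for $\hat\gamma$ (under the restricted eigenvalue condition of \citet{rudelson2013reconstruction}), the same feasibility verification of $(\pi_*,\sigma_\varepsilon/\sigma_v)$ and $\gamma_*$ that calibrates $\rho_0$, and the same key observation that $u$ is independent of $(\tilde W,\varepsilon)$ so that the self-normalized ratio is exactly standard normal conditionally on them. The only cosmetic difference is that you obtain the lower bound $\|v\|_2\ge\rho_0^2\|Y-Zg_0\|_2/2$ directly from the second constraint via Cauchy--Schwarz, whereas the paper derives the same bound through a small quadratic-optimization lemma (Lemma~\ref{lem: nondegeneracy}).
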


Theorem \ref{thm: unknown variance size X} establishes  that the proposed test is asymptotically exact regardless of how sparse the model  parameter or the loading vector are. In that sense, the result is unique in the existing literature as it covers cases of $\beta$ sparse and $a$ sparse (SS), $\beta$ sparse and $a $ dense (SD) , $\beta$ dense and $a $ sparse (DS) and especially $\beta$ dense and $a$ dense (DD). The (SS) case appears in a number of existing works (see \cite{belloni2014inference,van2014asymptotically,javanmard2014hypothesis,ning2014general}), case (SD) appears in \cite{cai2015confidence}.   Whenever (SS) case holds, our result above matches the above mentioned work see Theorem \ref{thm: power unknown SigmaX}. In the special setting of (SD) our result generalizes the one of \cite{cai2015confidence} as Theorem \ref{thm: unknown variance size X} does not impose any restriction on the size of the loading vector $a$. The last two cases of (DS) and (DD) present an extremely challenging cases in which inference based on estimation (much like Wald or Rao or Likelihood principles) fails due to the inherit limit of detection -- work of \cite{2016arXiv160303474C} provides details of impossibility of estimation in such settings. However, despite these challenges our method is able to provide asymptotically valid inference as we have developed inference based on a specifically designed moment condition (and not a parameter estimation alone).

The result in Theorem \ref{thm: unknown variance size X} is based on the assumption that  \(\hat{\pi}_* \) is a possibly inconsistent estimator of the parameter vector $\pi_*$, i.e. the full model is  dense with all non-zero entries. In the following, we will show that  if  the model is a sparse model, the proposed test \eqref{eq:sn} maintains strong power properties. To facilitate the mathematical derivations, we consider the  local alternatives of the form
\begin{equation}
H_{1,n}:\ a^\top\beta_{*}=g_{0}+n^{-1/2}(a^\top\Omega_Xa)^{1/2}\sigma_{\varepsilon}d,\label{eq: local alternative}
\end{equation}
where $d\in\mathbb{R}$ is a fixed constant.   The following result shows that the proposed test achieves certain optimality in detecting alternatives $H_{1,n}$.

%
%

\begin{thm}
\label{thm: power unknown SigmaX}Consider  $z_{i}$ and $w_{i}$ defined in (\ref{eq: z and w with unknown SigmaX}).
Let Assumption \ref{assu: regularity condition} hold and consider
the choice of tuning parameters, as in Theorem \ref{thm: unknown variance size X}.
Suppose that $\|\gamma_{*}\|_{0}\vee\|\beta_{*}\|_{0}\vee\|a\|_{0}=o(\sqrt{n}/\log p)$.
Then, under $H_{1,n}$ in (\ref{eq: local alternative}), optimization
problems \eqref{eq: dantzig pi} and \eqref{eq: dantzig gamma} are feasible with probability
approaching one and 
\[
\lim_{n,p \to \infty}P\left(|S_{n}|>\Phi^{-1}(1-\alpha/2)\right)=\Psi_{\alpha}(d)\qquad\forall\alpha\in(0,1),
\]
where $\Psi_{\alpha}(d):=\Phi\left(-\Phi^{-1}(1-\alpha/2)+d\right)+\Phi\left(-\Phi^{-1}(1-\alpha/2)-d\right)$.
\end{thm}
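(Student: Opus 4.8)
The plan is to reuse the architecture of the proof of Theorem~\ref{thm: unknown variance size X} while keeping track of the drift that the local alternative injects into the sample analogue of the moment condition \eqref{eq: moment condition}. Write $h_n:=n^{-1/2}(a^\top\Omega_Xa)^{1/2}\sigma_\varepsilon d$, so that $a^\top\beta_*=g_0+h_n$ under $H_{1,n}$. Substituting $Z=\tilde W\gamma_*+u$ from \eqref{eq: graphical model} into the stabilized model \eqref{eq:consolidated},
\[
Y-Zg_0=\tilde W(\pi_*+h_n\gamma_*)+(\varepsilon+h_nu),\qquad Z-\tilde W\hat\gamma=u+\tilde W(\gamma_*-\hat\gamma),
\]
and hence $Y-Zg_0-\tilde W\hat\pi=\tilde W(\pi_*+h_n\gamma_*-\hat\pi)+(\varepsilon+h_nu)$. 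Because $h_n\to 0$, the ``effective noise'' $\varepsilon+h_nu$ has variance tending to $\sigma_\varepsilon^2$ and the ``effective coefficient'' $\pi_*+h_n\gamma_*$ is a vanishing perturbation of $\pi_*$; the only object surviving at the $n^{-1/2}$ scale is the correlation $E[u_i(\varepsilon_i+h_nz_i)]=h_n\sigma_u^2\neq 0$ induced by the alternative.

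\textbf{Numerator.} I would expand $(Z-\tilde W\hat\gamma)^\top(Y-Zg_0-\tilde W\hat\pi)$ into $u^\top\varepsilon+h_n\|u\|_2^2$, the cross term $u^\top\tilde W(\pi_*+h_n\gamma_*-\hat\pi)$, and $(\gamma_*-\hat\gamma)^\top\tilde W^\top(Y-Zg_0-\tilde W\hat\pi)$. The last term is bounded by $\|\hat\gamma-\gamma_*\|_1\,\|\tilde W^\top(Y-Zg_0-\tilde W\hat\pi)\|_\infty$; invoking the first constraint of \eqref{eq: dantzig pi} together with the Dantzig $\ell_1$-rate $\|\hat\gamma-\gamma_*\|_1=O_P(\|\gamma_*\|_0\sqrt{n^{-1}\log p}\,\sigma_z)$ (valid since $\tilde W$ obeys the restricted-eigenvalue property), it is $o_P(\sqrt n\,\sigma_u\sigma_\varepsilon)$ precisely because $\|\gamma_*\|_0\log p=o(\sqrt n)$. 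For the cross term, $u\perp\tilde W$ gives $\|\tilde W^\top u\|_\infty=O_P(\sqrt{n\log p}\,\sigma_u)$, and $\|\pi_*+h_n\gamma_*-\hat\pi\|_1\le\|\hat\pi-\pi_*\|_1+|h_n|\|\gamma_*\|_1$ is negligible once $\hat\pi$ is $\ell_1$-consistent (next step), so this term is again $o_P(\sqrt n\,\sigma_u\sigma_\varepsilon)$. Finally the law of large numbers gives $n^{-1}\|u\|_2^2\rightarrow_P\sigma_u^2$ and, using $u_i\perp\varepsilon_i$ and Assumption~\ref{assu: regularity condition}, the Lyapunov CLT gives $n^{-1/2}u^\top\varepsilon\rightarrow^{d}\mathcal N(0,\sigma_u^2\sigma_\varepsilon^2)$. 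Hence
\[
n^{-1/2}(Z-\tilde W\hat\gamma)^\top(Y-Zg_0-\tilde W\hat\pi)=\sqrt n\,h_n\sigma_u^2+n^{-1/2}u^\top\varepsilon+o_P(\sigma_u\sigma_\varepsilon).
\]

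\textbf{Denominator and the role of the added sparsity hypotheses.} To pin down the normalizer I need $n^{-1}\|Z-\tilde W\hat\gamma\|_2^2\rightarrow_P\sigma_u^2$ (immediate from $\ell_1$-consistency of $\hat\gamma$ and the restricted-eigenvalue property) and $n^{-1}\|Y-Zg_0-\tilde W\hat\pi\|_2^2\rightarrow_P\sigma_\varepsilon^2$, which requires $\hat\pi$ to be $\ell_1$-consistent --- this is where $\|\beta_*\|_0\vee\|a\|_0=o(\sqrt n/\log p)$ enters. Taking $U_a$ block-diagonal, equal to the identity off $\mathrm{supp}(a)$ and to an arbitrary orthonormal $(|\mathrm{supp}(a)|-1)\times|\mathrm{supp}(a)|$ block on $\mathrm{supp}(a)$, yields $\pi_*=U_a^\top\beta_*$ with $\|\pi_*\|_0\le\|\beta_*\|_0+\|a\|_0=o(\sqrt n/\log p)$, while $\tilde W$ still obeys the restricted-eigenvalue property. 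One then checks (the local shift $h_n\to 0$ perturbs the relevant population quantities by only $o(1)$, and the condition $\rho_0\le[1+c_2c_1^{-1}(c_3^{-1}-1)]^{-1/2}$ of Theorem~\ref{thm: unknown variance size X} controls the second constraint) that $(\pi_*,\rho^*)$ with $\rho^*=\sigma_\varepsilon/\sqrt{E(y_1-z_1g_0)^2}\in[\rho_0,1]$ is feasible in \eqref{eq: dantzig pi} with probability tending to one, whence the Dantzig argument gives $\|\hat\pi-\pi_*\|_1=o_P(1)$ and $n^{-1}\|\tilde W(\hat\pi-\pi_*)\|_2^2=o_P(1)$. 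Combined with $h_n\to 0$ this makes the $\tilde W$-parts of $Y-Zg_0-\tilde W\hat\pi$ vanish, so $n^{-1}\|Y-Zg_0-\tilde W\hat\pi\|_2^2\rightarrow_P\sigma_\varepsilon^2$ and $n^{-1}\|Z-\tilde W\hat\gamma\|_2\,\|Y-Zg_0-\tilde W\hat\pi\|_2\rightarrow_P\sigma_u\sigma_\varepsilon$.

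\textbf{Assembly and the drift identity.} Combining the two steps with Slutsky,
\[
S_n=\frac{n^{-1/2}(Z-\tilde W\hat\gamma)^\top(Y-Zg_0-\tilde W\hat\pi)}{n^{-1}\|Z-\tilde W\hat\gamma\|_2\,\|Y-Zg_0-\tilde W\hat\pi\|_2}\ \rightarrow^{d}\ \frac{\sqrt n\,h_n\sigma_u}{\sigma_\varepsilon}+\mathcal N(0,1).
\]
It remains to evaluate the drift. Writing $(z_i,\tilde w_i)^\top=M^\top x_i$ with $M:=[\,a/(a^\top a)\ \ U_a\,]$, which is invertible since $\mathrm{col}(U_a)=\{v:a^\top v=0\}$ and $a^\top(a/(a^\top a))=1$, the quantity $\sigma_u^{-2}=1/\mathrm{Var}(z_i\mid\tilde w_i)$ equals the $(1,1)$ entry of the precision matrix $M^{-1}\Omega_X(M^{-1})^\top$ of $(z_i,\tilde w_i)$, i.e.\ $v^\top\Omega_Xv$ with $v:=(M^\top)^{-1}e_1$; solving $M^\top v=e_1$ forces $U_a^\top v=0$ and $a^\top v=a^\top a$, hence $v=a$ and $\sigma_u^2=(a^\top\Omega_Xa)^{-1}$. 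Therefore $\sqrt n\,h_n\sigma_u/\sigma_\varepsilon=(a^\top\Omega_Xa)^{1/2}\sigma_u\,d=d$, so $S_n\rightarrow^{d}\mathcal N(d,1)$ and
\[
P\bigl(|S_n|>\Phi^{-1}(1-\alpha/2)\bigr)\to\Phi\bigl(-\Phi^{-1}(1-\alpha/2)+d\bigr)+\Phi\bigl(-\Phi^{-1}(1-\alpha/2)-d\bigr)=\Psi_\alpha(d).
\]
The main obstacle is not the drift computation but the uniform-in-$a$ bookkeeping showing that every remainder above is $o_P(\sqrt n\,\sigma_u\sigma_\varepsilon)$, together with the feasibility and $\ell_1$-consistency of the self-tuned estimator $(\hat\pi,\hat\rho)$ under the local alternative --- all of which is the machinery already assembled for Theorem~\ref{thm: unknown variance size X}, now re-used with the extra sparsity of $\pi_*$ bought by the hypotheses on $\|\beta_*\|_0$ and $\|a\|_0$.
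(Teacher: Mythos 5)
Your proposal is correct and follows essentially the same route as the paper's proof: the same four-term decomposition of the numerator (the two Dantzig-constraint remainders, the $n^{-1/2}u^\top\varepsilon$ CLT term, and the $h_n\|u\|_2^2/\sqrt{n}$ drift term), the same use of the block-diagonal $U_a$ to transfer sparsity from $(a,\beta_*)$ to $\pi_*$ (the paper's Lemma~\ref{lem: pi sparsity}), and feasibility under the local alternative at the pseudo-true value (the paper centers at $\pi_*+\gamma_*h_n$ rather than $\pi_*$, which makes the residual exactly independent of $\tilde W$, but the difference is $|h_n|\|\gamma_*\|_1=o(1)$ and immaterial). The only genuine, and welcome, divergence is your derivation of $\sigma_u^2=(a^\top\Omega_Xa)^{-1}$ via the $(1,1)$ entry of the precision matrix of $(z_i,\tilde w_i)$ and the identity $M^\top v=e_1\Rightarrow v=a$, which is cleaner than the paper's Lemma~\ref{lem: sigma u characterization}, where the same identity is obtained by solving algebraically for the constant $k_1$ in $\Sigma_X(M_a\tilde\gamma_*-a\|a\|_2^{-2})=k_1a$.
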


To better understand the optimality of the result above, consider 
 the estimator (possibly
infeasible) discussed at the end of Section \ref{sec:Methodology}:
let $\bar{\beta}$ denote an estimator satisfying $\sqrt{n}(\bar{\beta}-\beta_{*})\sim \mathcal N(0,\Omega_X\sigma_{\varepsilon}^{2})$.
Notice that, for the low-dimensional components of $\beta_{*}$, $\bar{\beta}$
achieves semi-parametric efficiency; see \citet{robinson1988root}. Therefore, for sparse $a$, $a^\top\bar{\beta}$ is a semi-parametrically
efficient estimator for $a^\top\beta_{*}$. Notice that $\sqrt{n}(a^\top\bar{\beta}-a^\top\beta_{*})\sim \mathcal N(0,a^\top\Omega_Xa\sigma_{\varepsilon}^{2})$. Based on such efficient estimator, 
one might consider an ``oracle'' test: for a test of nominal size \(\alpha\),   reject the null \(H_0:\ a^\top \beta _*=g_0\) if and only if
\begin{displaymath}
\frac{\sqrt{n}|a^{\top}\bar{\beta}-g_{0}|}{(a^{\top}\Omega_{X}a)^{1/2}\sigma_{\varepsilon}}>\Phi^{-1}(1-\alpha/2).
\end{displaymath} 
It is easy to verify
that the power of this ``oracle'' test of nominal size $\alpha$ against the local alternatives \(H_{1,n}\) (\ref{eq: local alternative}) is  asymptotically equal to  $\Psi_{\alpha}(d)$. Therefore, Theorem \ref{thm: power unknown SigmaX} says that our test asymptotically achieves the same power as the ``oracle'' test  under sparse \(a\) and \(\beta_*\), i.e. it is as efficient as the ``oracle'' test. 

Moreover, in light of recent inferential results in the high-dimensional sparse models, the rate of Theorem 4 can also be shown to be optimal. As existing results apply only to the case of $a=e_j$ for a coordinate vector $e_j$, $1 \leq j \leq p$, we discuss the relations of our work in this specific settings. We note that the tests based on VBRD and BCH are asymptotically equivalent to this ``oracle'' test and hence have the same asymptotic local power; the power of  Wald or Score inferential methods (see Theorem 2.2 in \cite{van2014asymptotically}, Theorem 1 in \cite{belloni2014inference} or Theorem 4.7 in \cite{ning2014general}) and that of \cite{javanmard2014hypothesis} (see Theorem 2.3 therein) is asymptotically
equal to and converges to	$\Psi_\alpha(d)$, respectively. This in turn, implies that the proposed method is  semi-
parametrically efficient and asymptotically minimax.
 For vectors $a$ that have more than one non-zero coordinate, we can only compare our work with that of \cite{cai2015confidence}, where we observe that the result of Theorems 1 and 3 therein   matches those of   Theorem 4  covering the case of extremely sparse beta and potentially dense vectors $a$. However, observe that the confidence intervals developed therein require specific knowledge of the sparsity of the parameter $\beta_*$, $\| \beta_*\|_0$, a quantity rarely known in practice. 
Unlike their method, our method can be directly implemented without the knowledge of the sparsity of $\beta_*$ and yet achieves the same optimality guarantees.

\section{Applications to non-sparse  high-dimensional models} \label{sec:examples}

This section is devoted to three concrete applications of the general  methodological  results
developed in Sections \ref{sec:Methodology} and  \ref{sec: extension} -- hence, showcasing the  wide impact of the developed theories.

\subsection{Testing pairwise homogeneity}

The previous section deals with situations in which  each coordinate of the parameters is allowed to vary  independently and  any subset of the coordinates can be non-zero simultaneously. This condition will not be satisfied if we are interested in testing pairwise homogeneity in the linear model (group effect), that is, if we are interested in testing the hypothesis   
 \[
 H_0: \beta_{*,k} = \beta_{*,j}
 \]
 for $k,j \in \{1,2,\dots, p\}$   while also allowing $\beta$ to be a dense and high-dimensional vector. To the best of our knowledge, such tests were not designed in the existing literature.
 The proposed methodology   easily extends to this case, where the loading vector $a$ takes the form 
 $a = (0,\dots, 0,1, 0,\dots,0, -1 , 0 , \dots, 0)^\top $, with the location of the $1$'s at the $j$-th and $k$-th coordinate, respectively.
  Without loss of generality, we assume that \(k=1\) and \(j=2\). Then it is not hard to show that  \( z_i=(x_{i,1}-x_{i,2})/2\) and \(\tilde{w}_i=((x_{i,1}+x_{i,2})/\sqrt{2},x_{i,3},\cdots,x_{i,p})^\top \in \mathbb{R}^{p-1}\). 
 The proposed methodology for this problem simplifies, then, to finding $\hat \pi$ and $\hat \rho$ that satisfy 
 \begin{equation}\label{eq: dantzig pi1}
\begin{array}{cclcl}
(\hat{\pi},\hat{\rho}) &= & \underset{(\pi,\rho)\in\mathbb{R}^{p-1}\times \mathbb{R}_+}{\arg\min}\|\pi\|_{1}& &\\
    & s.t  \qquad & \tilde{W} =  [(X_1+X_2)/\sqrt{2},X_3,\cdots,X_p] &&\\
       &  &\|\tilde{W}^\top (Y-\tilde{W}\pi)\|_\infty \leq \eta \rho \sqrt{n} \|Y\|_2 && \\

    & & Y^\top \left(Y-\tilde{W}\pi \right) \geq  \rho_{0}\ \rho \  \|Y\|_2^{2}/2 && \\
    & &   \rho  \in [\rho_0,1] && \\
\end{array}
\end{equation}
and $\hat \gamma$ that satisfies
  \begin{equation}\label{eq: dantzig gamma}
\begin{array}{cclcl}
\hat{\gamma}&= &\underset{\gamma \in \mathbb{R}^{p-1}}{\arg\min} \ \|\gamma\|_{1}&& \\ 
  & s.t  \qquad  &\tilde{W} =  [(X_1+X_2)/\sqrt{2},X_3,\cdots,X_p] & & \\
    & & \|\tilde{W}^\top (X_1-X_2-2\tilde{W}\gamma)\|_\infty \leq \lambda \sqrt{n} \|X_1-X_2\|_2 & &,
\end{array}
\end{equation} 
  for $\lambda,\eta \asymp \sqrt{n^{-1}\log p}$ .

Consequently, we reject $H_{0}: \beta_{*,1}\ = \beta_{*,2}$  if $|S_{n}|>\Phi^{-1}(1-\alpha/2)$,
where 
\begin{equation}\label{eq:sna}
S_n = \sqrt{n}\frac{ \left(X_{1}- X_2-2\tilde{W}\hat{\gamma} \right)^\top  \left(Y-\tilde{W}\hat{\pi} \right) }{ \left\|X_{1} - X_2-2\tilde{W}\hat{\gamma}\right\|_{2}   \left\|Y -\tilde{W}\hat{\pi}\right\|_{2}}.
\end{equation}

\subsection{Inference of conditional mean}\label{subsec: conditional mean}

Our methodology can also be used for the inference regarding the average value of the response i.e. regarding the conditional mean of the regression model. Suppose that the object of interest is \(E(y_i \mid \zeta_i)\), where \(y_i \in \mathbb{R}\) and \(\zeta_i \in \mathbb{R}^{k}\). For a given value \(d\in \mathbb R ^{k}\) and $g_0 \in \mathbb{R}$, the focus is to test \begin{displaymath}
H_0:\ E(y_i \mid \zeta_i=d)=g_0.
\end{displaymath}

Assuming that for some given dictionary of transformations of $\{\phi_j (\cdot)\}_{j=1}^p$, the conditional mean function admits the representation: $E(y_i \mid \zeta_i)=\sum_{j=1}^p \beta_{*,j} \phi_j (\zeta_i)$ for some  vector \(\beta_*=(\beta_{*,1},\cdots,\beta_{*,k})^\top \in \mathbb R ^p\). Then the conditional mean model can be written as 
\begin{equation} \label{eq: representation}
y_i=x_i ^\top \beta_* +\varepsilon_i,
\end{equation}
where $x_i=(\phi_1(\zeta_i),\cdots,\phi_p(\zeta_i))^\top \in \mathbb R ^p$  and $E(\varepsilon_i\mid x_i)=0$. In turn, the confidence intervals for the regression mean can be designed simply by inverting the test statistics 
\[
S_n = \sqrt{n}\frac{ (Z-\tilde{W}\hat{\gamma})^\top (Y-Zg_{0}-\tilde{W}\hat{\pi})}{ \|Z-\tilde{W}\hat{\gamma}\|_{2}   \|Y-Zg_{0}-\tilde{W}\hat{\pi}\|_{2}}
\]
designed for the 
 inference problem  \begin{displaymath}
H_0:\ a^\top \beta_*=g_0,
\end{displaymath} 
where $a=(\phi_1 (d),\cdots,\phi_p(d))^\top \in \mathbb R ^p$ and  $U_{a}U_{a}^\top=\left(I_{p}-aa^\top/\sum_{j=1}^p \phi_j^2(d)\right)$ with
$$z_{i}=  \frac{\sum_{j=1}^p \phi_j(d) \phi_j(\zeta_i)}{ \sum_{j=1}^p \phi_j^2(d)} , \qquad \mbox{and} \qquad  \tilde w_{ij}= \sum_{l=1}^p \{ U_a \}_{lj}  \phi_l(\zeta_i),  \qquad 1\leq j\leq p-1. $$
Notice that we do not assume that the vector \(\beta_*\) is sparse and we allow for $p \gg n$. Therefore, representing the conditional mean function in terms of a large number of transformations of \(\zeta_i\), while simultaneously allowing all to be non-zero, does not lose much in generality.  Additionally, it is worth mentioning that inference for such models has not been addressed in the existing literature: most of the existing work is strictly focused around sparse or sparse additive models. With the general model considered here, one can consider tests regarding treatment effects (when viewed as the conditional mean) and allow for fully dense models and loading vectors, i.e.  the treatment being a dense combination of many variables. Existing work, such as \cite{belloni2014inference}, only allows the treatment to be a single variable.  



\subsection{Decomposition of conditional mean}
 
In practice, the researcher might be interested in how much a certain group of features contribute to the conditional mean. Let $\mathcal{G}\subseteq\{1,...,p\}$. The goal is to conduct inference on linear functionals of $\{\beta_{*,j}\}_{j\in\mathcal{G}}$, i.e., $\sum_{j\in\mathcal{G}} c_{j}\beta_{*,j}$ for some known $\{c_{j}\}_{j\in\mathcal{G}}$.

For example, consider  the notations from Section \ref{subsec: conditional mean}. Let $\zeta_{i}=(\zeta_{i,1},...,\zeta_{i,k})^{\top }$ and suppose that one is interested in the impact of $\zeta_{i,1}$ on the conditional mean for $\zeta=d$. This is equivalent to quantifying $\sum_{j\in \mathcal{G}_1} \phi_{j}(d)\beta_{*,j} $, where the set contains all the indexes $j$ such that the first entry of $\zeta_{i}$ has non-zero effect on $\phi_{j}(\zeta_{i})$, i.e., $\mathcal{G}_{1}=\{j:  \phi_{j}(\zeta) {\rm \ is\ not\ constant\ in}\ \zeta_{1}  \}$. If $\phi_{j}(\cdot)$'s are transformations of individual entries of $\{\zeta_{i,j}\}_{j=1}^{k}$, then $\mathcal{G}_{1}$ corresponds to transformations of $\zeta_{i,1}$. For another example, suppose that all the $p$ features are genes. The domain scientist (biologist, doctor, geneticist, etc) might be interested in  how much a group of genes contributes to the expected value of the response variable.

Without loss of generality, we assume that $\mathcal{G}=\{1,...,H\}$ and $c=(c_{1},...,c_{H})^\top \in \mathbb{R}^H$. Let $U_{c}\in\mathbb{R}^{H\times(H-1)}$ satisfy $I_{H}-cc^{\top}/(c^\top c)=U_{c}U_{c}^{\top}$ and $U_{c}^\top U_{c}=I_{H-1}$. Then the synthesized features can be constructed by 
$
z_{i}=\|c\|_{2}^{-2}\sum_{j=1}^{H}c_{j}x_{i,j}\quad {\rm and}\quad \tilde{w}_{i}= \left( \sum_{l=1}^{H} (U_{c})_{l,1}x_{i,l},
\cdots,
\sum_{l=1}^{H} (U_{c})_{l,H-1}x_{i,l},
x_{i,H},
\cdots,
x_{i,p}
\right)^\top\in\mathbb{R}^{p-1},
$
where $(U_{c})_{l,j}$ denotes the $(l,j)$ entry of the matrix $U_{c}$.
For example, whenever $H=3$ and $c_{j}=1$ for all $j=1,2,3$, then 
 \[
U_{c}=\begin{pmatrix}-\sqrt{3/2} &-1/\sqrt{2}  \\
0 & \sqrt{2} \\
\sqrt{3/2} &-1/\sqrt{2}
\end{pmatrix}
\]
and the procedure for testing $\beta_{*,1}+\beta_{*,2}+\beta_{*,3}=g_0$ would be as follows. We define 

 \begin{equation}\label{eq: dantzig pi2}
 \begin{array}{cclcl}
 (\hat{\pi},\hat{\rho}) &= & \underset{(\pi,\rho)\in\mathbb{R}^{p-1}\times \mathbb{R}_+}{\arg\min}\|\pi\|_{1}& &\\
 & s.t  \qquad & \tilde{W} = \left[ \sqrt{\frac{3}{2}}(X_3-X_1),\  -\frac{1}{\sqrt{2}}(X_1 -2 X_2 +X_3),\ X_4,\cdots,X_p \right] &&\\
 &  &\|\tilde{W}^\top [Y-(X_1 +X_2 + X_3)g_{0}/3-\tilde{W}\pi]\|_\infty \leq \eta \rho \sqrt{n} \|Y-(X_1 +X_2 + X_3)g_{0}/3\|_2 && \\
 
 & & (Y-(X_1 +X_2 + X_3)g_{0}/3)^\top \left(Y-(X_1 +X_2 + X_3)g_{0}/3-\tilde{W}\pi \right) &&\\
 & & \qquad \geq  \rho_{0}\ \rho \  \|Y-(X_1 +X_2 + X_3)g_{0}/3\|_2^{2}/2 && \\
 & &   \rho  \in [\rho_0,1] && \\
 \end{array}
 \end{equation}
 and $\hat \gamma$ that satisfies
 \begin{equation}\label{eq: dantzig gamma2}
 \begin{array}{cclcl}
 \hat{\gamma}&= &\underset{\gamma \in \mathbb{R}^{p-1}}{\arg\min} \ \|\gamma\|_{1}&& \\ 
 & s.t  \qquad  & \tilde{W}= \left[ \sqrt{\frac{3}{2}}(X_3-X_1),\  -\frac{1}{\sqrt{2}}(X_1 -2 X_2 +X_3),\ X_4,\cdots,X_p \right] & & \\
 & & \|\tilde{W}^\top \left((X_1 +X_2 + X_3)g_{0}-3\tilde{W}\gamma\right)\|_\infty \leq \lambda \sqrt{n} g_{0}\|X_1+X_2+X_3\|_2 & &,
 \end{array}
 \end{equation} 
 for $\lambda,\eta \asymp \sqrt{n^{-1}\log p}$ .

For a test of nominal size $\alpha$, we reject $H_{0}: \beta_{*,1}+\beta_{*,2}+\beta_{*,3}=g_0$  if $|S_{n}|>\Phi^{-1}(1-\alpha/2)$,
where 
\begin{equation}\label{eq:sna1}
S_n = \sqrt{n}\frac{ \left((X_1 +X_2 + X_3)g_{0}-3\tilde{W}\hat{\gamma} \right)^\top  \left(Y-(X_1 +X_2 + X_3)g_{0}/3-\tilde{W}\hat{\pi} \right) }{ \left\|(X_1 +X_2 + X_3)g_{0}-3\tilde{W}\hat{\gamma}\right\|_{2}   \left\|Y-(X_1 +X_2 + X_3)g_{0}/3-\tilde{W}\hat{\pi} \right\|_{2}}.
\end{equation}

\section{\label{sec:Numerical-results}Numerical results}
 
 In this section we study the finite sample performance of the proposed methodology for both known $\Sigma_X$ and unknown $\Sigma_X$. We explicitly consider dense loadings $a$ and dense parameter vectors $\beta_*$ as well as more common sparse settings.

\subsection{Monte Carlo experiments} \label{sec:MC}

Consider the model (\ref{eq: original model}) with  the model error following standard normal distribution. In all the simulations, we set $n=100$ and $p=500$ and the nominal size of all the tests is 5\%. The rejection
probabilities are based on 500 repetitions.  The null hypothesis we test is $H_{0}:\ a^\top \beta_{*}=g_{0}$, where $g_{0}=a^\top \beta_{*}+h$ and $h$ is allowed to vary in order to capture both Type I and Type II error rates.

\subsubsection{Setup}
We consider  in total four regimes on the structure of the model and the null hypothesis -- sparse and dense regimes for $\beta_{*}$ as well as sparse and dense  regimes for  the loading vector $a$.
\begin{itemize}
	\item[(i)] In the  {\it Sparse parameter regime} we consider the parameter   structure with $\beta_{*}=(0.8,0.8,0,...,0)^{\top}$.
	\item[(ii)]  In the  {\it Dense parameter regime} we consider   the parameter   structure with  $\beta_{*}=\frac{3}{\sqrt{p}}(1,1,...,1)^{\top}$.
	\item[(iii)]  In the {\it Sparse loading regime} we consider the loading vector  $a=(0,1,0,...,0)^{\top}$.
		\item[(iv)] In the  {\it Dense loading regime} we consider the loading vector   $a=(1,1,...,1)^{\top}$.
\end{itemize}
 Observe that (iii) is an extreme sparse-loading case. We consider this special case in order to compare existing inferential methods, like VBRD and BCH. However, our method can be implement for various number of non-zero elements, whereas the existing one cannot. 

We present results for  three different designs settings including sparse, dense, Gaussian and non-Gaussian settings.

\begin{itemize}
	\item[] {\it Example 1.} Here we consider the standard Toeplitz design where the rows of $X$ are drawn as an i.i.d random draws from a multivariate Gaussian distribution $\mathcal{N}(0,\Sigma_X) $, with covariance matrix $(\Sigma_{X})_{i,j}=0.4^{|i-j|}$.
	\item[] {\it Example 2.}  In this case we consider a non-sparse design matrix with equal correlations among the features. Namely,  rows of $X$ are  i.i.d draws  from  the multivariate Gaussian distribution $\mathcal{N}(0,\Sigma_X) $, where  $(\Sigma_{X})_{i,j}$ is 1 for $i=j$ and is $0.4$ for $i\neq j$. Observe that this case is particularly hard for most inferential methods as all features are interdependent and $\Omega_{X}$ is not sparse.

	\item[] {\it Example 3.} In this example we consider a highly non-Gaussian design that also has strong dependence structure. We consider the setting of \cite{fan2010sure}. We repeat the details here for the convenience of the reader. Let $x$ be a typical row of $X$. For $j \in \{1,...,15\}$, $x_{j}=(\xi+c \xi_{j})/\sqrt{1+c^2}$, where $\xi$ and $\{\xi_{j}\}_{j=1}^{15}$ are i.i.d $\mathcal{N}(0,1)$ and $c$ is chosen such that ${\rm corr}(x_{1},x_{2})=0.4$. For $j \in \{16,...,[p/3]\}$, $x_{j}$ is i.i.d $\mathcal{N}(0,1)$. For $j\in \{[p/3]+1,...,[2p/3]\}$, $x_{j}$ is i.i.d from a  double exponential distributions with location parameter zero and scale parameter one. For $j\in \{[2p/3]+1,...,p\}$, $x_{j}$ is i.i.d from the half-half mixture of $\mathcal{N}(-1,1)$ and $\mathcal{N}(1,0.5)$. Observe that in this case $2/3$ of the features follow non-Gaussian distributions. Thus,  in this  case it  is extremely difficult to even obtain consistent estimation of the model parameters. 
\end{itemize}

\subsubsection{Implementation details}
We compare the proposed tests with VBRD and BCH;  methods proposed in \cite{cai2015confidence} contain constants whose values could be very conservative in finite samples. Our tests with known and unknown $\Sigma_{X}$ are implemented as discussed in Sections 
\ref{sec:Methodology} and \ref{sec: extension}, respectively. 

The VBRD method is implemented for both dense and sparse loadings as follows. We first compute  the debiased estimator $\hat{\beta}_{\rm debias}$ and the nodewise Lasso estimator $\hat{\Omega}_{\rm Lasso}$ for the precision matrix $\Sigma_{X}$ as in VBRD. Then  test is to reject $H_{0}$ if and only if $$\sqrt{n}|a^{\top}\hat{\beta}_{\rm debias}-g_{0}|/\sqrt{a^{\top}\hat{\Omega}_{\rm Lasso}\hat{\Sigma}_{X}\hat{\Omega}_{\rm Lasso}^{\top} a \sigma_{\varepsilon}^2}>\Phi^{-1}(1-0.05/2).$$

The BCH method is only implemented for the sparse loadings. We compute the generic post-double-selection estimator for the second entry of $\beta$ as in Equation (2.8) of BCH and compute the standard error as in Theorem 2 therein. Then a usual t-test is conducted. It is not clear how BCH can be extended to handle any loading vector  $a$ different from an extremely sparse case (see (iii) above): first, for any other loading structure it is not defined how to gather selected features of what would be multiple simultaneous equations; second, naively extending the original BCH to the problem of dense $a$ ($\|a\|_0=p$) means running an OLS regression of the response against all the features, which is not feasible for $p>n$. 

\subsubsection{Results}

 We start with the size properties of competing tests. For this purpose, we examine the distributions of the test statistics under the null hypothesis by  comparing empirical distributions of the tests with the theoretical benchmark of standard normal random variable. For simplicity of presentation, we only consider the Toeplitz design.  For the testing problem with sparse $\beta_*$ and sparse $a$, our tests, VBRD and BCH exihibit the validity guaranteed by the theory; in Figure \ref{fig: sparse_a_sparse_beta}, the histograms of the test statistics are close to $\mathcal{N}(0,1)$ with large p-values of the Kolmogorov-Smirnov (KS) tests. For all the other problems, our tests outperform existing methods. As shown in Figure \ref{fig: dense_a_sparse_beta}, the histogram of VBRD test visually is still close to the standard normal distribution but the KS test suggests discernible discrepancies between the two distributions. In Figure \ref{fig: sparse_a_dense_beta}, 
we see that lack of sparsity in $\beta_{*}$ causes serious problems in Type I error for both VBRD and BCH. Inference under dense $\beta_*$ and dense $a$ turns out to be the most challenging problem for existing methods; in Figure \ref{fig: dense_a_dense_beta}, we see quite noticeable difference between the histogram of VBRD test and $\mathcal{N}(0,1)$. In contrast, the distribution of the test statistics of the proposed methods closely match $\mathcal{N}(0,1)$ in all the scenarios, as established in  Theorems \ref{thm: known variance X} and \ref{thm: unknown variance size X}. 
The Type I errors, reported in Table \ref{tab:Size1}, confirm the above findings: existing methods can suffer greatly from lack of sparsity in $\beta_*$ and/or $a$ in terms of validity -- observed Type I error of BCH or VBRD can easily reach 40\%.


\begin{figure}
	\caption{\label{fig: sparse_a_sparse_beta} Distribution of the test statistics  under the null hypothesis $H_0:   \beta_{*,2}=0.8$ (in blue) and the standard normal distribution $\mathcal{N}(0,1)$ (in red) with $n=100$ and $p=500$. In this example we consider sparse $\beta$ and sparse $a$ setting and compare the distribution under the null of our tests (with and without known variance) in the top row and two competing methods VBRD and BCH in the bottom row. We report p-values of the Kolmogorov-Smirnov test statistics in the subtitles.}

	\begin{centering}
		\includegraphics[scale=0.5]{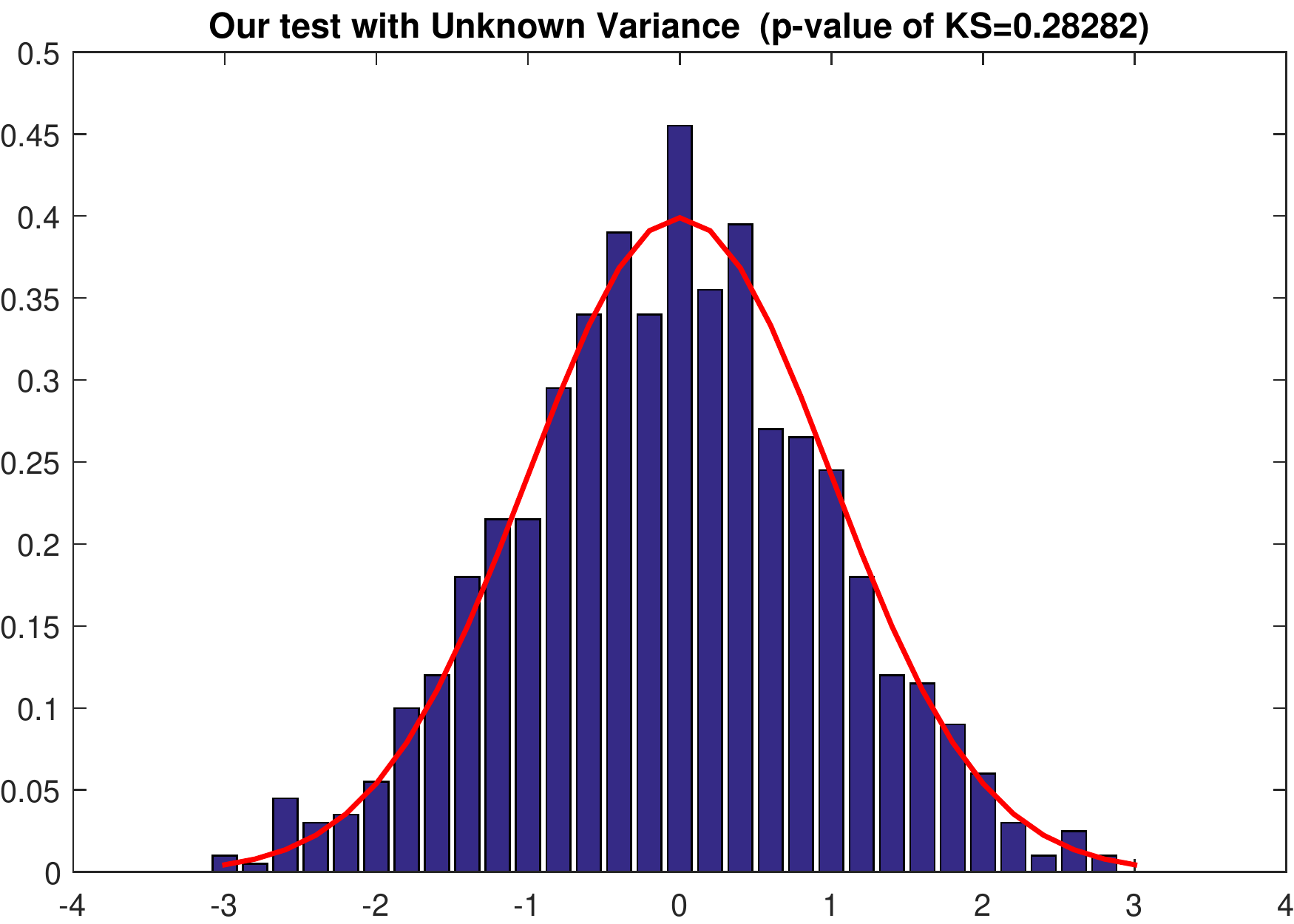}\includegraphics[scale=0.5]{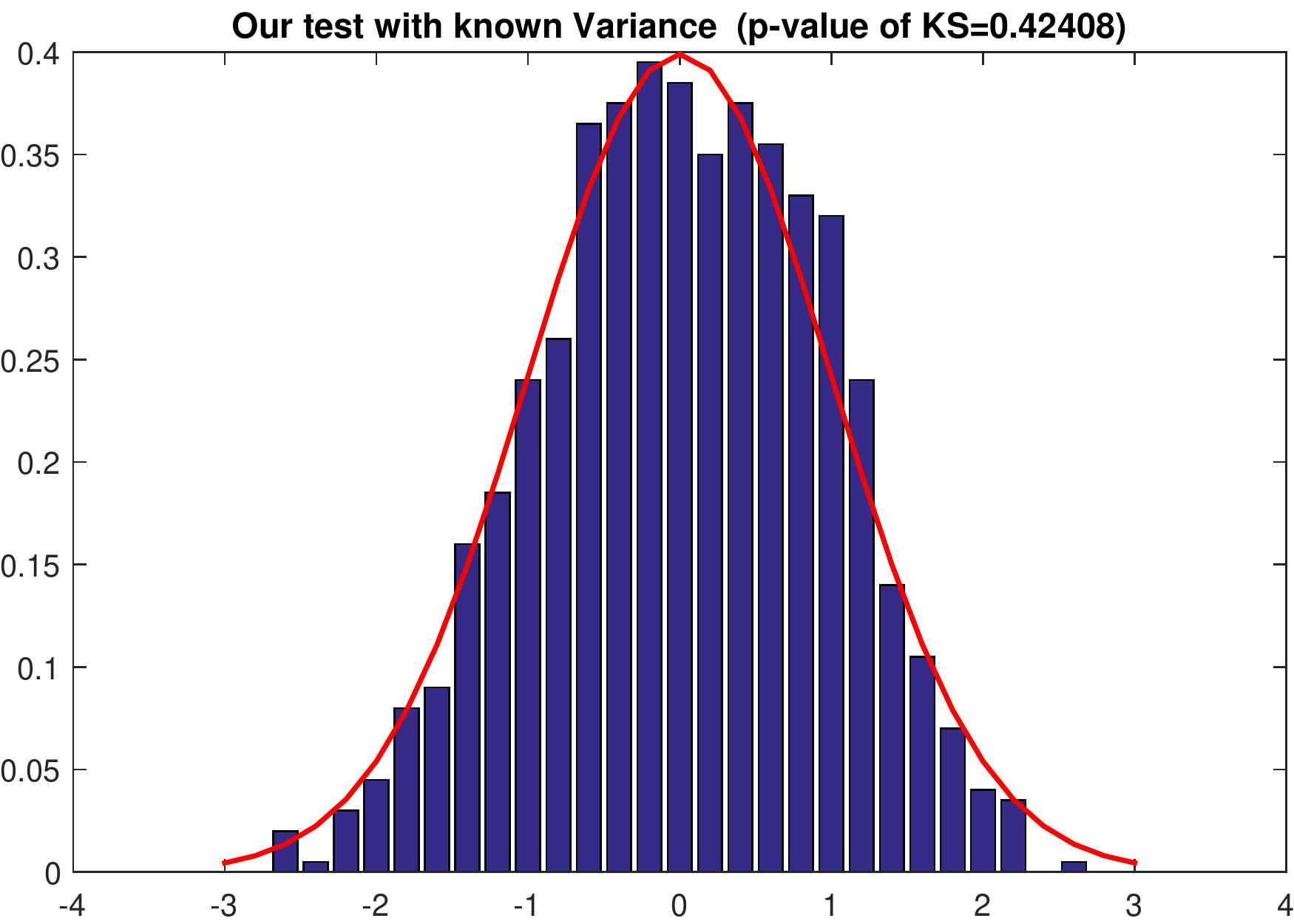}
		\par\end{centering}
 
	\centering{}\includegraphics[scale=0.5]{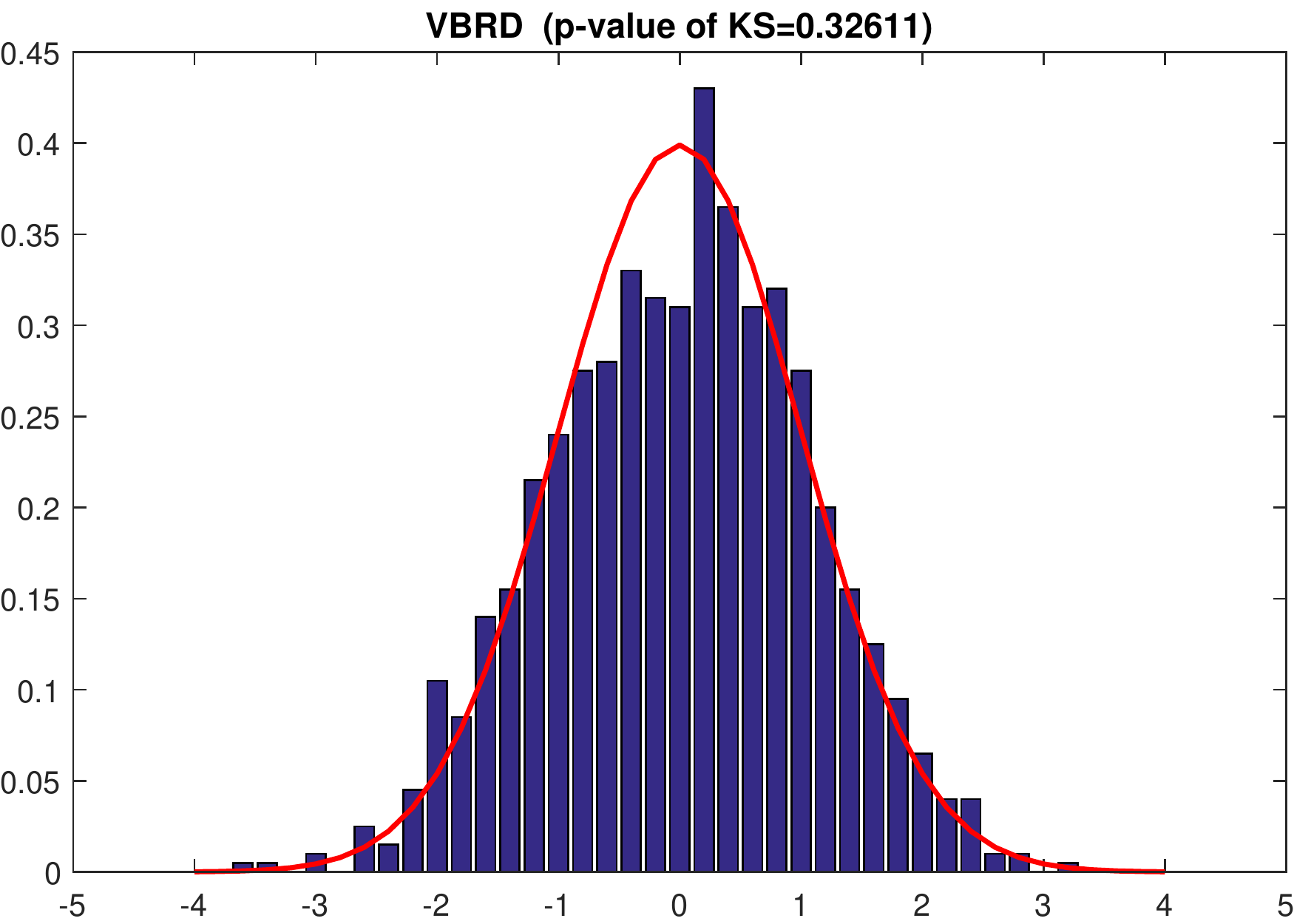}\includegraphics[scale=0.5]{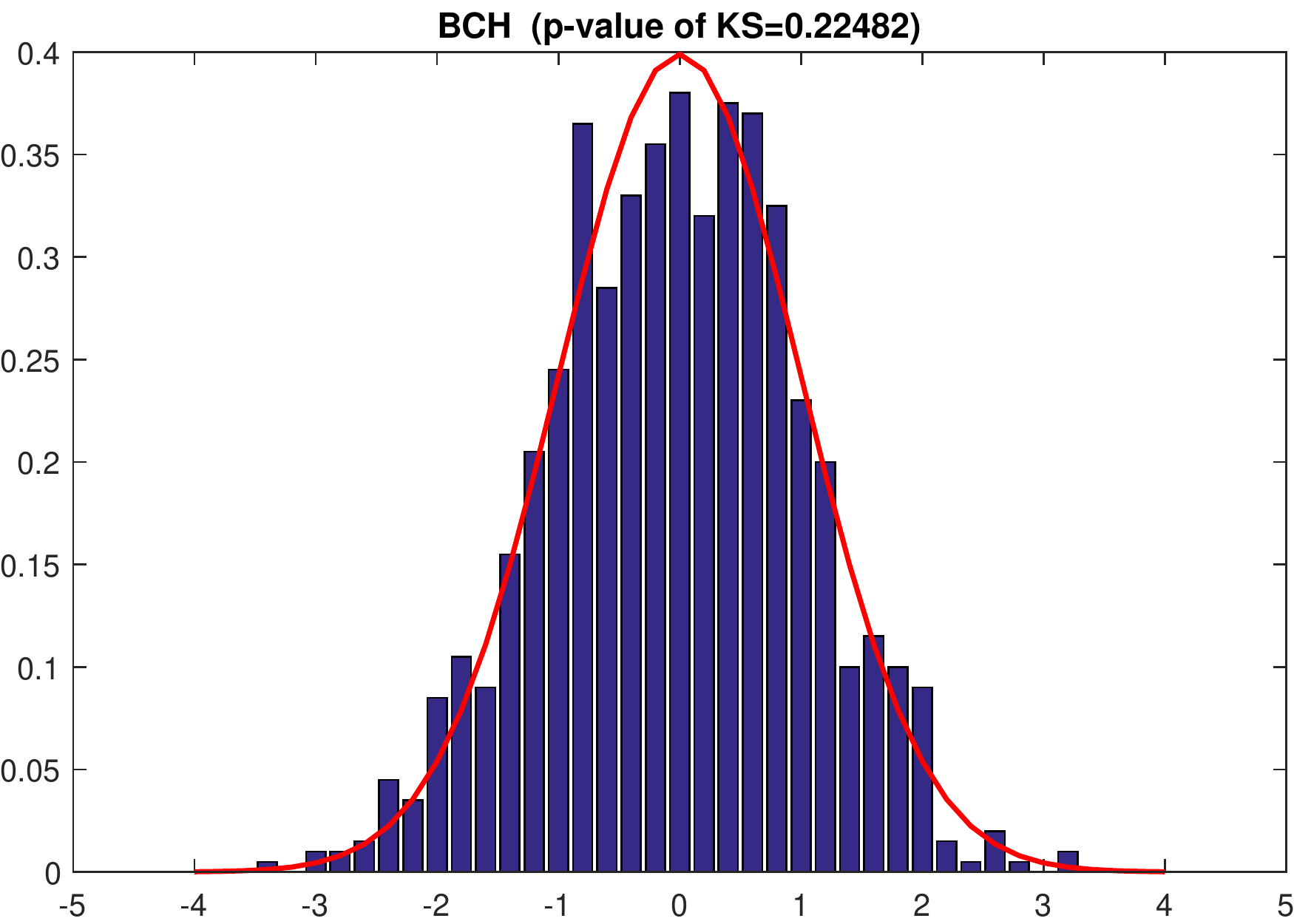}
	\floatfoot{\small   Note that tuning parameters for all the methods are chosen according to their ``oracle'' theoretical values. Error and design are normally distributed with Toeplitz correlation structure with $\rho=0.4$. The histograms are computed based on   $500$ simulation runs.}
\end{figure}

\begin{figure}
	\caption{\label{fig: dense_a_sparse_beta}
	Distribution of the test statistics  under the null hypothesis $H_0: \sum_{j=1}^p a_j \beta_{*,j}=1.6$ (in blue) and the standard normal distribution $\mathcal{N}(0,1)$ (in red) with $n=100$ and $p=500$. In this example we consider sparse $\beta$ and dense $a$ setting and compare the distribution under the null of our tests (with and without known variance) in the top row and two competing methods VBRD and BCH in the bottom row. We report p-values of the Kolmogorov-Smirnov test statistics in the subtitles. }

	\bigskip{}

	\begin{centering}
		\includegraphics[scale=0.5]{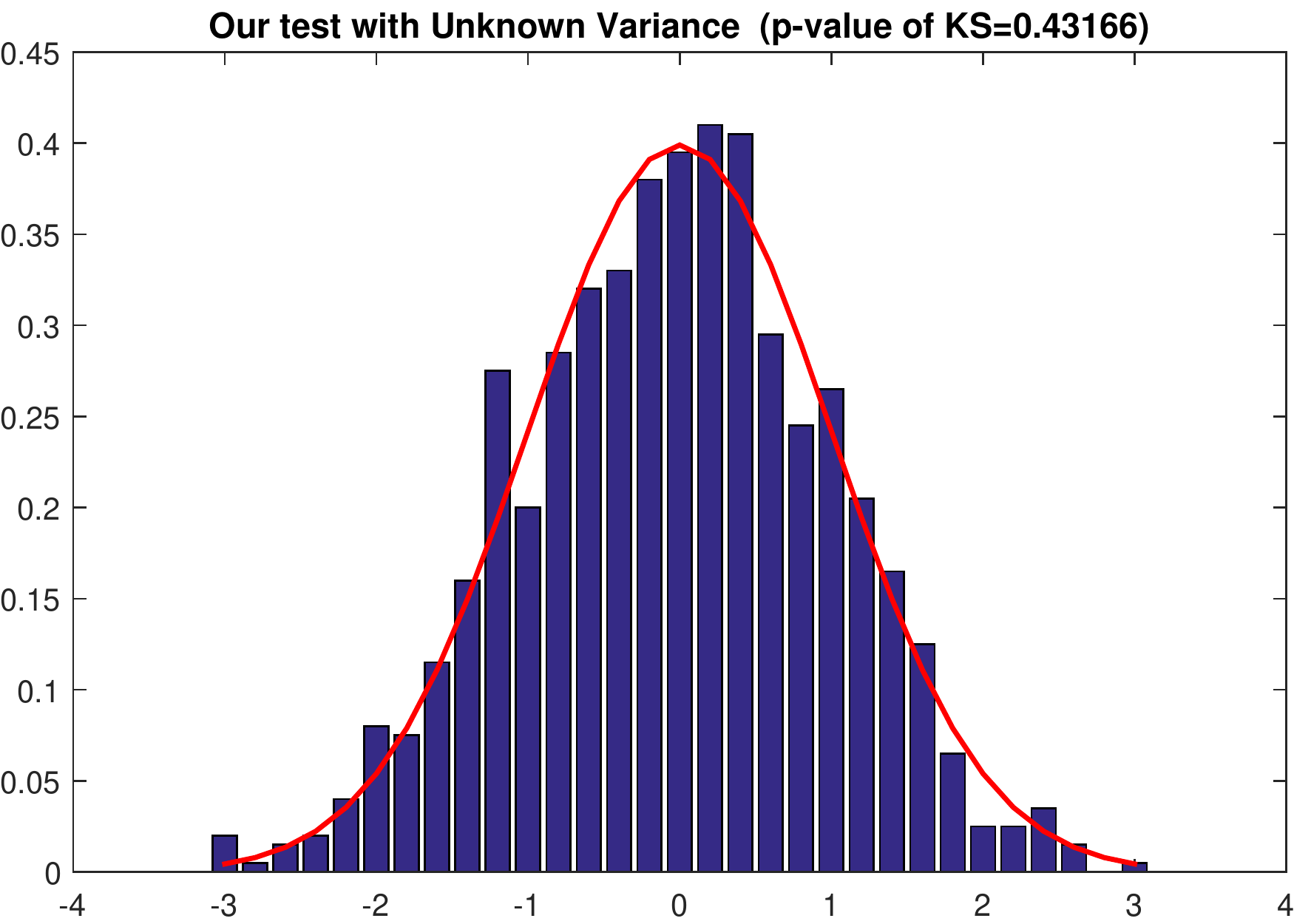}\includegraphics[scale=0.5]{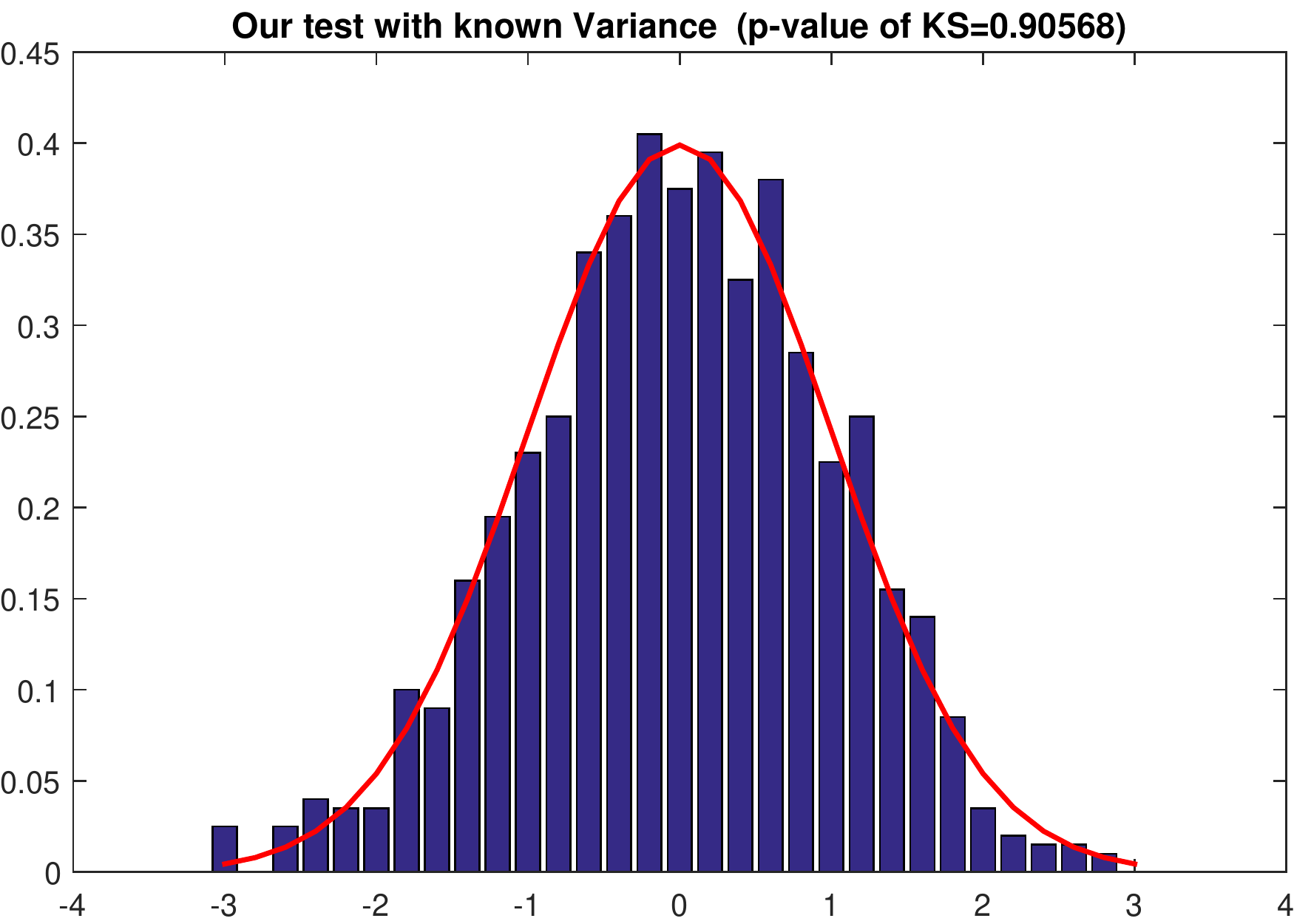}
		\par\end{centering}
	
	\bigskip{}

	\centering{}\includegraphics[scale=0.5]{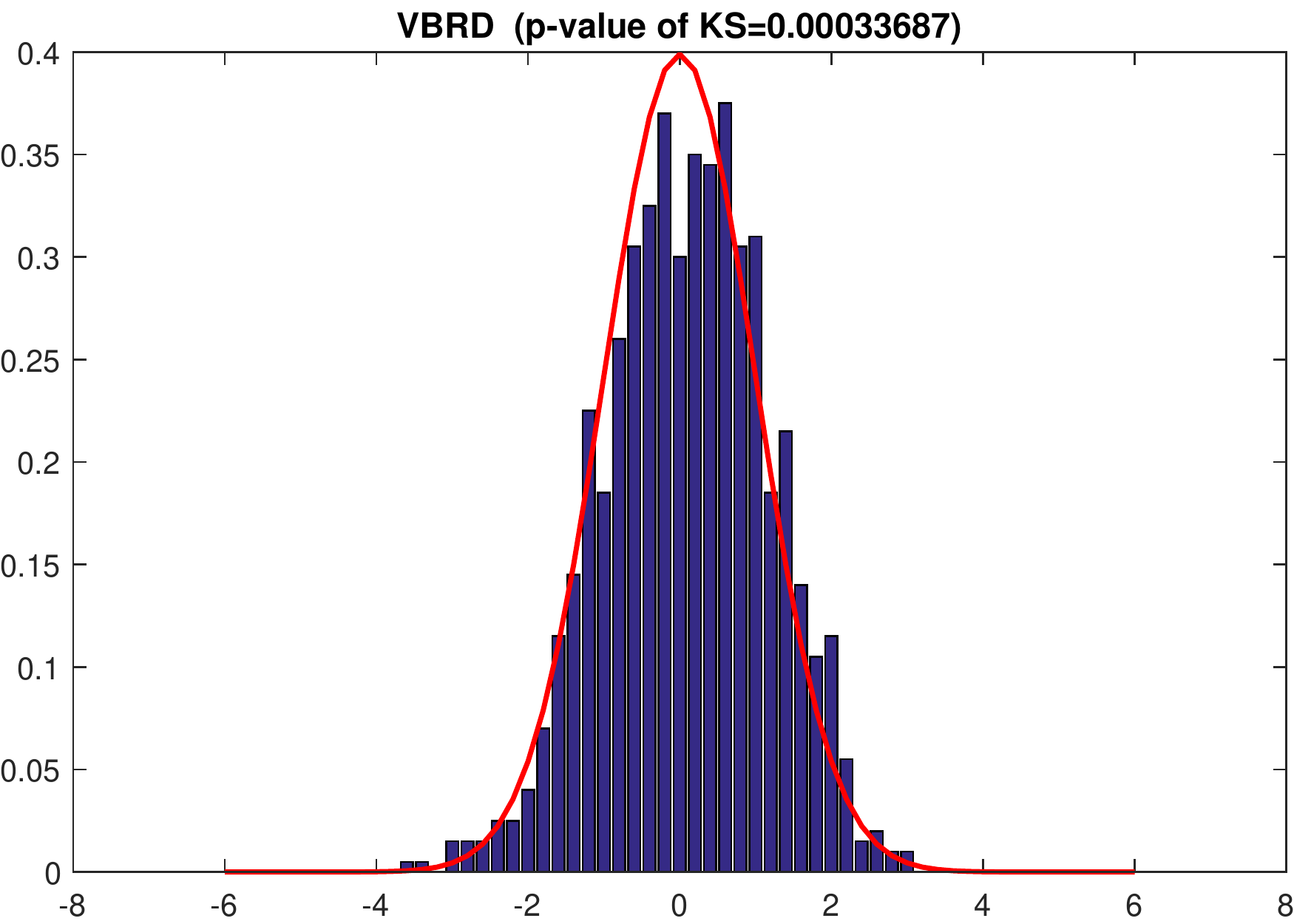}
	\floatfoot{\small   Note that tuning parameters for all the methods are chosen according to their ``oracle'' theoretical values. Error and design are normally distributed with Toeplitz correlation structure with $\rho=0.4$. The histograms are computed based on   $500$ simulation runs.}
\end{figure}

\begin{figure}
	\caption{\label{fig: sparse_a_dense_beta}
	Distribution of the test statistics  under the null hypothesis $H_0:   \beta_{*,2}=3/\sqrt{p}$ (in blue) and the standard normal distribution $\mathcal{N}(0,1)$ (in red) with $n=100$ and $p=500$. In this example we consider dense $\beta$ and sparse $a$ setting and compare the distribution under the null of our tests (with and without known variance) in the top row and two competing methods VBRD and BCH in the bottom row. We report p-values of the Kolmogorov-Smirnov test statistics in the subtitles. }

	\bigskip{}

	\begin{centering}
		\includegraphics[scale=0.5]{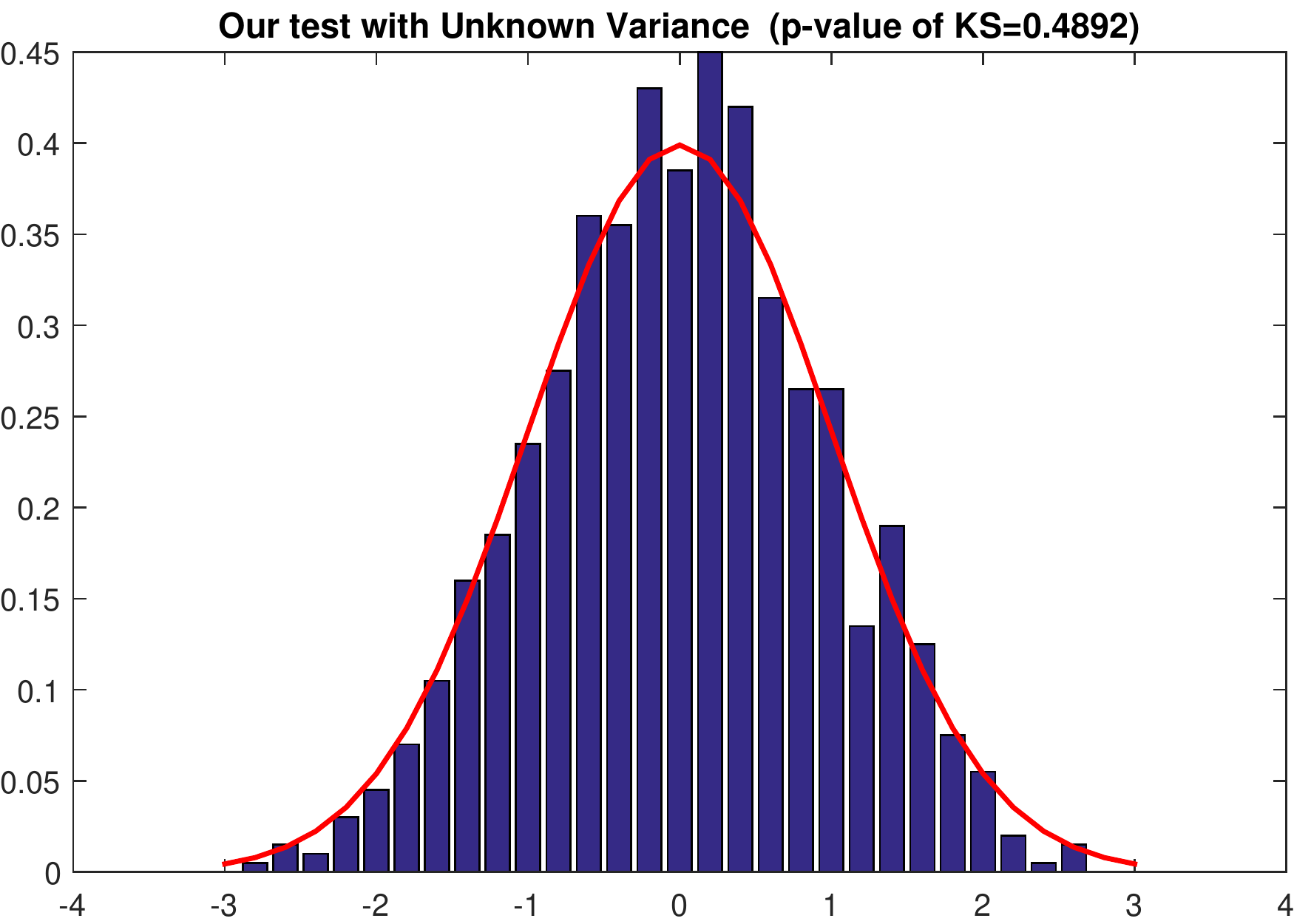}\includegraphics[scale=0.5]{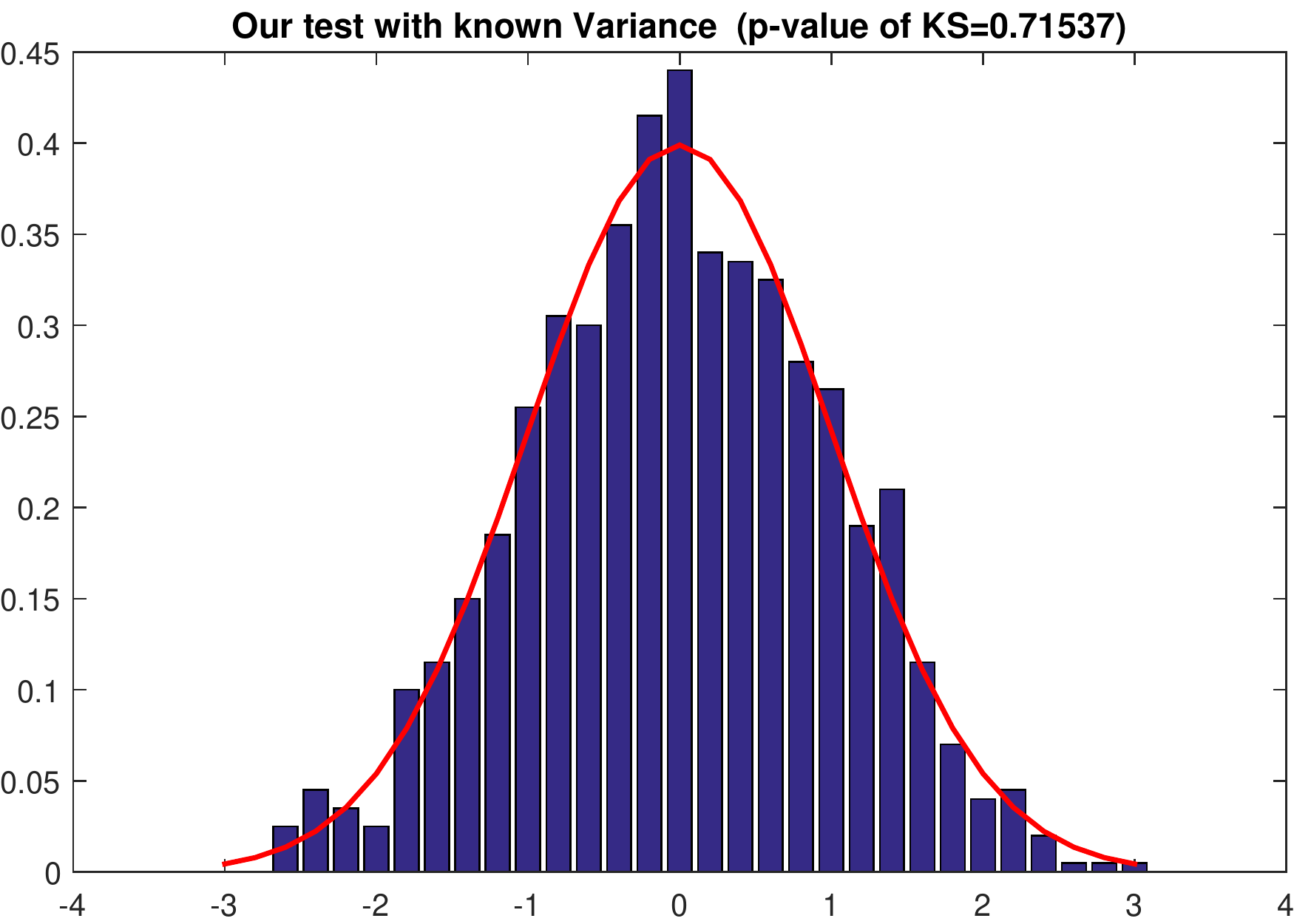}
		\par\end{centering}
	
	\bigskip{}

	\centering{}\includegraphics[scale=0.5]{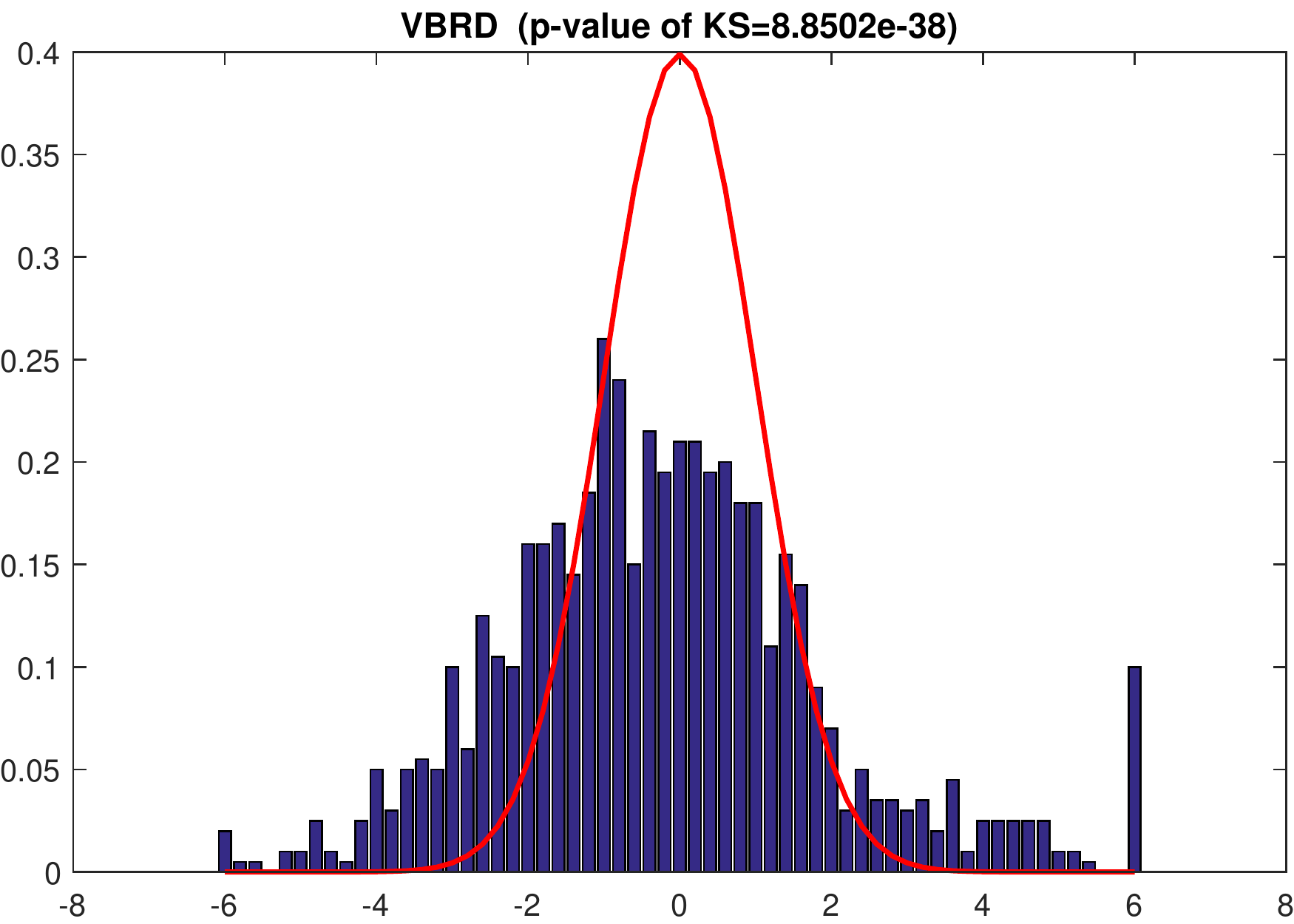}\includegraphics[scale=0.5]{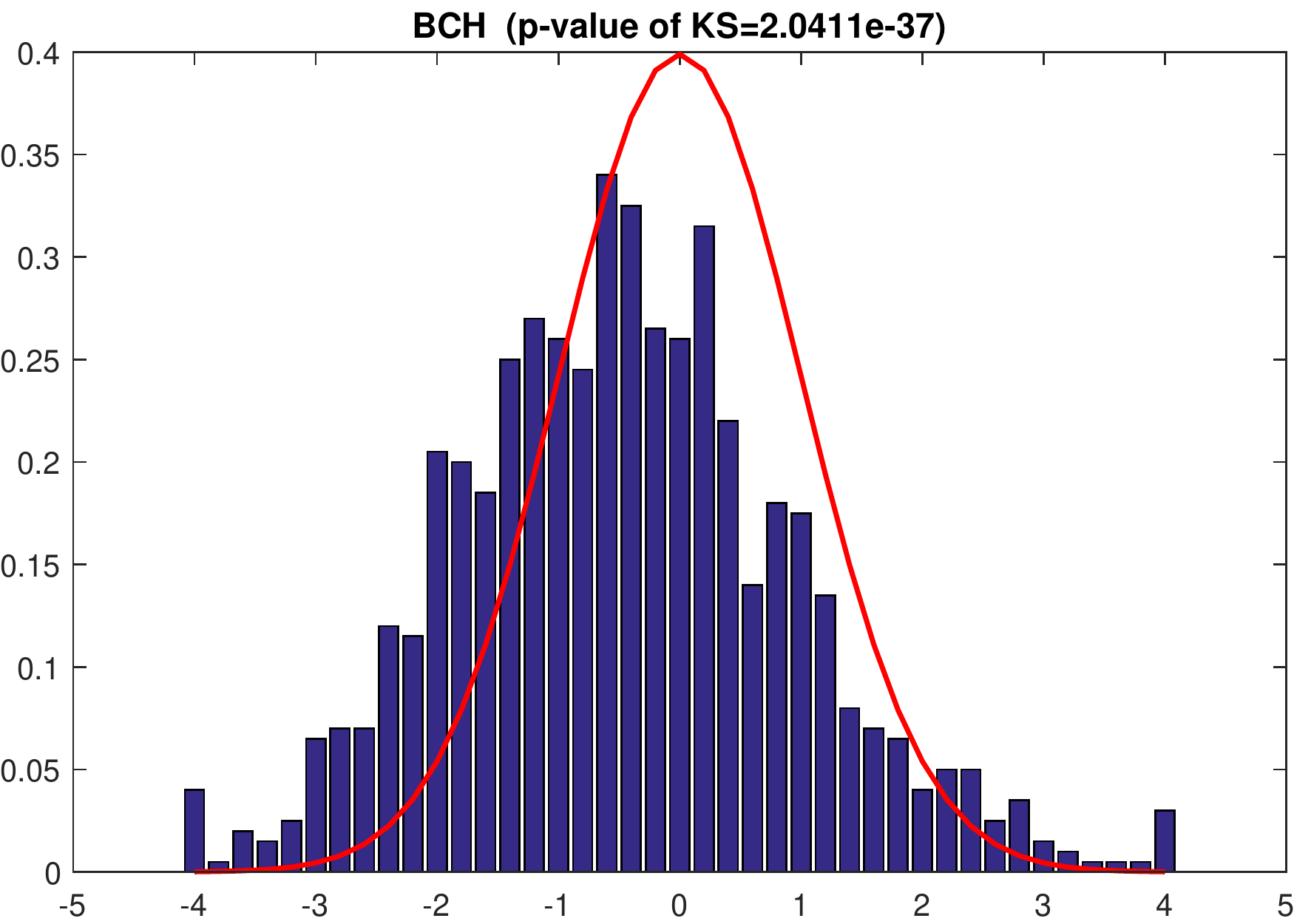}
		\floatfoot{\small   Note that tuning parameters for all the methods are chosen according to their ``oracle'' theoretical values. Error and design are normally distributed with Toeplitz correlation structure with $\rho=0.4$. The histograms are computed based on   $500$ simulation runs.}
\end{figure}

\begin{figure}
	\caption{\label{fig: dense_a_dense_beta}
	Distribution of the test statistics  under the null hypothesis $H_0:   \sum_{j=1}^p \beta_{*,j}=3 \sqrt{p}$ (in blue) and the standard normal distribution $\mathcal{N}(0,1)$ (in red) with $n=100$ and $p=500$. In this example we consider dense $\beta$ and dense $a$ setting and compare the distribution under the null of our tests (with and without known variance) in the top row  and two competing methods VBRD and BCH in the bottom row. We report p-values of the Kolmogorov-Smirnov test statistics in the subtitles.}

	\bigskip{}

	\begin{centering}
		\includegraphics[scale=0.5]{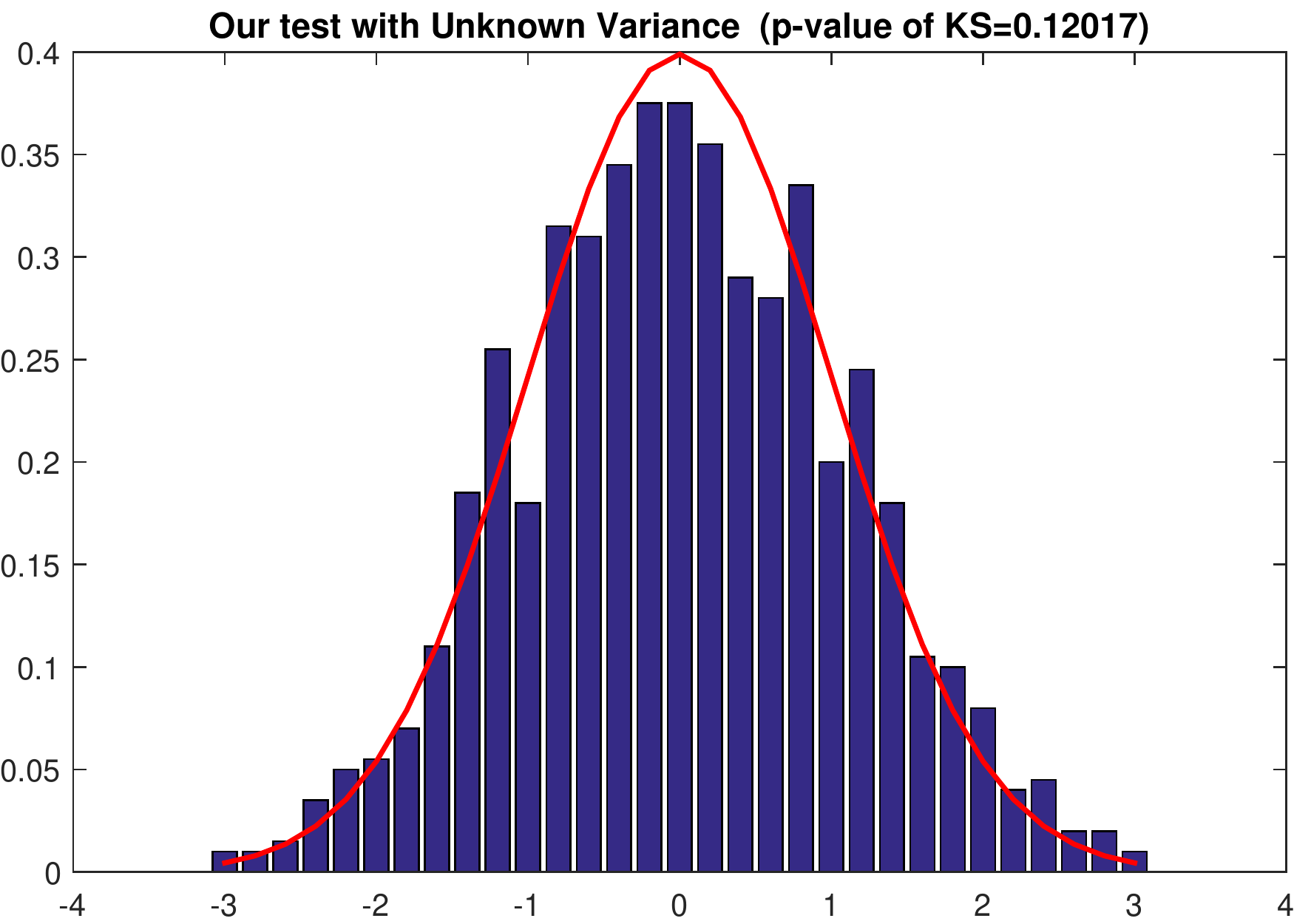}\includegraphics[scale=0.5]{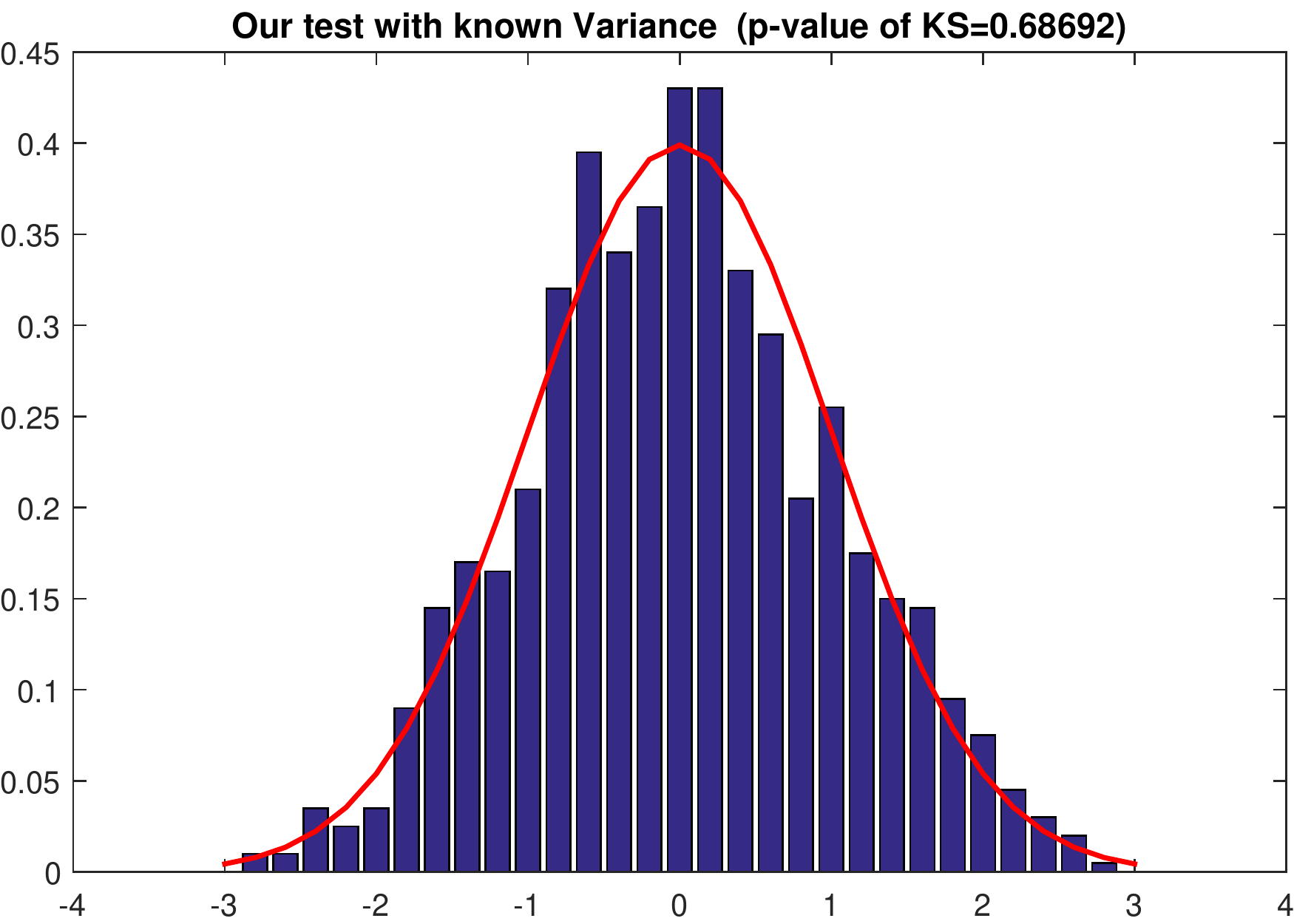}
		\par\end{centering}
	
	\bigskip{}

	\centering{}\includegraphics[scale=0.5]{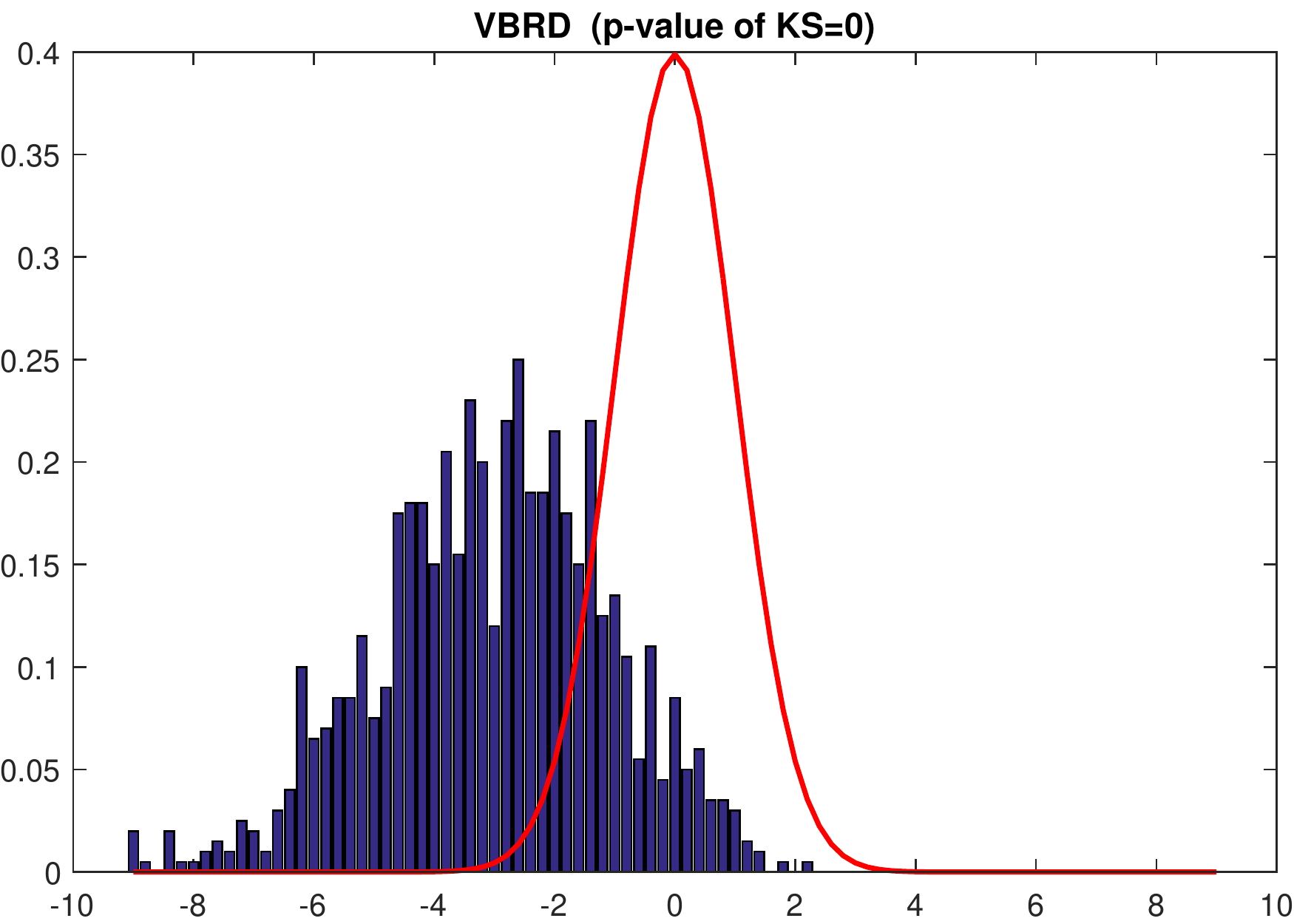}
	\floatfoot{\small   Note that tuning parameters for all the methods are chosen according to their ``oracle'' theoretical values. Error and design are normally distributed with Toeplitz correlation structure with $\rho=0.4$. The histograms are computed based on   $500$ simulation runs.}
\end{figure}

\begin{table}
	\caption{\label{tab:Size1} Mean Size Properties over $500$ repetitions of the 5\% level proposed tests together with  VBRD and BCH. In the table, NA symbol indicates that the method cannot be implemented ``as is''.}
	\centering{}%
	\begin{tabular}{c||cccc}
		&  &  &  & \tabularnewline		
		Hypothesis Setting& Unknown $\Sigma_{X}$ & Known $\Sigma_{X}$ & VBRD & BCH\tabularnewline
		\hline\hline
		Sparse $\beta$ and Sparse $a$ & 7.4\% & 5.6\% & 8.2\% & 6.6\% \tabularnewline
		Sparse $\beta$ and Dense $a$ & 4.4\% & 4.8\% & 7.4\% & NA\tabularnewline
		Dense $\beta$ and Sparse $a$ & 3.6\% & 4.4\% & 33.4\% & 27.2\% \tabularnewline
		Dense $\beta$ and Dense $a$ & 5.6\% & 3.0\% & 67.2\% & NA\tabularnewline
	\end{tabular}
\end{table}

%
%
%
%
%
%
%
%
%
%
%
%
%

We also contrast  the power properties of the proposed tests with respect to the existing methods.  Results are collected in Figures \ref{fig:Power_Toeplitz}, \ref{fig:Power_equal_corr} and \ref{fig:Power_FanSong}, where we plot the power curves of competing methods  for design Examples 1, 2 and 3 described above  with hypothesis setting of (i)-(iv).
The overall message is clear from these figures: our tests and existing methods are quite similar for sparse $\beta_*$ and sparse $a$, whereas our tests behave nominally for other problems with preserving both low Type I error rates and Type II error rates. The biggest advantages are seen for dense vectors $\beta_*$ with other methods behaving in a manner close to random guessing.  In addition to the advantages in Type I error, our methods also display certain power advantages.   In the case of equal-correlation setting we observe  that our methods consistently reach faster power than BCH method even in the case of all sparse setting. Observe that the precision matrix in this setting is not sparse and our methods are still well-behaved. 
In the case of dense models, VBRD method completely breaks down with Type I or Type II error being close to 1. 
 For non-Gaussian design we see that VBRD may not be a nominal test any more  regardless of the model sparsity.  BCH  behaves more stably in this case but fails to apply  for the hypothesis settings (ii) and (iv) as described at the beginning of the Section. 
 In conclusion, we observe that our methods are stable across vastly different designs and model setting whereas existing methods fail to control either Type I error rate or Type II error rate. Hence the proposed methodology offers a robust and more widely applicable alternative to the existing inferential procedures, achieving better error control in difficult setting and not loosing much in the simples cases.

\begin{figure}
	
	\caption{\label{fig:Power_Toeplitz} Power curves of competing methods across different hypothesis $a^\top \beta_*=g_0$ settings. Design settings  follows Example 1 with $n=100$ and $p=500$. The alternative hypothesis takes the form of $a^\top \beta_*=g_0+h$ with $h$ presented on the x-axes. The y-axes contains the average rejection probability over $500$ repetition. Therefore, $h=0$ corresponds to Type-I error and the remaining ones the Type II error. ``Known variance'' denotes the method as is introduced in Section 2 whereas, ``unknown variance'' denotes the method introduced in Section 3. VBRD and BCH refer to the methods proposed in  \cite{van2014asymptotically} and  \cite{belloni2014inference}, respectively.}

	\begin{centering}
		\noindent\begin{minipage}[t]{1\columnwidth}%
			\begin{singlespace}
				\begin{center}
					(Sparse $\beta$ and Sparse $a$)\qquad{}\qquad{}\qquad{}\qquad{}\qquad{}(Sparse
					$\beta$ and Dense $a$)
					\par\end{center}
			\end{singlespace}
			\begin{center}
				\includegraphics[scale=0.5]{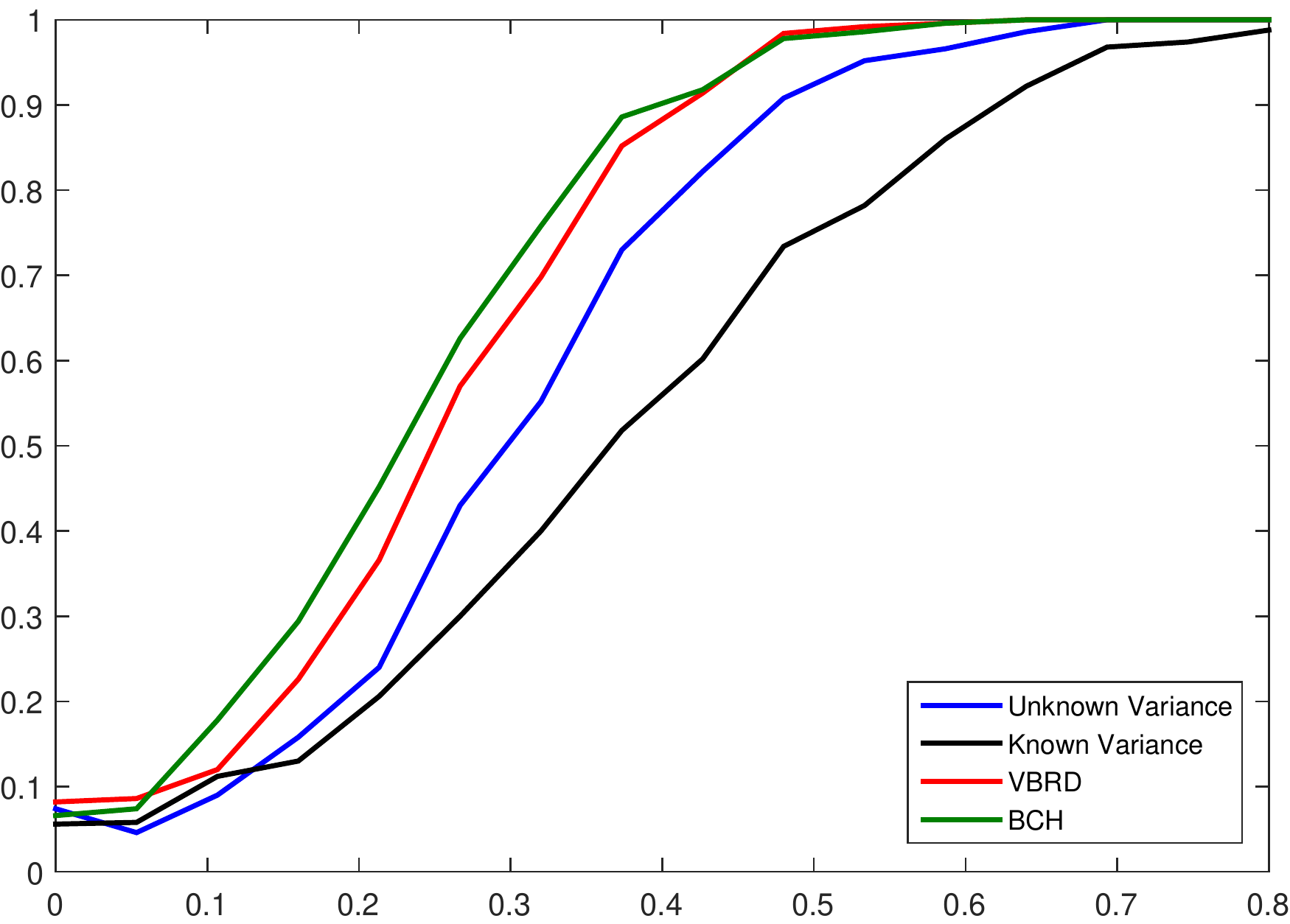}\includegraphics[scale=0.5]{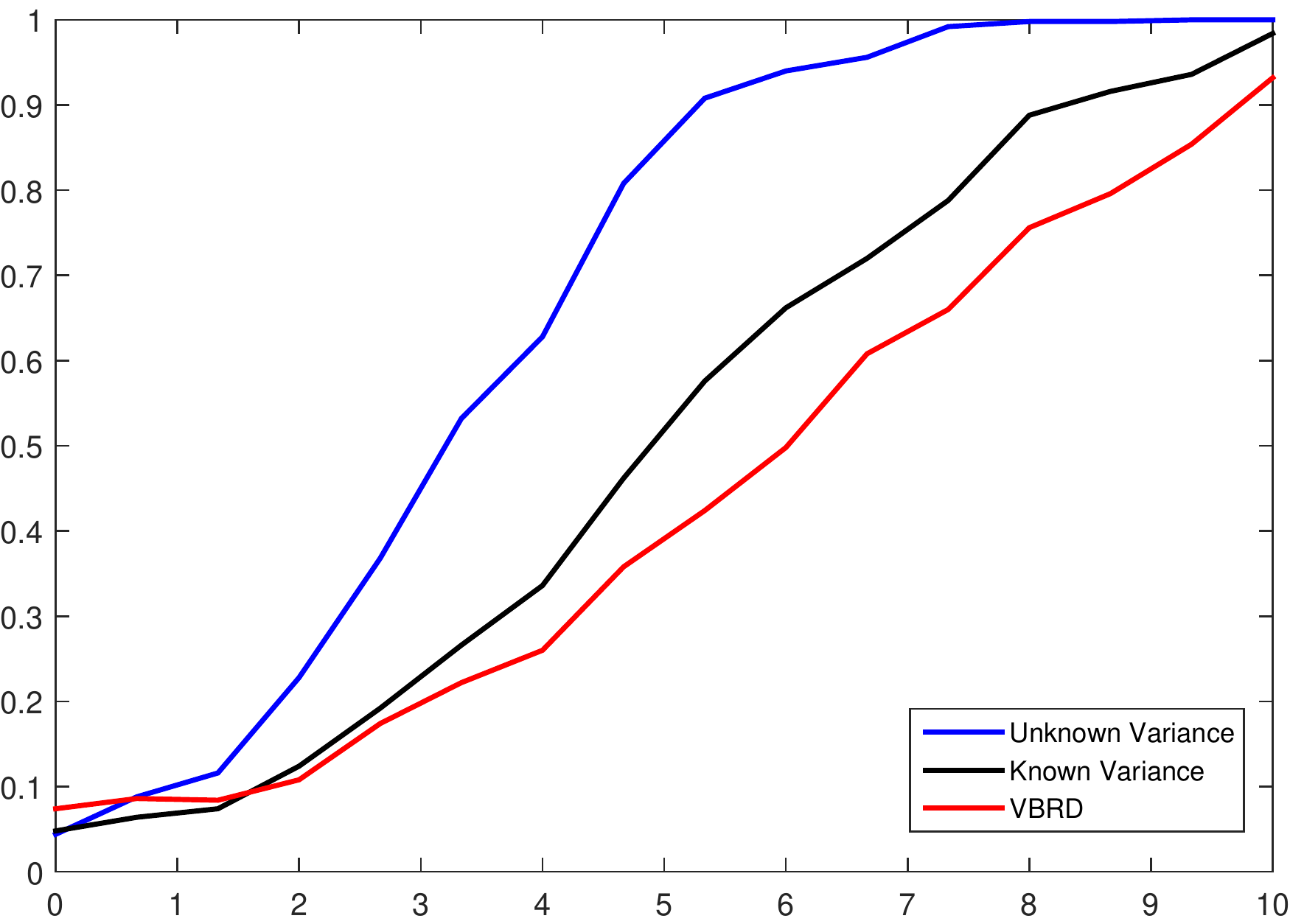}
				\par\end{center}%
		\end{minipage}
		\par\end{centering}
	\begin{centering}
		\bigskip{}
		\par\end{centering}
	\begin{centering}
		\bigskip{}
		\par\end{centering}
	\begin{centering}
		\bigskip{}
		\par\end{centering}
	\begin{centering}
		\noindent\begin{minipage}[t]{1\columnwidth}%
			\begin{center}
				(Dense $\beta$ and Sparse $a$)\qquad{}\qquad{}\qquad{}\qquad{}\qquad{}(Dense
				$\beta$ and Dense $a$)
				\par\end{center}
			\begin{center}
				\includegraphics[scale=0.5]{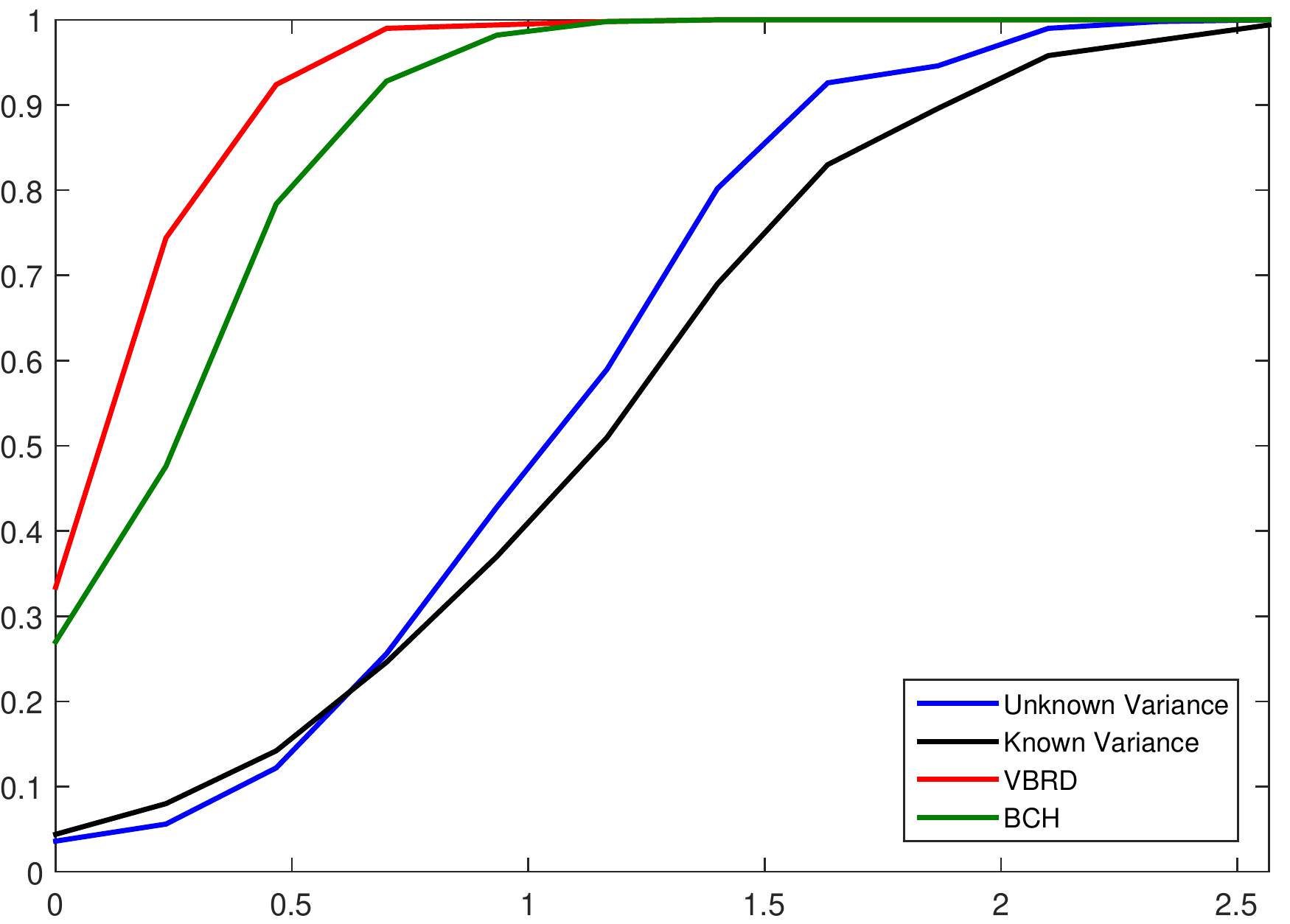}\includegraphics[scale=0.5]{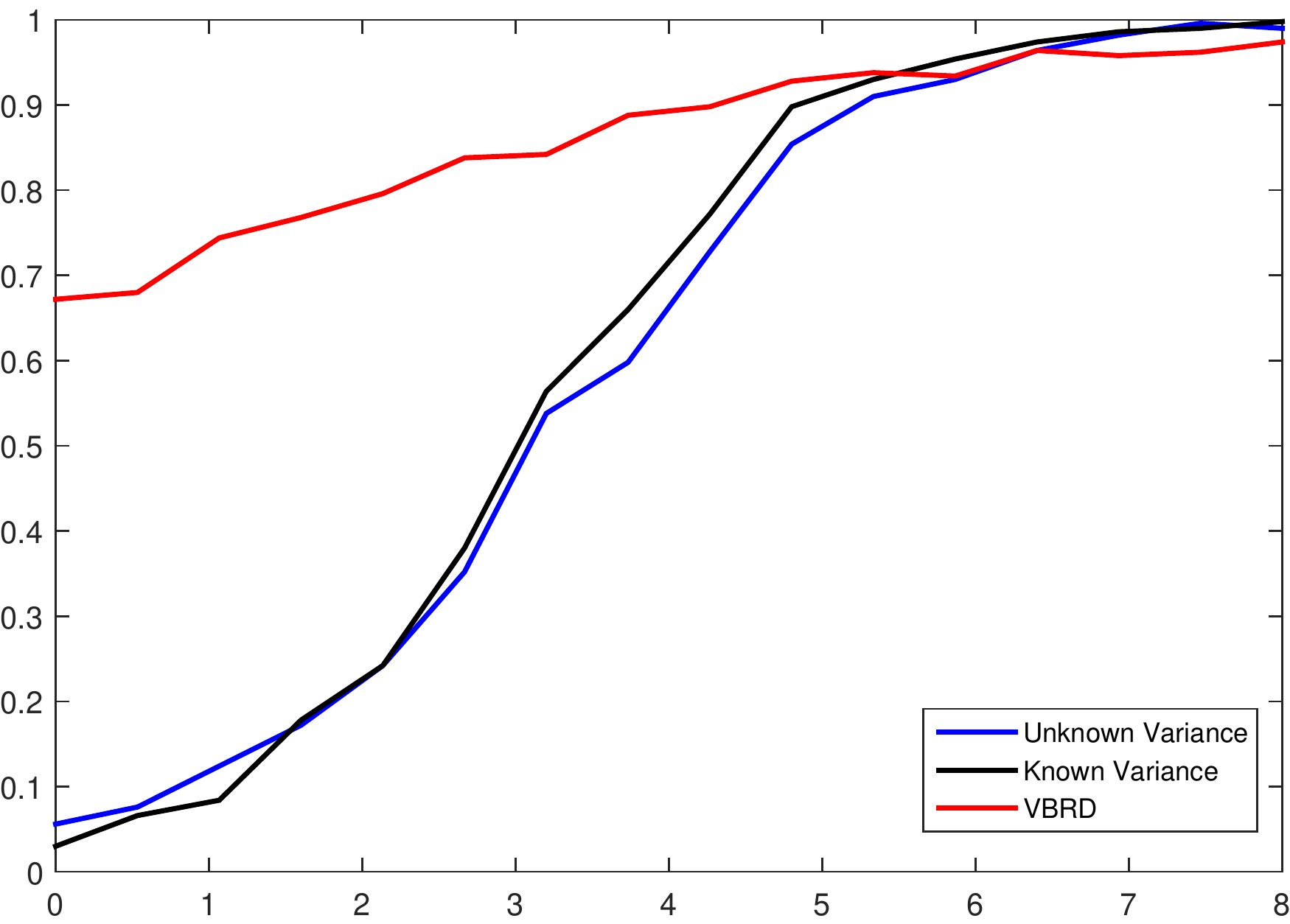}
				\par\end{center}%
		\end{minipage}
		\par\end{centering}
		\floatfoot{ Note that tuning parameters for all the methods are chosen according to their ``oracle'' theoretical values. If a method could not be implemented as is proposed in its respective paper it wasn't included in the graph. }
\end{figure}

\begin{figure}
	
	\caption{\label{fig:Power_equal_corr}Power curves of competing methods across different hypothesis $a^\top \beta_*=g_0$ settings. Design settings  follows Example 2 with $n=100$ and $p=500$. The alternative hypothesis takes the form of $a^\top \beta_*=g_0+h$ with $h$ presented on the x-axes. The y-axes contains the average rejection probability over $500$ repetition. Therefore, $h=0$ corresponds to Type-I error and the remaining ones the  Type II error. ``Known variance'' denotes the method as is introduced in Section 2 whereas, ``unknown variance'' denotes the method introduced in Section 3. VBRD and BCH refer to the methods proposed in  \cite{van2014asymptotically} and  \cite{belloni2014inference}, respectively.}
	
	\bigskip{}
	\bigskip{}
	
	\begin{centering}
		\noindent\begin{minipage}[t]{1\columnwidth}%
			\begin{singlespace}
				\begin{center}
					(Sparse $\beta$ and Sparse $a$)\qquad{}\qquad{}\qquad{}\qquad{}\qquad{}(Sparse
					$\beta$ and Dense $a$)
					\par\end{center}
			\end{singlespace}
			\begin{center}
				\includegraphics[scale=0.5]{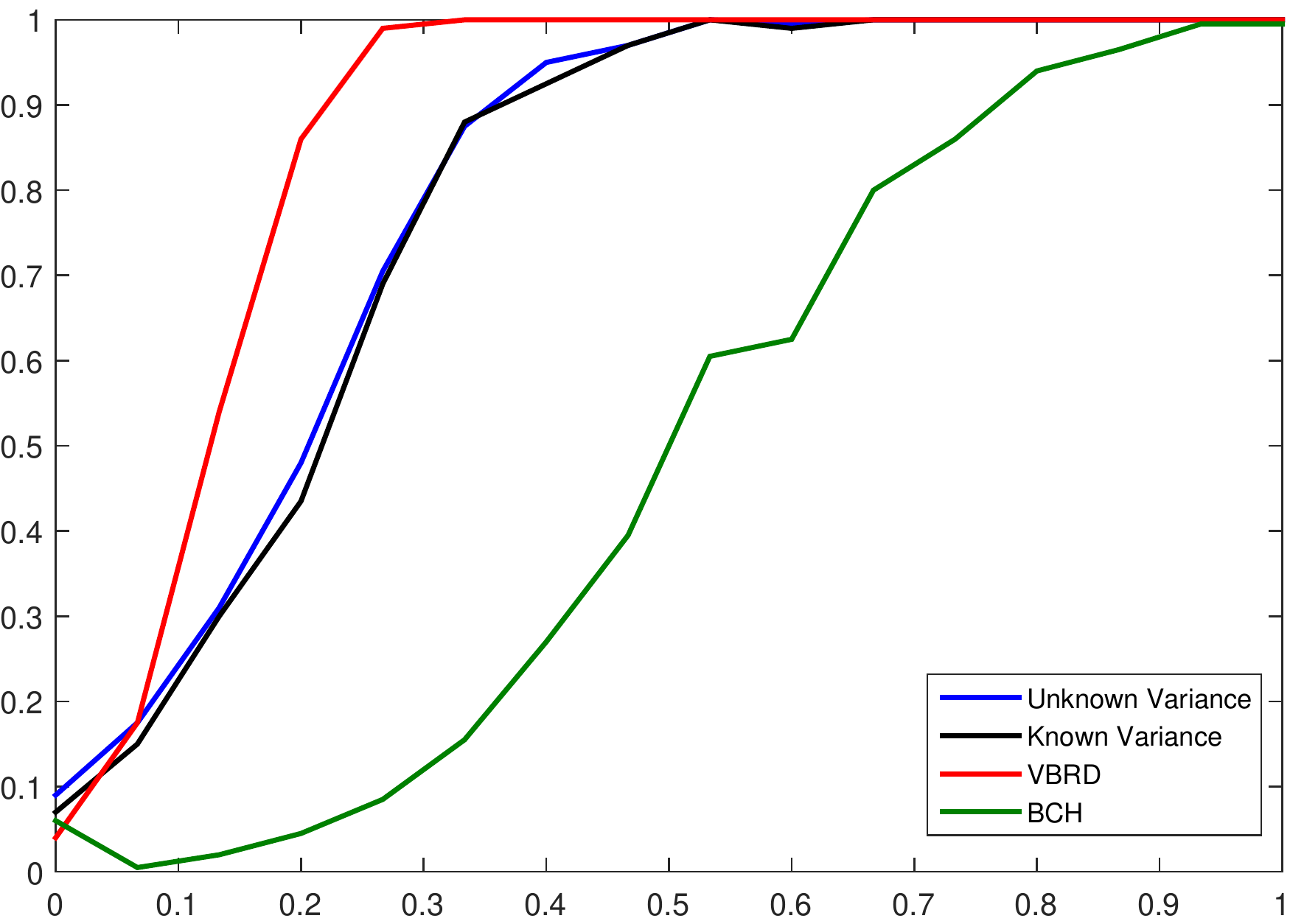}\includegraphics[scale=0.5]{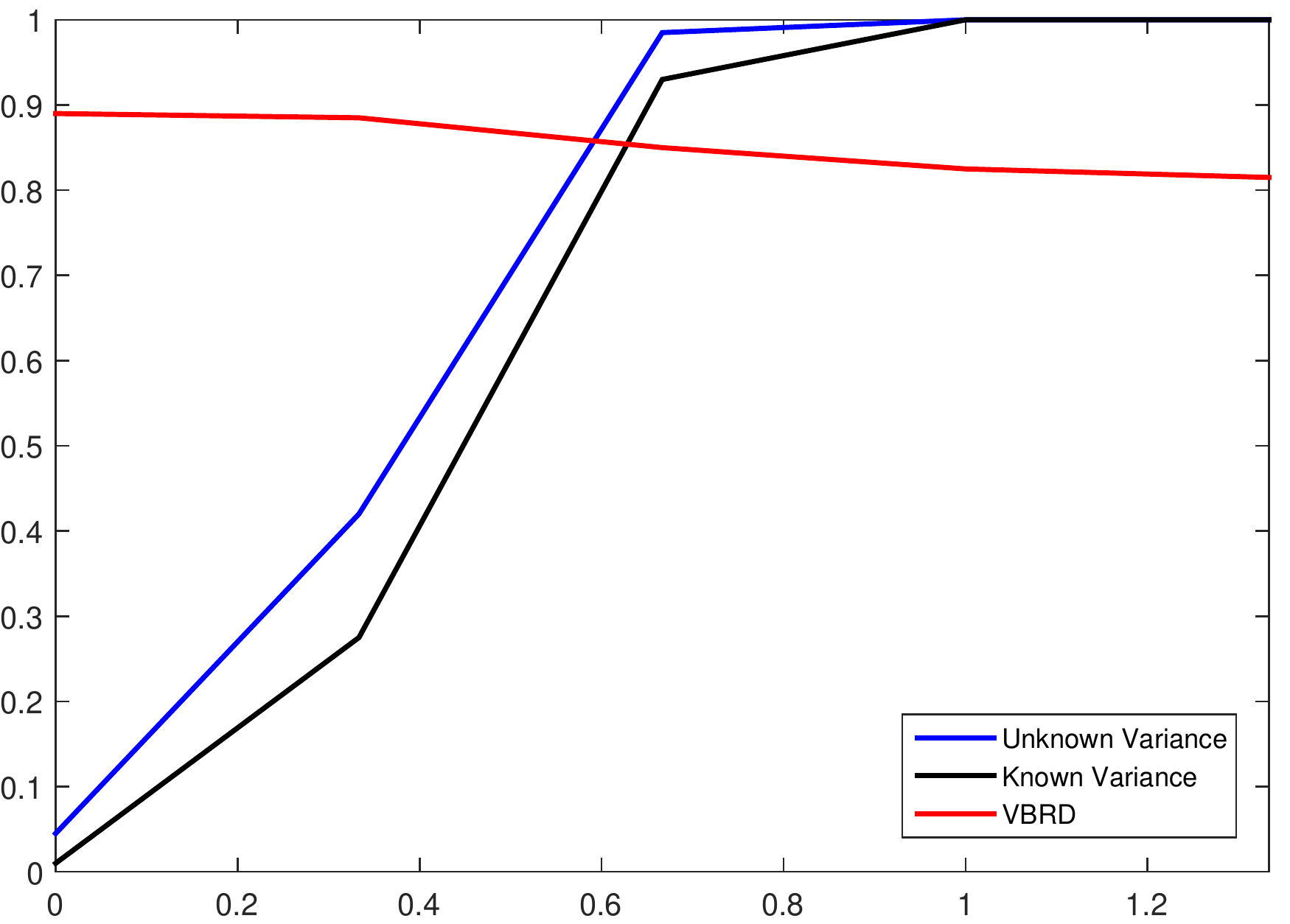}
				\par\end{center}%
		\end{minipage}
		\par\end{centering}
	\begin{centering}
		\bigskip{}
		\par\end{centering}
	\begin{centering}
		\bigskip{}
		\par\end{centering}
	\begin{centering}
		\bigskip{}
		\par\end{centering}
	\begin{centering}
		\noindent\begin{minipage}[t]{1\columnwidth}%
			\begin{center}
				(Dense $\beta$ and Sparse $a$)\qquad{}\qquad{}\qquad{}\qquad{}\qquad{}(Dense
				$\beta$ and Dense $a$)
				\par\end{center}
			\begin{center}
				\includegraphics[scale=0.5]{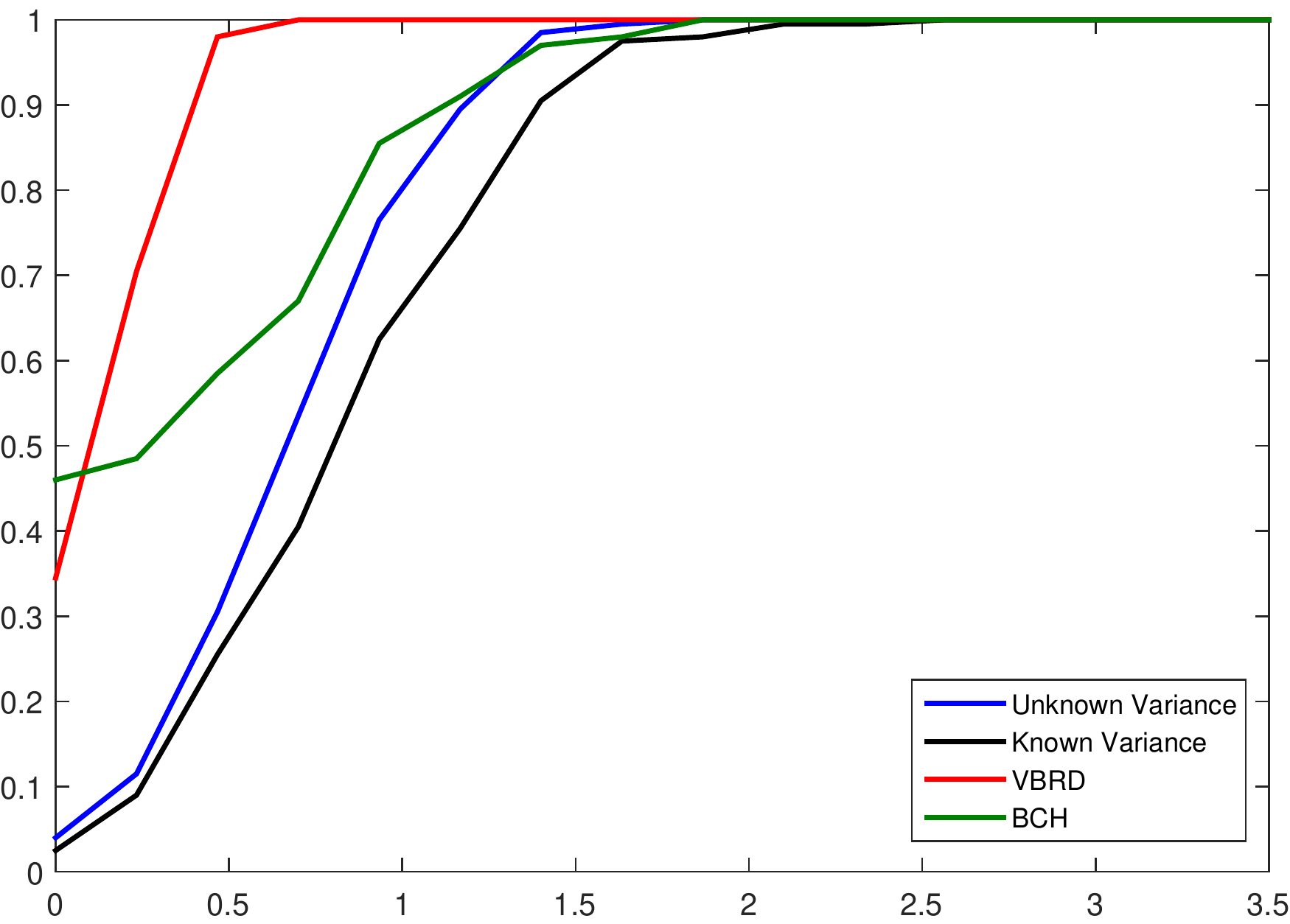}\includegraphics[scale=0.5]{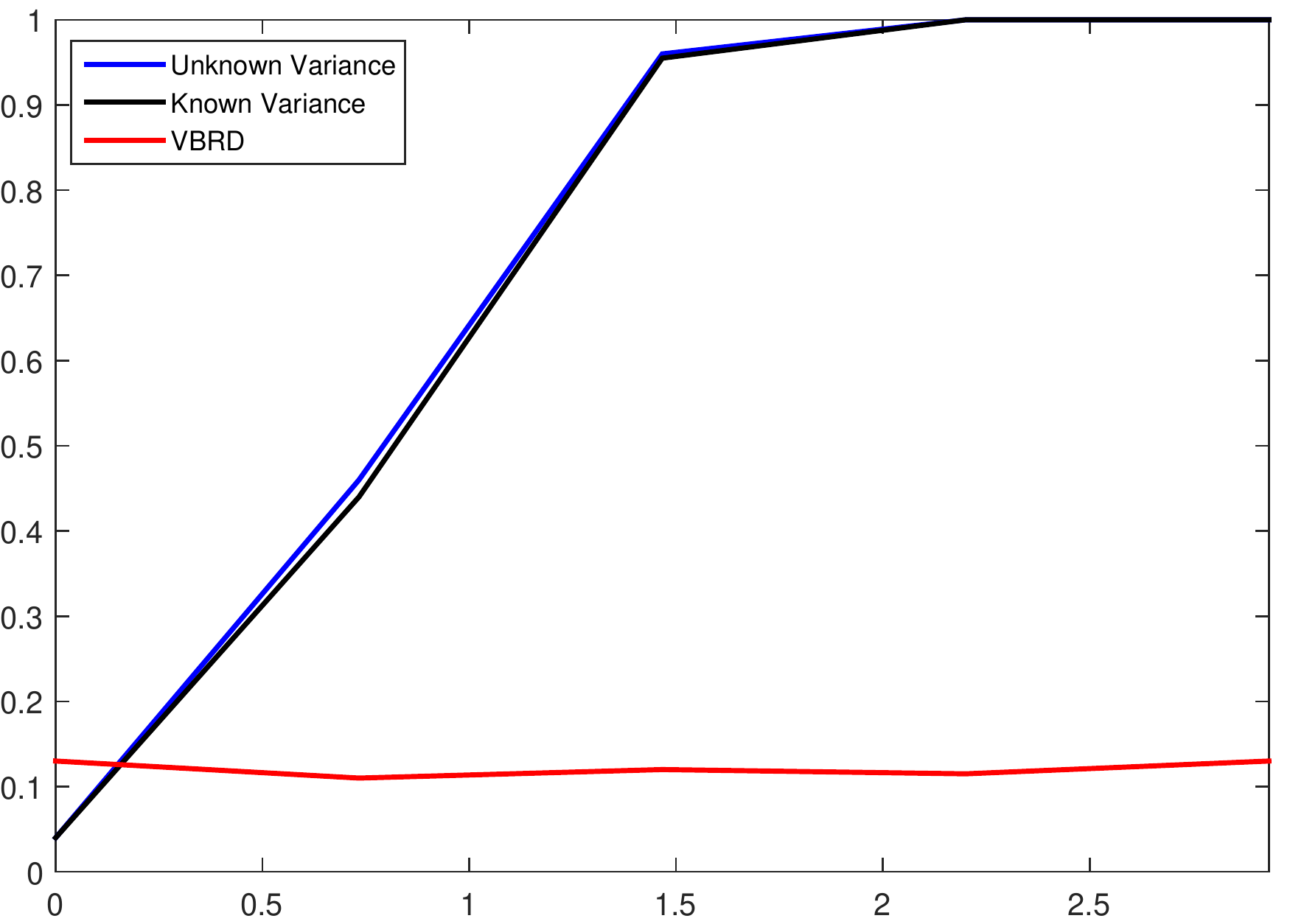}
				\par\end{center}%
		\end{minipage}
		\par\end{centering}
		\floatfoot{ Note that tuning parameters for all the methods are chosen according to their ``oracle'' theoretical values. If a method could not be implemented as is proposed in its respective paper it wasn't included in the graph. }
\end{figure}

\begin{figure}
	
	\caption{\label{fig:Power_FanSong}Power curves of competing methods across different hypothesis $a^\top \beta_*=g_0$ settings. Design settings  follows Example 3 with $n=100$ and $p=500$. The alternative hypothesis takes the form of $a^\top \beta_*=g_0+h$ with $h$ presented on the x-axes. The y-axes contains the average rejection probability over $500$ repetition. Therefore, $h=0$ corresponds to Type-I error and the remaining ones the Type II error. ``Known variance'' denotes the method as is introduced in Section 2 whereas, ``unknown variance'' denotes the method introduced in Section 3. VBRD and BCH refer to the methods proposed in  \cite{van2014asymptotically} and  \cite{belloni2014inference}, respectively.}
	
	\bigskip{}
	\bigskip{}
	
	\begin{centering}
		\noindent\begin{minipage}[t]{1\columnwidth}%
			\begin{singlespace}
				\begin{center}
					(Sparse $\beta$ and Sparse $a$)\qquad{}\qquad{}\qquad{}\qquad{}\qquad{}(Sparse
					$\beta$ and Dense $a$)
					\par\end{center}
			\end{singlespace}
			\begin{center}
				\includegraphics[scale=0.5]{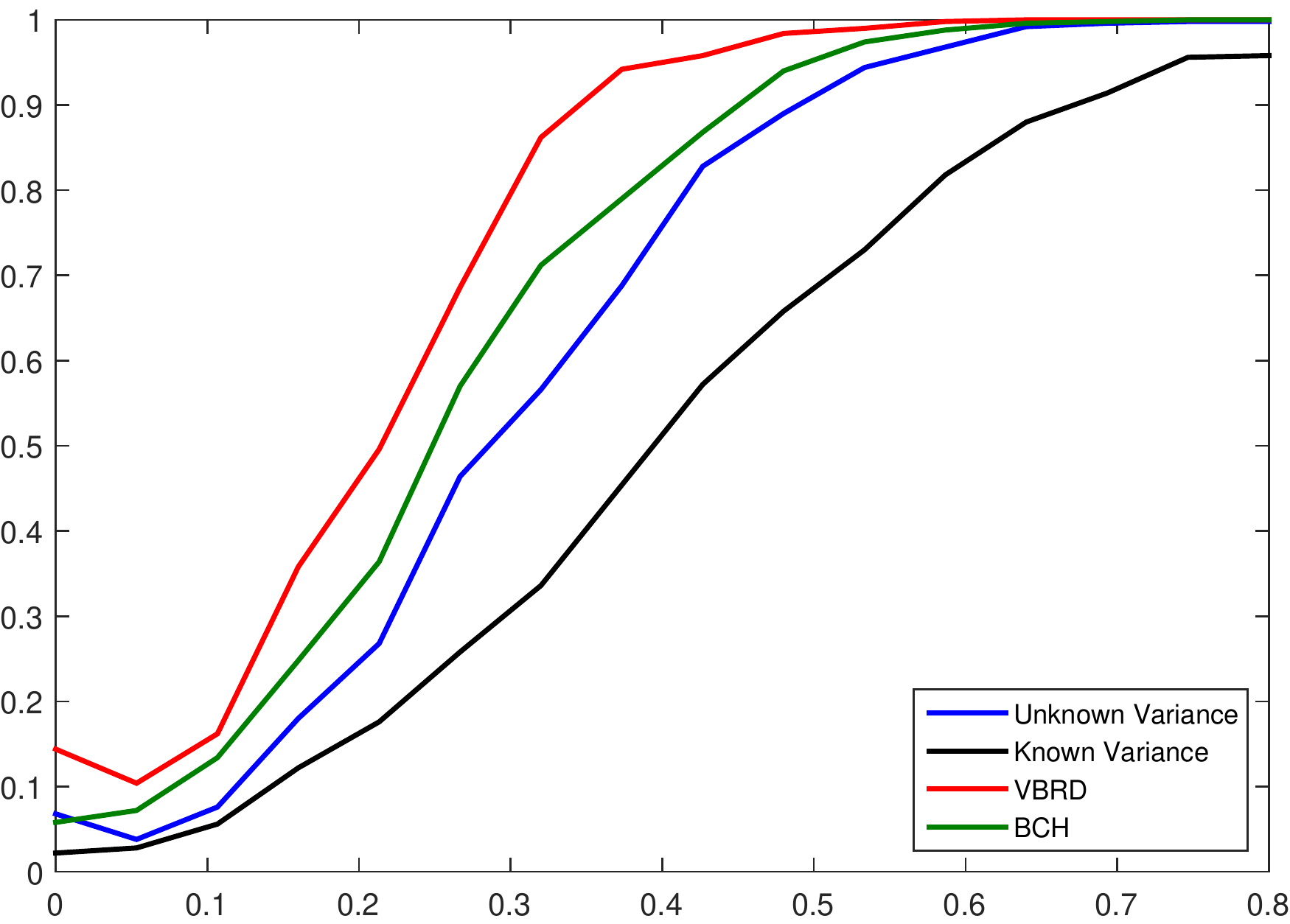}\includegraphics[scale=0.5]{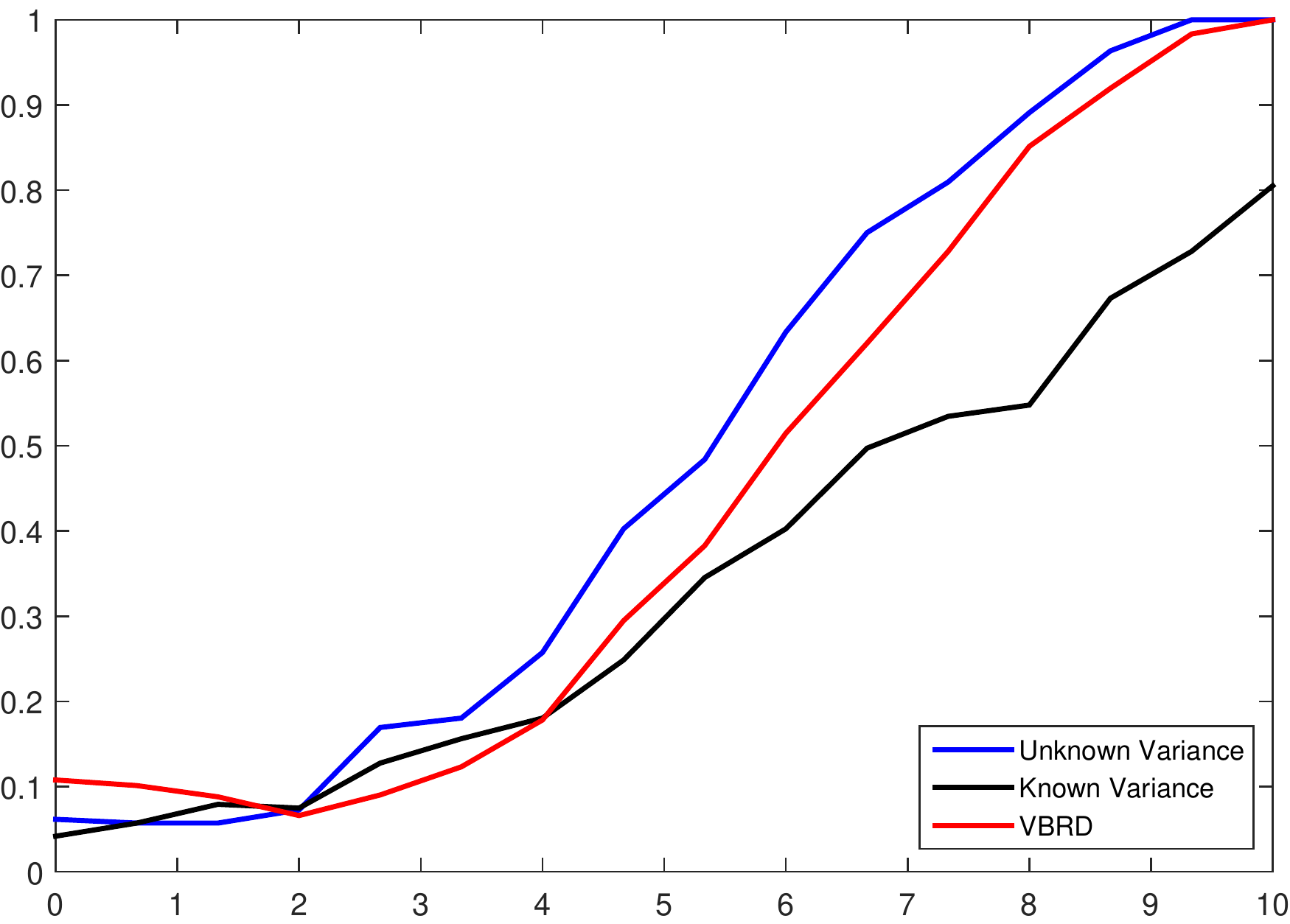}
				\par\end{center}%
		\end{minipage}
		\par\end{centering}
	\begin{centering}
		\bigskip{}
		\par\end{centering}
	\begin{centering}
		\bigskip{}
		\par\end{centering}
	\begin{centering}
		\bigskip{}
		\par\end{centering}
	\begin{centering}
		\noindent\begin{minipage}[t]{1\columnwidth}%
			\begin{center}
				(Dense $\beta$ and Sparse $a$)\qquad{}\qquad{}\qquad{}\qquad{}\qquad{}(Dense
				$\beta$ and Dense $a$)
				\par\end{center}
			\begin{center}
				\includegraphics[scale=0.5]{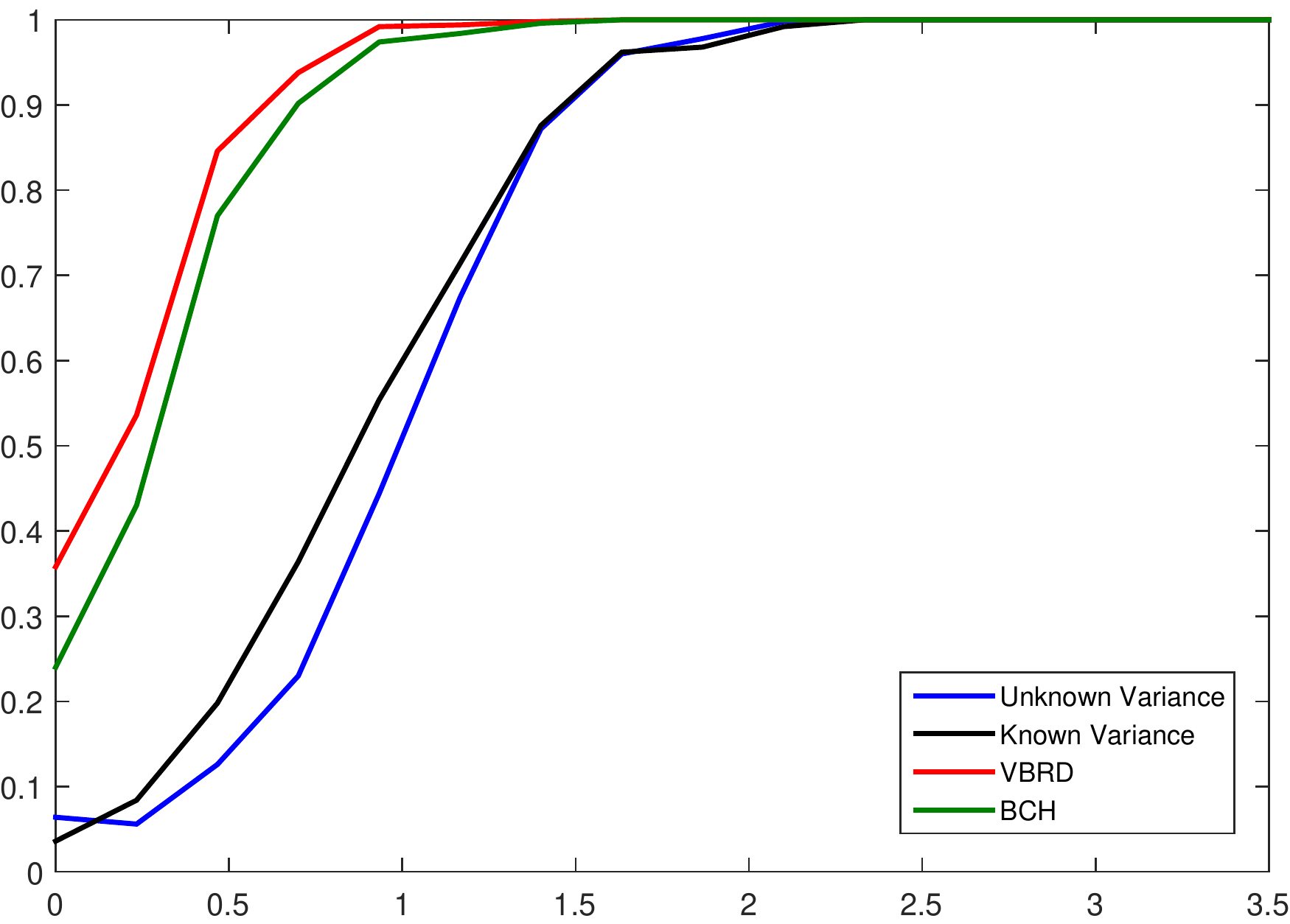}\includegraphics[scale=0.5]{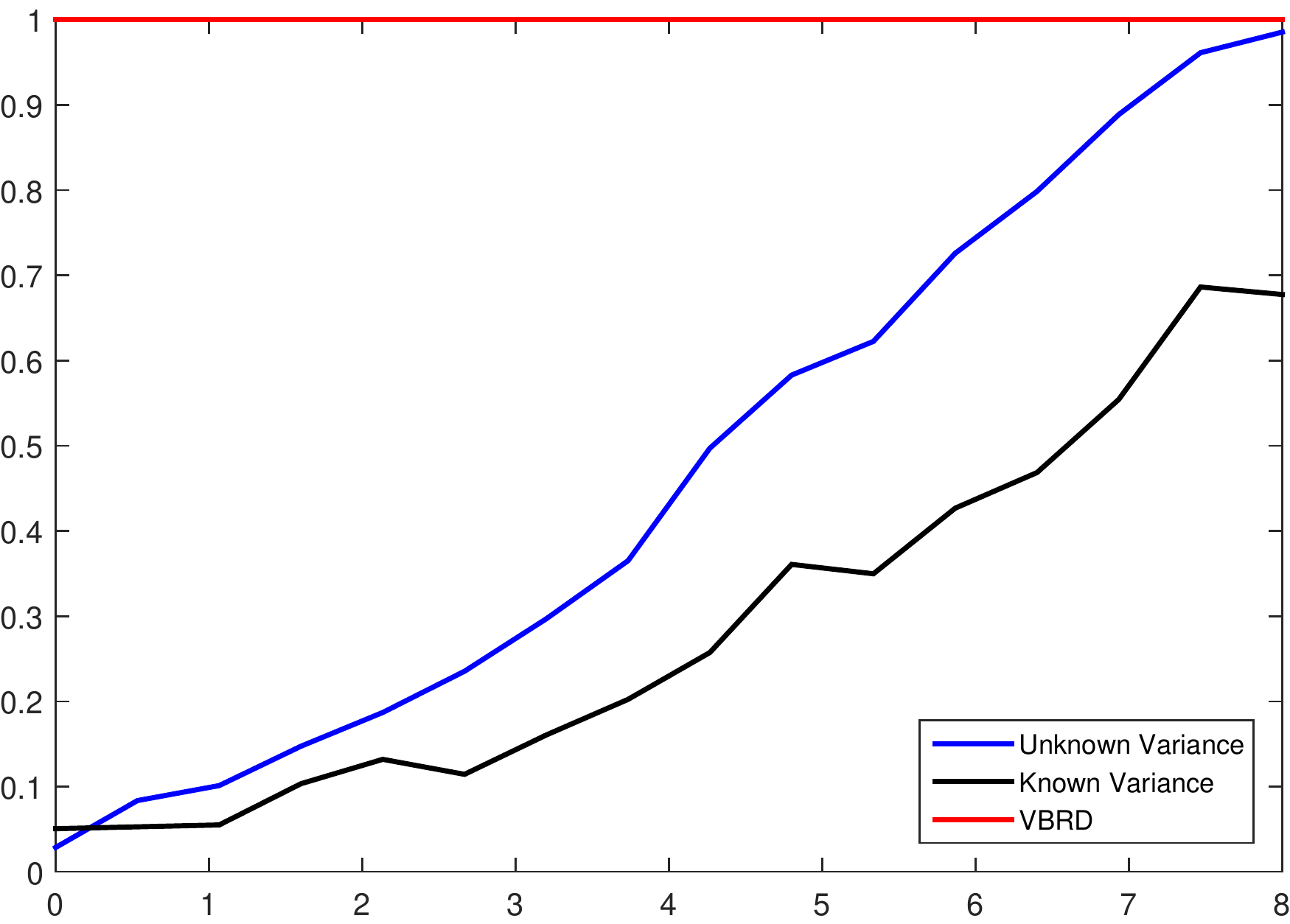}
				\par\end{center}%
		\end{minipage}
		\par\end{centering}
		\floatfoot{ Note that tuning parameters for all the methods are chosen according to their ``oracle'' theoretical values. If a method could not be implemented as is proposed in its respective paper it wasn't included in the graph. }
\end{figure}

\subsection{Real data example: equity risk premia}\label{sec:rd}

We apply the methods developed in Section \ref{sec: extension} to
inference of equity risk premia during different states of the economy.
Some studies have found that the risk premia of stock market returns have
different predictability, depending on whether the macroeconomy is
in recession or expansion; see \citet{Rapach2010}, \citet{henkel2011time}
and \citet{Dangl2012}. One common explanation for this is  time variation
in risk premia; see \citet{henkel2011time}. It is plausible that the stock market is riskier in recessions than in expansions and thus a higher expected
return is demanded by investors, implying that the expected stock returns
 can be predicted by the state of the macroeconomy. In this
section, we revisit this argument by directly conducting inference
on the expected return of the stock market conditional on a large
number of macroeconomic variables. 

Let $y_{t}$ be the excess return of the U.S stock market observed
at time $t$ and $x_{t-1}\in\mathbb{R}^{p}$ be a large number of
macroeconomic variables observed at time $t-1$. Let $s_{t}\in\{0,1\}$
denote the NBER recession indicator; $s_{t}=1$ means that the economy
is in recession at time $t$. We would like to conduct inference on
$E(y_{t}\mid x_{t-1})$ for the two different values of $s_{t-1}$.
Formally, we wish to construct confidence intervals for the following
quantities:
 (a)  $E[E(y_{t}\mid x_{t-1})\mid s_{t-1}=1]$,
 (b)   $E[E(y_{t}\mid x_{t-1})\mid s_{t-1}=0]$  and 
 (c)  $E[E(y_{t}\mid x_{t-1})\mid s_{t-1}=1]-E[E(y_{t}\mid x_{t-1})\mid s_{t-1}=0]$.
 
We impose a linear model on the risk premia: $E(y_{t}\mid x_{t-1})=x_{t-1}^\top \beta_{*}$
for some unknown $\beta_{*}\in\mathbb{R}^{p}$. Hence, the quantities
of interest are: $a_{1}^\top \beta_{*}$, $a_{0}^\top \beta_{*}$ and $(a_{1}-a_{0})^\top \beta_{*}$,
where $a_{j}=E(x_{t-1}\mid s_{t-1}=j)$. The macroeconomic variables
we use are from the dataset constructed by \citet{mccracken2015fred}.
We also include the squared, cubed and fourth power of these variables,
leading to $p=440$ (after removing variables with more than 30 missing
observations). It is possible that $\beta_{*}\in\mathbb{R}^{p}$ is
not a sparse vector because many macroeconomic variables might be
relevant and each might only explain a tiny fraction of the equity
risk premia. Therefore, the methods proposed in this article are particularly
useful because they do not assume the sparsity of $\beta_{*}$. 
\begin{rem}
There have been numerous attempts to include information from many
macroeconomic variables in estimating the equity risk premium. \citet{Rapach2010}
use the model combination approach by taking the simple average of
14 univariate linear models. Although this   approach
manages to reduce the variance in the predictions, it only produces
a single point prediction and does not deliver a confidence interval.
Moreover, under the specification of $E(y_{t}\mid x_{t-1})=x_{t-1}^\top \beta_{*}$,
we should not expect the simple average of predictions by individual
components of $x_{t-1}$ to be close to $x_{t-1}^\top \beta_{*}$, especially with highly correlated regressors. Another
popular approach is to use factor models. This method is widely used
in macroeconomics for predictions; see \citet{stock2002forecasting},
\citet{stock2002macroeconomic} and \citet{mccracken2015fred}. The
idea is to extract  a few principal components (PC's) from $x_{t}$
and to predict $y_{t}$ using these PC's. Although the PC's account
for a large variation in $x_{t-1}$, they are not hard-wired to have
high predictive power for $y_{t}$ unless we assume that the PC's
capture the factors that drive $y_{t}$. In some sense, this factor
approach only uses information in $x_{t-1}$ that is relevant for
predicting variations among different components of $x_{t-1}$; by
contrast, the methodology we propose in this article allows us to use
all the information in $x_{t-1}$. 
\end{rem}
Our dataset has 659 monthly observations starting from 1960. We use
the first 20 years ($n=240$) to train the data and the last $659-n$
months to compute $a_{j}=\sum_{t=n+1}^{659}x_{t}\mathbf{1}\{s_{t}=j\}/\sum_{t=n+1}^{659}\mathbf{1}\{s_{t}=j\}$.
In other words, we investigate the equity risk premia between 1980
and 2014. We conduct inference on the average equity risk premia in different states of the macroeconomy. The 95\% confidence intervals for $a_{1}^\top \beta_{*}$, $a_{0}^\top \beta_{*}$
and $(a_{1}-a_{0})^\top \beta_{*}$ are reported in Table \ref{tab: CI risk premia}.

\begin{table}
\protect\caption{\label{tab: CI risk premia}95\% confidence intervals for equity risk
premia}

\begin{centering}
\begin{tabular}{lcc}
 & Lower bound & Upper bound\tabularnewline
Risk premia in expansion $a_{0}^\top \beta_{*}$:  & 2.79 & 10.94\tabularnewline
Risk premia in recession  $a_{1}^\top \beta_{*}$:  & 6.32 & 36.92\tabularnewline
Risk premia difference $(a_{1}-a_{0})^\top \beta_{*}$:  & 5.13 & 38.30\tabularnewline
\end{tabular}
\par\end{centering}

The values are reported in annualized percentage, i.e., $2.79$ means
$2.79\%$. 
\end{table}

The confidence intervals in Table \ref{tab: CI risk premia}  are  very informative for our purpose. The results presented in Table \ref{tab: CI risk premia},
imply that the risk premia in recessions are higher
than in expansions and that the magnitude of difference is economically
meaningful. These results are consistent with existing literature;
see Table 1 of \citet{henkel2011time}. Figure \ref{fig: CI each time}
plots the confidence intervals for $E(y_{t}\mid x_{t-1})$ at each
$t$. This figure is consistent with the hypothesis that, during the Recessions (e.g., in the early 80's or around 2008),
the risk premia went up substantially. 

\begin{figure}

\protect\caption{\label{fig: CI each time} 95\% confidence interval for the risk premia
at each time period (the blue band) with the grey shades representing the NBER recession periods. }

\begin{centering}
\includegraphics[scale=0.7]{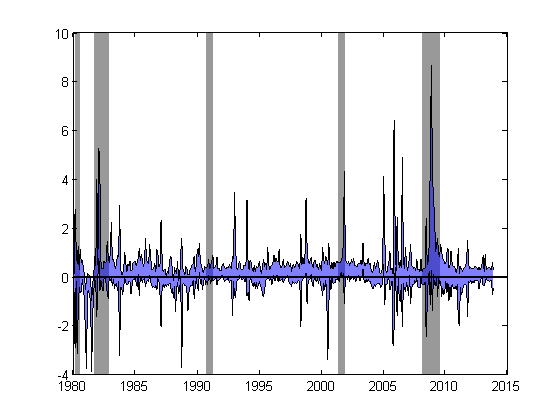}
\par\end{centering}

\end{figure}

\section*{Discussions}

In this article, we develop new methodology for testing hypotheses on $a^\top \beta_*$, where \(a\) is given and \(\beta_*\) is the regression parameter of a high-dimensional linear model. Under the proposed methodology, a new restructured regression  and with features that are synthesized  and augmented, is constructed based on  \(a\) and is used to obtain  moment conditions that are equivalent to the null hypothesis. Estimators  proposed are tailored to the problem at hand and solve constrained high-dimensional optimization problems. The two proposed methods deal with the scenario with known $\Sigma_X$ and the scenario with unknown $\Sigma_{X}$, respectively.  The first can be used when a prior information about correlation among the features exists; a case of independent features, whereas  the second applies more broadly to many scientific examples where feature correlations need to be estimated. 
To solve a high-dimensional inference problem, there exists at least one competing choice. It is based on the ``debiasing'' principles of \cite{zhang2014confidence}. However, the principles laid out therein only apply to strictly sparse linear models. Therefore, we fulfill an important gap in the existing literature by developing methodology that allows fully non-sparse linear models.

Restructuring the model according to the  hypothesis under testing allows for the high-dimensional \(a\) and \(\beta_*\) that are not necessarily sparse.   The synthesized features are customized based on the null hypothesis and are close to being orthogonal. We note that this customization is the key, since the orthogonality per se is not useful. Techniques that only induce feature orthogonality, such as pre-conditioning
by \citet{jia2012preconditioning} and DECO by \citet{wang2016decorrelated}, still cannot be used to  test  $H_0:\ a^\top \beta_*=g_0$ when \(a\) and \(\beta_*\) are dense.

To conclude the article, we would like to discuss here valuable topics for future research. 
The proposed methodology can   be used to  conduct inference of conditional distributions of the response, whenever the distribution function of $\varepsilon$, $Q(\cdot)$ is known or is consistently estimated. Specific example includes construction of prediction intervals for high-dimensional linear models -- a topic of extreme importance.
For  $F_{Y|X}(y,x)=P(y_{n+1}\leq y\mid x_{n+1}=x)$   $F_{Y|X}$  can be parametrized as  $F_{Y|X}(y,x;\beta_{*},Q)=Q(y-x^{\top}\beta_{*})$.
  For a given $x$, we can obtain a confidence set for $x^{\top}\beta_{*}$
: $\hat{I}(1-\alpha,x)$ such that $P(x^{\top}\beta_{*}\in\hat{I}(1-\alpha,x))\rightarrow1-\alpha$, by inverting the tests
proposed in this article.
This leads to a natural confidence set for the $F_{Y|X}(y,x)$: $P(F_{Y|X}(\cdot,x)\in\hat{S}(1-\alpha,x))\rightarrow1-\alpha$,
where 
$$\hat{S}(1-\alpha,x)=\{Q(\cdot-c)\mid c\in\hat{I}(1-\alpha,x)\}.$$ 
If we restrict the model parameters to be sparse, then we can consistently estimate \(\varepsilon_i\) (and thus \(Q(\cdot)\)) and consequently form valid prediction intervals -- a topic of specific importance for practitioners. However, when the model is allowed to be non-sparse and high-dimensional, the question of construction of prediction intervals hasn't been answered and needs special considerations.
Additionally, under this setup, the proposed methods also lead to an inference method for
(possibly nonlinear) functionals of the conditional distribution of
$y_{n+1}$ given $x_{n+1}$. For example, suppose that one  is interested in 
$H(u,x)=\inf\{y\in\mathbb{R}\mid F_{Y|X}(y,x)\geq u\}$.  Following the above proposal, we can
simply take 
$$\hat{\mathcal{H}}(u,x,\alpha)=\{\inf\{y\in\mathbb{R}\mid Q(y-c)\geq u\}\mid c\in\hat{I}(1-\alpha,x)\}$$
as a confidence set for $H(u,x)$. 
\bibliographystyle{apalike}
\bibliography{biblio_state_space}

\newpage
\appendix

\section{Supplementary Materials} \label{ap:a}

In the rest of the article, we use $\lambda_{\min}(\cdot)$ and $\lambda_{\max}(\cdot)$
to denote the minimal and maximal eigenvalues of a matrix, respectively.
For a random variable, let $\|\cdot\|_{L^{r}(P)}$ denote the $L^{r}(P)$-norm,
i.e., $\|z_{i}\|_{L^{r}(P)}=[Ez_{i}^{r}]^{1/r}$. For a vector $x=(x_{1},\cdots,x_{p})^\top \in\mathbb{R}^{p}$,
let $\mathcal{M}(x)$ denotes its support $\{i\mid x_{i}\neq0\}$.

\subsection{Proof Theorems \ref{thm: known variance X} and \ref{thm: known variance X power}}
\begin{proof}[\textbf{Proof of Theorem \ref{thm: known variance X}}]
Under $H_{0}$ in (\ref{eq: null hypo}), $l_{i}(g_{0})=z_{i}(\varepsilon_{i}+w_{i}^\top \beta_{*})$.
Notice that 
$$\sigma_{l}^{2}=El_{i}(g_{0})^{2}=\sigma_{z}^{2}\sigma_{\varepsilon}^{2}+Ez_{i}^{2}(w_{i}^\top \beta_{*})^{2}\geq\sigma_{z}^{2}\sigma_{\varepsilon}^{2}.$$
Hence, $s_{n}^{2}:=\sum_{i=1}^{n}E(l_{i}(g_{0}))^{2}\geq n\sigma_{z}^{2}\sigma_{\varepsilon}^{2}$.
It follows that 
\begin{align*}
\frac{\sum_{i=1}^{n}E|l_{i}(g_{0})|^{3}}{s_{n}^{3}}
&\le\frac{E|z_{i}(\varepsilon_{i}+w_{i}^\top \beta_{*})|^{3}}{n^{1/2}\sigma_{\varepsilon}^{3}\sigma_{z}^{3}}
\\
&\overset{(i)}{\leq}\frac{\sqrt{\|z_{i}\sigma_{z}^{-1}\|_{L^{6}(P)}^{6}\|\varepsilon_{i}+w_{i}^\top \beta_{*}\|_{L^{6}(P)}^{6}}}{n^{1/2}c^{3}}\overset{(ii)}{=}o(1),
\end{align*}
where $(i)$ follows by Holder's inequality and $(ii)$ follows by
Assumption \ref{assu: lyapunov CLT} and Minkowski's inequality $\|\varepsilon_{i}+w_{i}^\top \beta_{*}\|_{L^{6}(P)}\leq\|\varepsilon_{i}\|_{L^{6}(P)}+\|w_{i}^\top \beta_{*}\|_{L^{6}(P)}=O(1)$.
By Lyapunov's CLT (Theorem 11.1.4 of \citet{athreya2006measure}),
$\sum_{i=1}^{n}l_{i}(g_{0})/s_{n}\rightarrow^{d}\mathcal N(0,1)$. 

By Slutsky's lemma, it suffices to show that $s_{n}/\sqrt{n^{-1}\sum_{i=1}^{n}l_{i}(g_{0})^{2}}\rightarrow^{p}1$.
Notice that this is equivalent to the condition 
\begin{equation}
n^{-1}\sum_{i=1}^{n}\left(\frac{l_{i}(g_{0})^{2}}{El_{i}(g_{0})^{2}}-1\right)=o_{P}(1).\label{eq: known variance size eq 1}
\end{equation}
By Markov's inequality, we have that, for any $M>0$, 
\begin{align}
P\left[\left(n^{-1}\sum_{i=1}^{n}\left(\frac{l_{i}(g_{0})^{2}}{El_{i}(g_{0})^{2}}-1\right)\right)^{2}>M\right]
&\leq
 M^{-1}n^{-1}E\left(\frac{l_{i}(g_{0})^{2}}{El_{i}(g_{0})^{2}}-1\right)^{2}\\
&\overset{(i)}{\leq}
2M^{-1}n^{-1}\left[\frac{El_{i}(g_{0})^{4}}{\left[El_{i}(g_{0})^{2}\right]^{2}}+1\right],\label{eq: known variance size eq 2}
\end{align}
where $(i)$ holds by the elementary inequality $(a+b)^{2}\leq2a^{2}+2b^{2}$. 

By Holder's inequality and Assumption \ref{assu: lyapunov CLT}, 
$$El_{i}(g_{0})^{4}\sigma_{z}^{-4}\leq\sqrt{\|z_{i}\sigma_{z}^{-1}\|_{L^{8}(P)}^{8}\|\varepsilon_{i}+w_{i}^\top \beta_{*}\|_{L^{8}(P)}^{8}}<C_{0}$$
for some constant $C_{0}>0$, depending only on $C$. Since $El_{i}(g_{0})^{2}\geq\sigma_{z}^{2}\sigma_{\varepsilon}^{2}\geq\sigma_{z}^{2}c^{2}$,
we have 
$$El_{i}(g_{0})^{4}/\left[El_{i}(g_{0})^{2}\right]^{2}\leq C_{0}c^{-4}<\infty.$$
This, together with (\ref{eq: known variance size eq 2}), implies
(\ref{eq: known variance size eq 1}). The proof is complete. 
\end{proof}

\begin{proof}[\textbf{Proof of Theorem \ref{thm: known variance X power}}]
Since the eigenvalues of $\Sigma_{X}$ are bounded away from zero
and infinity, we have $\sigma_{z}^{2}=Ez_{i}^{2}=b^\top \Sigma_{X}b=(a^\top\Omega_Xa)^{-1}\asymp\|a\|_{2}^{-2}$.
It follows, by $\sqrt{n}|h_{n}|/\|a\|_{2}\rightarrow\infty$, that
\begin{equation}
\sqrt{n}|h_{n}|\sigma_{z}\rightarrow\infty.\label{eq: known var power eq 0}
\end{equation}

It should be noted that when $a^\top\beta_{*}=g_{0}+h_{n}$, we have $l_{i}(g_{0})=z_{i}(\varepsilon_{i}+w_{i}^\top \beta_{*})+z_{i}^{2}h_{n}$.
Also note that (\ref{eq: known variance size eq 2}) in the proof
of Theorem \ref{thm: known variance X} still holds, in that for all $ M>0$,
\begin{equation}
P\left[\left(n^{-1}\sum_{i=1}^{n}\left(\frac{l_{i}(g_{0})^{2}}{El_{i}(g_{0})^{2}}-1\right)\right)^{2}>M\right]\leq2M^{-1}n^{-1}\left[\frac{El_{i}(g_{0})^{4}}{\left[El_{i}(g_{0})^{2}\right]^{2}}+1\right].\label{eq: known var power eq 1}
\end{equation}
Observe that, by Assumption \ref{assu: lyapunov CLT}, 
\begin{align}
\left\Vert \frac{l_{i}(g_{0})}{\sigma_{z}(\sigma_{z}|h_{n}|\vee1)}\right\Vert _{L^{4}(P)}
&\leq
\left\Vert \frac{z_{i}(\varepsilon_{i}+w_{i}^\top \beta_{*})}{\sigma_{z}(\sigma_{z}|h_{n}|\vee1)}\right\Vert _{L^{4}(P)}
+
\left\Vert \frac{z_{i}^{2}h_{n}}{\sigma_{z}(\sigma_{z}|h_{n}|\vee1)}\right\Vert _{L^{4}(P)}
\\
& \leq
\left\Vert \frac{z_{i}(\varepsilon_{i}+w_{i}^\top \beta_{*})}{\sigma_{z}}\right\Vert _{L^{4}(P)}
+\left\Vert z_{i}\sigma_{z}^{-1}\right\Vert _{L^{8}(P)}^{2}\frac{\sigma_{z}|h_{n}|}{\sigma_{z}|h_{n}|\vee1}
\\
&\leq\left\Vert z_{i}\sigma_{z}^{-1}\right\Vert _{L^{8}(P)}^{2}\left\Vert \varepsilon_{i}+w_{i}^\top \beta_{*}\right\Vert _{L^{8}(P)}^{2}+O(1)=O(1).\label{eq: known var power eq 2}
\end{align}

Observe that 
$$El_{i}(g_{0})^{2}=E\left(z_{i}\varepsilon_{i}+z_{i}(w_{i}^\top \beta_{*}+z_{i}h_{n})\right)^{2}=E(z_{i}^{2}\varepsilon_{i}^{2})+E(z_{i}^{2}(w_{i}^\top \beta_{*}+z_{i}h_{n})^{2})\geq\sigma_{z}^{2}\sigma_{\varepsilon}^{2}.$$
Also, we have $El_{i}(g_{0})^{2}\geq[El_{i}(g_{0})]^{2}=\sigma_{z}^{4}h_{n}^{2}$.
Hence, 
$$El_{i}(g_{0})^{2}\geq(\sigma_{z}^{4}h_{n}^{2}\vee\sigma_{z}^{2}\sigma_{\varepsilon}^{2})=\sigma_{z}^{2}(\sigma_{z}^{2}h_{n}^{2}\vee\sigma_{\varepsilon}^{2}).$$
This, together with (\ref{eq: known var power eq 2}) and Assumption
\ref{assu: lyapunov CLT}, implies that 
\begin{equation}
\frac{El_{i}(g_{0})^{4}}{\left[El_{i}(g_{0})^{2}\right]^{2}}\leq\frac{O(1)\left[\sigma_{z}(\sigma_{z}|h_{n}|\vee1)\right]^{4}}{\left[\sigma_{z}^{2}(\sigma_{z}^{2}h_{n}^{2}\vee\sigma_{\varepsilon}^{2})\right]^{2}}\leq O(1)\frac{\sigma_{z}^{4}h_{n}^{4}\vee1}{\sigma_{z}^{4}h_{n}^{4}\vee c^{4}}\leq O(1)(1\vee c^{-4}).\label{eq: known var power eq 3}
\end{equation} 
It follows, by (\ref{eq: known var power eq 1}) and (\ref{eq: known var power eq 3}),
that $n^{-1}\sum_{i=1}^{n}\left(l_{i}(g_{0})^{2}/El_{i}(g_{0})^{2}-1\right)=o_{P}(1)$,
which means that 
\begin{equation}
\frac{n^{-1}\sum_{i=1}^{n}l_{i}(g_{0})^{2}}{El_{i}(g_{0})^{2}}=1+o_{P}(1).\label{eq: known var power eq 4}
\end{equation}
By Markov's inequality, we have that, $\forall M>0$, 
\begin{align*}
P\left[\left(\frac{n^{-1/2}\sum_{i=1}^{n}\left(l_{i}(g_{0})-El_{i}(g_{0})\right)}{\sqrt{El_{i}(g_{0})^{2}}}\right)^{2}>M\right]
&\leq M^{-1}\frac{E\left[l_{i}(g_{0})-El_{i}(g_{0})\right]^{2}}{El_{i}(g_{0})^{2}}\\
&=M^{-1}\frac{El_{i}(g_{0})^{2}-\left[El_{i}(g_{0})\right]^{2}}{El_{i}(g_{0})^{2}}\leq M^{-1}.
\end{align*}
Hence, 
\begin{equation}
\frac{n^{-1/2}\sum_{i=1}^{n}\left(l_{i}(g_{0})-El_{i}(g_{0})\right)}{\sqrt{El_{i}(g_{0})^{2}}}=O_{P}(1).\label{eq: known var power eq 5}
\end{equation}
Lastly, we observe that
\begin{align*}
El_{i}(g_{0})^{2}
&=E\left[z_{i}(\varepsilon_{i}+w_{i}^\top \beta_{*})+z_{i}^{2}h_{n}\right]^{2}
\\
&\overset{(i)}{\leq}2Ez_{i}^{2}(\varepsilon_{i}+w_{i}^\top \beta_{*})^{2}+2Ez_{i}^{4}h_{n}^{2}\\
&\overset{(ii)}{\leq}2\sigma_{z}^{2}\sqrt{E(z_{i}\sigma_{z}^{-1})^{4}E(\varepsilon_{i}+w_{i}^\top \beta_{*})^{4}}+2E(z_{i}\sigma_{z}^{-1})^{4}\sigma_{z}^{4}h_{n}^{2}
\\
&\overset{(iii)}{\leq}O(1)+O(1)\sigma_{z}^{4}h_{n}^{2},
\end{align*}
where $(i)$ follows according to    the elementary inequality $(a+b)^{2}\leq2a^{2}+2b^{2}$,
$(ii)$ follows by Holder's inequality and $(iii)$ is determined  by Assumption
\ref{assu: lyapunov CLT}. Since $El_{i}(g_{0})=\sigma_{z}^{2}h_{n}$,
we have 
\begin{equation}
\left|\frac{n^{-1/2}\sum_{i=1}^{n}El_{i}(g_{0})}{\sqrt{El_{i}(g_{0})^{2}}}\right|\geq\frac{\sqrt{n}\sigma_{z}^{2}|h_{n}|}{\sqrt{O(1)+O(1)\sigma_{z}^{4}h_{n}^{2}}}=\left(\frac{O(1)}{n\sigma_{z}^{2}h_{n}^{2}}+O(n^{-1})\right)^{-1/2}\rightarrow\infty,\label{eq: known var power eq 6}
\end{equation}
where the last step follows by (\ref{eq: known var power eq 0}).
The desired result follows by (\ref{eq: known var power eq 6}), (\ref{eq: known var power eq 5})
and (\ref{eq: known var power eq 4}), together with Slutsky's lemma. 
\end{proof}

\subsection{Proof of Theorem \ref{thm: unknown variance size X}}

In the rest of the article, we recall the definitions from Section \ref{sec: extension}:
$z_{i}=a^\top x_{i}/(a^\top a)$, $w_{i}=(I_{p}-aa^\top/(a^\top a))x_{i}$, $\pi_{*}=U_{a}^\top \beta_{*}$
and $\tilde{w}_{i}=U_{a}^\top w_{i}$. 

We need to derive some auxiliary results before we can prove Theorem
\ref{thm: unknown variance size X}. The proof of the following lemma
is similar to that of Theorem 7.1 of \citet{bickel2009simultaneous}
and   thus is omitted. 
\begin{lem}
\label{lem: DS1}Let $Y\in\mathbb{R}^{n}$ and $X\in\mathbb{R}^{n\times p}$.
Let $\hat{\xi}$ be any vector satisfying $\|n^{-1}X^{\top}(Y-X\hat{\xi})\|_{\infty}\leq\eta$.
Suppose that there exists $\xi_{*}$ such that $\|n^{-1}X^{\top}(Y-X\xi_{*})\|_{\infty}\leq\eta$
and $\|\hat{\xi}\|_{1}\leq\|\xi_{*}\|_{1}$. If $s_{*}=\|\xi_{*}\|_{0}$
and 
\begin{equation} \label{eq: RE condition X}
\min_{J_{0}\subseteq\{1,\cdots,p\},|J_{0}|\leq s_{*}}\min_{\delta\neq0,\|\delta_{J_{0}^{c}}\|_{1}\leq\|\delta_{J_{0}}\|_{1}}\frac{\|X\delta\|_{2}}{\sqrt{n}\|\delta_{J_{0}}\|_{2}}\geq\kappa,
\end{equation}
then $\|\delta\|_{1}\leq8\eta s_{*}\kappa^{-2}$ and $\delta^{\top}X^{\top}X\delta/n\leq16\eta^{2}s_{*}\kappa^{-2}$,
where $\delta=\hat{\xi}-\xi_{*}$. 
\end{lem}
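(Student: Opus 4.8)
The plan is to reproduce the deterministic core of the Dantzig-selector analysis of \citet{bickel2009simultaneous}, since the statement is a purely algebraic oracle inequality: the randomness of $Y$ and $X$ enters only through the hypotheses, which the lemma takes as given. First I would use the two feasibility conditions simultaneously. Because $\|n^{-1}X^{\top}(Y-X\hat\xi)\|_{\infty}\le\eta$ and $\|n^{-1}X^{\top}(Y-X\xi_*)\|_{\infty}\le\eta$, subtracting the two vectors inside the norm and applying the triangle inequality gives $\|n^{-1}X^{\top}X\delta\|_{\infty}\le 2\eta$, where $\delta=\hat\xi-\xi_*$. This is the only place $Y$ is used; everything afterward is linear algebra on $X$ and $\delta$.

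Second, I would extract the cone condition from the $\ell_1$-minimality $\|\hat\xi\|_1\le\|\xi_*\|_1$. Setting $J_0=\mathcal{M}(\xi_*)$ so that $|J_0|=s_*$ and writing $\hat\xi=\xi_*+\delta$, the decomposition $\|\xi_*\|_1\ge\|\xi_*+\delta\|_1=\|\xi_{*,J_0}+\delta_{J_0}\|_1+\|\delta_{J_0^{c}}\|_1\ge\|\xi_*\|_1-\|\delta_{J_0}\|_1+\|\delta_{J_0^{c}}\|_1$ forces $\|\delta_{J_0^{c}}\|_1\le\|\delta_{J_0}\|_1$. Hence $\delta$ lies in the cone over which the restricted eigenvalue condition (\ref{eq: RE condition X}) is assumed, so $\|X\delta\|_2^2/n\ge\kappa^2\|\delta_{J_0}\|_2^2$.

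Third, I would close the estimate by $\ell_1$–$\ell_\infty$ duality: $\|X\delta\|_2^2/n=\delta^{\top}(n^{-1}X^{\top}X\delta)\le\|\delta\|_1\,\|n^{-1}X^{\top}X\delta\|_{\infty}\le 2\eta\|\delta\|_1$, while the cone condition together with Cauchy--Schwarz gives $\|\delta\|_1=\|\delta_{J_0}\|_1+\|\delta_{J_0^{c}}\|_1\le 2\|\delta_{J_0}\|_1\le 2\sqrt{s_*}\,\|\delta_{J_0}\|_2$. Chaining the three bounds yields $\kappa^2\|\delta_{J_0}\|_2^2\le 4\eta\sqrt{s_*}\,\|\delta_{J_0}\|_2$, hence $\|\delta_{J_0}\|_2\le 4\eta\sqrt{s_*}\kappa^{-2}$; substituting back gives $\|\delta\|_1\le 8\eta s_*\kappa^{-2}$ and then $\delta^{\top}X^{\top}X\delta/n\le 2\eta\|\delta\|_1\le 16\eta^2 s_*\kappa^{-2}$, which are exactly the two claimed bounds.

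There is no genuine obstacle here: the argument is a handful of elementary inequalities. The one point I would be careful to spell out is that (\ref{eq: RE condition X}) is assumed uniformly over all index sets of cardinality at most $s_*$, which is precisely what legitimizes applying it to the a priori unknown set $J_0=\mathcal{M}(\xi_*)$. I would keep the proof entirely deterministic so that it applies verbatim to both $\hat\pi$ and $\hat\gamma$ in the proof of Theorem \ref{thm: unknown variance size X}, deferring verification of the restricted eigenvalue inequality for $\tilde W$ (via \citet{rudelson2013reconstruction}) to those applications.
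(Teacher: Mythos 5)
Your proof is correct, complete, and yields exactly the stated constants; it is precisely the deterministic Dantzig-selector argument of Theorem 7.1 of \citet{bickel2009simultaneous}, which is the proof the paper itself defers to (the authors omit it, citing that reference). The three steps — the $2\eta$ bound on $\|n^{-1}X^{\top}X\delta\|_{\infty}$ from the two feasibility constraints, the cone condition from $\ell_1$-minimality with $J_0=\mathcal{M}(\xi_*)$, and the chaining of the restricted eigenvalue bound with H\"older and Cauchy--Schwarz — are exactly the intended route.
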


\begin{lem}
\label{lem: feasibility null hypo} Suppose that Assumption \ref{assu: regularity condition}
and $H_{0}$ in (\ref{eq: null hypo}) hold. Consider the optimization
problem \eqref{eq: dantzig pi}. Let $v_{i}=y_{i}-z_{i}g_{0}$,
$\sigma_{v}^{2}=Ev_{i}^{2}$ and $\rho_{*}=\sigma_{\varepsilon}/\sigma_{v}$.
There exists a constant $C>0$, such that for any $\eta,\lambda>C\sqrt{n^{-1}\log p}$,
$\rho_{0}\leq[1+c_{2}c_{1}^{-1}(c_{3}^{-1}-1)]^{-1/2}$, we have
\[
P\left((\pi_{*},\rho_{*})\ {\rm and}\ \gamma_{*}\ {\rm are\ in\ the\ feasible\ region\ in\  }\ \eqref{eq: dantzig pi}\right)\rightarrow1.
\]
\end{lem}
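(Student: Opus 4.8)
The claim is that the population objects $(\pi_*,\rho_*)$ and $\gamma_*$ lie, with probability tending to one, in the feasible region of \eqref{eq: dantzig pi} and of \eqref{eq: dantzig gamma}, respectively. The first step I would take is to record the algebraic identities available under $H_0$: the stabilized model \eqref{eq:consolidated} gives $Y-Zg_0-\tilde W\pi_*=\varepsilon$ and $Y-Zg_0=V$, where $V=(v_1,\dots,v_n)^\top$ with $v_i=w_i^\top\beta_*+\varepsilon_i$, and the graphical model \eqref{eq: graphical model} gives $Z-\tilde W\gamma_*=U$ with $U=(u_1,\dots,u_n)^\top$. Under Assumption \ref{assu: regularity condition}(i), $\varepsilon$ is independent of $\tilde W$, and $U$ — the population regression residual of $Z$ on $\tilde W$ — is also independent of $\tilde W$ by joint Gaussianity. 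After these substitutions every constraint becomes a probabilistic inequality between functionals of Gaussian vectors, together with one purely deterministic ratio to pin down.

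The probabilistic part rests on three standard ingredients. (i) A Gaussian maximal inequality: writing $\tilde W_{\cdot j}$ for the $j$-th column of $\tilde W$, conditionally on $\tilde W$ the $j$-th coordinate of $\tilde W^\top\varepsilon$ is $\mathcal N(0,\sigma_\varepsilon^2\|\tilde W_{\cdot j}\|_2^2)$ and that of $\tilde W^\top U$ is $\mathcal N(0,\sigma_u^2\|\tilde W_{\cdot j}\|_2^2)$; since $E[\tilde W^\top\tilde W]/n=U_a^\top\Sigma_X U_a$ has eigenvalues in $[c_1,c_2]$, a union bound gives $\max_j\|\tilde W_{\cdot j}\|_2^2\le 2c_2 n$ with probability $\to1$ (using $n\gg\log p$, which is implied by $\|\gamma_*\|_0=o(\sqrt n/\log p)$), whence $\|\tilde W^\top\varepsilon\|_\infty\vee\|\tilde W^\top U\|_\infty=O_P(\sqrt{n\log p})$. (ii) $\chi^2$-concentration: $\|\varepsilon\|_2^2=n\sigma_\varepsilon^2(1+o_P(1))$, $\|V\|_2^2=n\sigma_v^2(1+o_P(1))$ and $\|Z\|_2^2=n\sigma_z^2(1+o_P(1))$, each in particular at least half its mean with high probability. (iii) The cross term $\pi_*^\top\tilde W^\top\varepsilon$ is, conditionally on $\tilde W$, centered Gaussian with variance $\sigma_\varepsilon^2\|\tilde W\pi_*\|_2^2=\sigma_\varepsilon^2\|W\beta_*\|_2^2=O_P(n)$, because $E(w_i^\top\beta_*)^2$ is bounded (verified below), so $\pi_*^\top\tilde W^\top\varepsilon=o_P(n)$.

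Feeding these in: the first constraint of \eqref{eq: dantzig pi}, $\|\tilde W^\top\varepsilon\|_\infty\le\eta\rho_*\sqrt n\|V\|_2$, has right side $\ge\eta\rho_*\sqrt n\,\sqrt{n\sigma_v^2/2}\asymp\eta n$ — note that $\rho_*=\sigma_\varepsilon/\sigma_v$ conveniently cancels the $\sigma_v$ — which dominates the $O_P(\sqrt{n\log p})$ left side as soon as $\eta>C\sqrt{n^{-1}\log p}$ for a $C$ depending only on $c_1,\dots,c_4$; the constraint of \eqref{eq: dantzig gamma} is handled identically with $(U,Z)$ in place of $(\varepsilon,V)$. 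The second constraint of \eqref{eq: dantzig pi}, $V^\top\varepsilon\ge\rho_0\rho_*\|V\|_2^2/2$, reads $\pi_*^\top\tilde W^\top\varepsilon+\|\varepsilon\|_2^2\ge\rho_0\rho_*\|V\|_2^2/2$, i.e. $n\sigma_\varepsilon^2(1+o_P(1))\ge\rho_0\sigma_\varepsilon\sigma_v n(1+o_P(1))/2$, which holds once $\rho_0\le\sigma_\varepsilon/\sigma_v=\rho_*$ (the factor $1/2$ even leaves slack).

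All that is left is the deterministic statement $\rho_0\le\rho_*\le1$, which I expect to be the only genuinely non-mechanical point. The upper bound is immediate from $\sigma_v^2=\sigma_\varepsilon^2+E(w_i^\top\beta_*)^2\ge\sigma_\varepsilon^2$. For the lower bound I would write $w_i^\top\beta_*=\tilde w_i^\top\pi_*$ with $\pi_*=U_a^\top\beta_*$, so that $E(w_i^\top\beta_*)^2=\pi_*^\top(U_a^\top\Sigma_X U_a)\pi_*\le c_2\|\pi_*\|_2^2\le c_2\|\beta_*\|_2^2\le c_2 c_1^{-1}\beta_*^\top\Sigma_X\beta_*=c_2 c_1^{-1}(\sigma_y^2-\sigma_\varepsilon^2)$; Assumption \ref{assu: regularity condition}(ii)--(iii) bounds $\sigma_y^2\le\sigma_\varepsilon^2/c_3$ (taking the two constants there equal to their minimum, without loss of generality), hence $\sigma_v^2/\sigma_\varepsilon^2\le 1+c_2 c_1^{-1}(c_3^{-1}-1)$ and therefore $\rho_*=\sigma_\varepsilon/\sigma_v\ge[1+c_2 c_1^{-1}(c_3^{-1}-1)]^{-1/2}\ge\rho_0$ by hypothesis. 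The main obstacle is thus not any single estimate but lining up the prescribed threshold on $\rho_0$ with this variance ratio; once that is in place, everything else is routine bookkeeping on Gaussian tail bounds.
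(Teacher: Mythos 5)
Your proposal is correct and follows essentially the same route as the paper: the same reduction of each constraint, under $H_0$, to concentration of $\|\tilde W^\top\varepsilon\|_\infty$, $\|\tilde W^\top u\|_\infty$, $\|V\|_2^2$, $\|Z\|_2^2$ and $V^\top\varepsilon$ (with $\rho_*\sigma_v=\sigma_\varepsilon$ cancelling in the first constraint), followed by the identical deterministic computation $\sigma_v^2=\beta_*^\top\Sigma_W\beta_*+\sigma_\varepsilon^2\le[1+c_2c_1^{-1}(c_3^{-1}-1)]\sigma_\varepsilon^2$ that yields $\rho_0\le\rho_*\le1$ (and you correctly flag the paper's $c_3$-versus-$c_4$ labelling of the signal-to-noise constants). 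The only difference is cosmetic: where you condition on $\tilde W$ and use the exact Gaussian maximal inequality plus a union bound on column norms, the paper bounds the sub-exponential norm of $\tilde w_{i,j}\varepsilon_i$ and applies Bernstein's inequality unconditionally --- both need $\log p=o(n)$ and give the same $O_P(\sqrt{n\log p})$ rate.
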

\begin{proof}
Let $V=Y-Zg_{0}$ and notice that under Assumption \ref{assu: regularity condition},
$Z-\tilde{W}\gamma_{*}=u$ and $E\tilde{w}_{i}u_{i}=0$. Since $u_{i}\sigma_{u}^{-1}\sim \mathcal N(0,1)$
is independent of $\tilde{w}_{i}\sim \mathcal N(0,\Sigma_{\tilde{W}})$ with
the eigenvalues of $\Sigma_{\tilde{W}}=E\tilde{W}^\top \tilde{W}/n$ bounded
away from zero and infinity, it follows that there exists a constant
that upper bounds the sub-exponential norm of each entry of $\tilde{w}_{i}u_{i}\sigma_{u}^{-1}$.
To see this, note  that, by the moment generating function of $\mathcal N(0,1)$,
for $t>0$, 
$$E\exp(t\tilde{w}_{i,j}u_{i}\sigma_{u}^{-1})=E[E(\exp(t\tilde{w}_{i,j}u_{i}\sigma_{u}^{-1})\mid\tilde{w}_{i,j})]=E\exp(\tilde{w}_{i,j}^{2}t^{2}/2).$$
Since $\tilde{w}_{i,j}^{2}$ has bounded sub-exponential norm (by
Lemma 5.14 of \citet{vershynin2010introduction}), Lemma 5.15 of \citet{vershynin2010introduction}
implies that for small enough $t$, $E\exp(t\tilde{w}_{i,j}u_{i}\sigma_{u}^{-1})=E\exp(\tilde{w}_{i,j}^{2}t^{2}/2)$
is bounded by some constant. Hence, Equation (5.16) in \citet{vershynin2010introduction}
implies that $\tilde{w}_{i,j}u_{i}\sigma_{u}^{-1}$ has bounded the sub-exponential
norm. 

By Proposition 5.16 of \citet{vershynin2010introduction} and the
union bound, we have that $\forall t_{0}>0$, 
\[
P\left(\|n^{-1}\tilde{W}^\top u\sigma_{u}^{-1}\|_{\infty}>t_{0}\sqrt{n^{-1}\log p}\right)\leq2p\exp\left[-\min\left(\frac{t_{0}^{2}\log p}{K^{2}},\frac{t_{0}\sqrt{n\log p}}{K}\right)\right],
\]
where $K>0$ is a constant depending only on the constants in Assumption
\ref{assu: regularity condition}. Hence, there exists a constant
$M_{1}>0$, such that $P\left(\|n^{-1}\tilde{W}^\top u\|_{\infty}>M_{1}\sigma_{u}\sqrt{n^{-1}\log p}\right)\rightarrow0$.
It follows that
\begin{multline}
P\left(\|n^{-1}\tilde{W}^\top (Z-\tilde{W}\gamma_{*})\|_{\infty}>2c_{3}^{-1/2}M_{1}\sqrt{n^{-1}\log p}n^{-1/2}\|Z\|_{2}\right)\\
\leq P\left(\|n^{-1}\tilde{W}^\top u\|_{\infty}>M_{1}\sigma_{u}\sqrt{n^{-1}\log p}\right)+P\left(2\frac{\sqrt{c_{3}}\sigma_{u}}{\sigma_{z}}\geq\frac{n^{-1/2}\|Z\|_{2}}{\sigma_{z}}\right)\overset{(i)}{=}o(1),\label{eq: feasibility eq 1}
\end{multline}
where $(i)$ follows by $2\sqrt{c_{3}}\sigma_{u}/\sigma_{z}\geq2$
(Assumption \ref{assu: regularity condition}) and $n^{-1/2}\|Z\sigma_{z}^{-1}\|_{2}=1+o_{P}(1)$.
By the Law of Large Numbers : $n^{-1}\|Z\sigma_{z}^{-1}\|_{2}^{2}$ is the average of $n$
independent $\chi^{2}(1)$ random variables. 

Notice that under $H_{0}$ in (\ref{eq: null hypo}), $V-\tilde{W}\pi_{*}=\varepsilon$.
By an analogous argument, there exists a constant $M_{3}>0$ such
that
\begin{equation}
P\left(\|n^{-1}\tilde{W}^\top (V-\tilde{W}\pi_{*})\|_{\infty}>M_{3}\rho_{*}\sqrt{n^{-1}\log p}n^{-1/2}\|V\|_{2}\right)\rightarrow0.\label{eq: feasibility eq 2}
\end{equation}
Since $V=\tilde{W}\pi_{*}+\varepsilon=W\beta_{*}+\varepsilon$, we
have that $\sigma_{v}^{2}=\beta_{*}^\top \Sigma_{W}\beta_{*}+\sigma_{\varepsilon}^{2}$.
Assumption \ref{assu: regularity condition} implies that 
$$\beta_{*}^\top \Sigma_{X}\beta_{*}+\sigma_{\varepsilon}^{2}=\sigma_{y}^{2}\leq\sigma_{\varepsilon}^{2}/c_{3}$$
and thus $\beta_{*}^\top \Sigma_{X}\beta_{*}\leq(c_{3}^{-1}-1)\sigma_{\varepsilon}^{2}$.
Therefore, 
$$\|\beta_{*}\|_{2}^{2}\leq(\beta_{*}^\top \Sigma_{X}\beta_{*})/\lambda_{\min}(\Sigma_{X})\leq c_{1}^{-1}\beta_{*}^\top \Sigma_{X}\beta_{*}\leq c_{1}^{-1}(1-c_{3}^{-1})\sigma_{\varepsilon}^{2}.$$
Observe that $\Sigma_{W}=M_{a}\Sigma_{X}M_{a}$, where $M_{a}=I_{p}-aa^\top/(a^\top a)$
is a projection matrix. Hence, $\lambda_{\max}(\Sigma_{W})\leq\lambda_{\max}(\Sigma_{X})$
and thus 
$$\sigma_{v}^{2}=\beta_{*}^\top \Sigma_{W}\beta_{*}+\sigma_{\varepsilon}^{2}\leq\lambda_{\max}(\Sigma_{X})\|\beta_{*}\|_{2}^{2}+\sigma_{\varepsilon}^{2}\leq[1+c_{2}c_{1}^{-1}(c_{3}^{-1}-1)]\sigma_{\varepsilon}^{2}.$$
Since $\rho_{0}\leq[1+c_{2}c_{1}^{-1}(c_{3}^{-1}-1)]^{-1/2}\leq\sigma_{\varepsilon}/\sigma_{v}$,
we have that $\sigma_{\varepsilon}^{2}=\rho_{*}\sigma_{v}\sigma_{\varepsilon}\geq\sigma_{v}^{2}\rho_{*}\rho_{0}$.
By the Law of Large Numbers , 
$$n^{-1}V^\top (V-\tilde{W}\pi_{*})=n^{-1}V^\top \varepsilon=(1+o_{P}(1))\sigma_{\varepsilon}^{2},$$
$n^{-1}\|V\|_{2}^{2}=(1+o_{P}(1))\sigma_{v}^{2}$ and thus
\begin{equation}
P\left(\frac{n^{-1}V^\top (V-\tilde{W}\pi_{*})}{\rho_{0}\rho_{*}n^{-1}\|V\|_{2}^{2}}\geq\frac{1}{2}\right)=P\left(\frac{\sigma_{\varepsilon}^{2}(1+o_{P}(1))}{\sigma_{v}^{2}(1+o_{P}(1))\rho_{0}\rho_{*}}\geq\frac{1}{2}\right)\rightarrow1.\label{eq: feasibility eq 3}
\end{equation}
The desired result follows by (\ref{eq: feasibility eq 1}), (\ref{eq: feasibility eq 2}),
(\ref{eq: feasibility eq 3}) and the fact that 
$$\rho_{*}=\sigma_{\varepsilon}\sigma_{v}^{-1}\geq[1+c_{2}c_{1}^{-1}(c_{3}^{-1}-1)]^{-1/2}\geq\rho_{0}.$$ 
\end{proof}

\begin{lem}
\label{lem: nondegeneracy}If $n^{-1}V^{\top}(V-\tilde{W}\hat{\pi})\geq\bar{\eta}$,
then $n^{-1}(V-\tilde{W}\hat{\pi})^{\top}(V-\tilde{W}\hat{\pi})\geq\bar{\eta}^{2}/(n^{-1}V^{\top}V).$ \end{lem}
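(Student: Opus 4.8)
The plan is to recognize this as nothing more than the Cauchy--Schwarz inequality applied to two vectors in $\mathbb{R}^{n}$. Write $r:=V-\tilde{W}\hat{\pi}\in\mathbb{R}^{n}$ for the residual vector, so that the hypothesis reads $n^{-1}V^{\top}r\ge\bar{\eta}$ and the desired conclusion reads $n^{-1}r^{\top}r\ge\bar{\eta}^{2}/(n^{-1}V^{\top}V)$.

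First I would apply the Cauchy--Schwarz inequality to the inner product $V^{\top}r$, obtaining $(V^{\top}r)^{2}\le(V^{\top}V)(r^{\top}r)$. Dividing both sides by $n^{2}$ gives $\bigl(n^{-1}V^{\top}r\bigr)^{2}\le\bigl(n^{-1}V^{\top}V\bigr)\bigl(n^{-1}r^{\top}r\bigr)$. Next, since $\bar{\eta}\ge 0$ in the regime in which the lemma is invoked and $n^{-1}V^{\top}r\ge\bar{\eta}$, squaring preserves the inequality, so $\bigl(n^{-1}V^{\top}r\bigr)^{2}\ge\bar{\eta}^{2}$. Chaining these two bounds yields $\bar{\eta}^{2}\le\bigl(n^{-1}V^{\top}V\bigr)\bigl(n^{-1}r^{\top}r\bigr)$, and dividing by the (strictly positive) quantity $n^{-1}V^{\top}V$ produces exactly the stated inequality.

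There is essentially no obstacle in this argument; it is a one-line consequence of Cauchy--Schwarz. The only points that merit a word of care are the nonnegativity of $\bar{\eta}$, which is what licenses the squaring step (and which holds in the intended application, where $\bar{\eta}$ plays the role of a positive lower bound coming from the feasibility constraints of \eqref{eq: dantzig pi}), and the nondegeneracy $n^{-1}V^{\top}V>0$, which holds almost surely under Assumption \ref{assu: regularity condition} since $V=Y-Zg_{0}$ is a nonzero Gaussian vector. Both can be invoked without further work in the context in which this lemma is used.
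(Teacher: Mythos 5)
Your proof is correct, and it takes a genuinely different (and more elementary) route than the paper's. The paper proves the lemma by a Lagrangian-style penalization: for any $t\ge 0$ it adds $t\bigl(\bar{\eta}-n^{-1}V^{\top}(V-\tilde{W}\hat{\pi})\bigr)\le 0$ to the objective, lower-bounds the result by the unconstrained minimum of the penalized quadratic (which evaluates to the parabola $t\bar{\eta}-\tfrac14 t^{2}n^{-1}V^{\top}V$), and then optimizes over $t$ at $t=2\bar{\eta}/(n^{-1}V^{\top}V)$. This is, of course, exactly the classical ``quadratic in $t$'' proof of Cauchy--Schwarz in disguise, so your direct appeal to Cauchy--Schwarz reaches the same bound in one line and makes the mechanism transparent. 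Your version is also marginally more general, since it never uses that the residual has the form $V-\tilde{W}\hat{\pi}$: any vector $r$ with $n^{-1}V^{\top}r\ge\bar{\eta}$ works. The two caveats you flag are the right ones and apply equally to the paper's argument: the choice $t=2\bar{\eta}/(n^{-1}V^{\top}V)\ge 0$ there likewise presupposes $\bar{\eta}\ge 0$, and in the intended application $\bar{\eta}=\rho_{0}\hat{\rho}\,n^{-1}\|V\|_{2}^{2}/2>0$ and $n^{-1}V^{\top}V>0$ almost surely, so both squaring and dividing are licensed. Nothing is missing.
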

\begin{proof}
Since $n^{-1}V^{\top}(V-\tilde{W}\hat{\pi})\geq\bar{\eta}$, we have
that, for any $t\geq0$,
\begin{align*}
n^{-1}(V-\tilde{W}\hat{\pi})^{\top}(V-\tilde{W}\hat{\pi})&\geq n^{-1}(V-\tilde{W}\hat{\pi})^{\top}(V-\tilde{W}\hat{\pi})+t\left(\bar{\eta}-n^{-1}V^{\top}(V-\tilde{W}\hat{\pi})\right)\\
&\overset{(i)}{\geq}\min_{\gamma}\left\{ n^{-1}(V-\tilde{W}\gamma)^{\top}(V-\tilde{W}\gamma)+t\left(\bar{\eta}-n^{-1}V^{\top}(V-\tilde{W}\gamma)\right)\right\} 
\\
&=t\bar{\eta}-\frac{1}{4}t^{2}n^{-1}V^{\top}V,
\end{align*}
where $(i)$ follows by the first-order condition of quadratic optimizations.
The desired result follows by maximizing the last line with respect
to $t$ with $t=2\bar{\eta}/\left(n^{-1}V^{\top}V\right)$.
\end{proof}
We now proceed to prove Theorem \ref{thm: unknown variance size X}. 
\begin{proof}[\textbf{Proof of Theorem \ref{thm: unknown variance size X}}]
 Let $V=Y-Zg_{0}$, $s_{*}=\|\gamma_{*}\|_{0}$, $\eta_{\pi}=\eta n^{-1/2}\|V\|_{2}$
and $\lambda_{\gamma}=\lambda n^{-1/2}\|Z\|_{2}$. Notice that 
\begin{equation}
n^{-1/2}(V-\tilde{W}\hat{\pi})^{\top}(Z-\tilde{W}\hat{\gamma})=\underbrace{n^{-1/2}(V-\tilde{W}\hat{\pi})^{\top}u}_{I_{1}}+\underbrace{n^{-1/2}(V-\tilde{W}\hat{\pi})^{\top}\tilde{W}(\gamma_{*}-\hat{\gamma})}_{I_{2}}.\label{eq:size single hypo eq 1}
\end{equation}
Since the eigenvalues of $E\tilde{W}^\top \tilde{W}/n$ is bounded away
from zero and infinity, it follows, as a simple consequence of Theorem
6 in \citet{rudelson2013reconstruction}, that there exists a constant
$\kappa>0$, such that $P(\mathcal{D}_{n}(s_{*},\kappa))\to1$, where
\begin{equation}
\label{eq: RE condition tilde W}
\mathcal{D}_{n}(s_{*},\kappa)=\left\{ \min_{J_{0}\subseteq\{1,\cdots,p\},|J_{0}|\leq s_{*}}\min_{\delta\neq0,\|\delta_{J_{0}^{c}}\|_{1}\leq\|\delta_{J_{0}}\|_{1}}\frac{\|\tilde{W}\delta\|_{2}}{\sqrt{n}\|\delta_{J_{0}}\|_{2}}>\kappa\right\} .
\end{equation}
Define the event $\mathcal{M}=\Bigl\{(\pi_{*},\rho_{*})\ {\rm and}\ \gamma_{*}\ {\rm are\ in\ the\ feasible\ region\ in\  }\ \eqref{eq: dantzig pi} \Bigl\}$.
By Lemma \ref{lem: feasibility null hypo}, with appropriate choice
of tuning parameters as specified in the theorem, we have $P\left(\mathcal{M}\right)\rightarrow1$
and thus 
\begin{equation}
P\left(\mathcal{M}\bigcap\mathcal{D}_{n}(s_{*},\kappa)\right)\rightarrow1.\label{eq: size single hypo eq 2}
\end{equation}
We apply Lemma \ref{lem: DS1} with $(Y,X,\xi_{*})$ replaced by $(Z,\tilde{W},\gamma_{*})$
and obtain that, on the event $\mathcal{M}\bigcap\mathcal{D}_{n}(s_{*},\kappa)$,
\begin{equation}
\|\hat{\gamma}-\gamma_{*}\|_{1}\leq8\lambda_{\gamma}s_{*}\kappa^{-2}\ {\rm and}\ n^{-1/2}\|\tilde{W}(\hat{\gamma}-\gamma_{*})\|_{2}\leq4\lambda_{\gamma}\sqrt{s_{*}}\kappa^{-1}.\label{eq: size single hypo eq 3}
\end{equation}
Thus, on $\mathcal{M}\bigcap\mathcal{D}_{n}(s_{*},\kappa)$, we have
the bound 
$$|I_{2}|\leq n^{1/2}\|n^{-1}\tilde{W}^{\top}(V-\tilde{W}\hat{\pi})\|_{\infty}\|\hat{\gamma}-\gamma_{*}\|_{1}\leq8n^{1/2}\lambda_{\gamma}\eta_{\pi}s_{*}\kappa^{-2},$$
where in the last step we utilized 
$$\|n^{-1}\tilde{W}^{\top}(V-\tilde{W}\hat{\pi})\|_{\infty}\leq\eta\hat{\rho}n^{-1/2}\|V\|_{2}\leq\eta_{\pi}$$
with $\hat{\rho}\leq1$, from the constraints in optimization problem
(\ref{eq: dantzig pi}). Moreover, by constraints in (\ref{eq: dantzig pi})
and Lemma \ref{lem: nondegeneracy}, on $\mathcal{M}\bigcap\mathcal{D}_{n}(s_{*},\kappa)$,
we have that 
$$\hat{\sigma}_{\varepsilon}\geq\rho_{0}\hat{\rho}n^{-1/2}\|V\|_{2}/2\geq\rho_{0}^{2}n^{-1/2}\|V\|_{2}/2$$
and thus, by $\sigma_{u}\geq c_{3}\sigma_{z}$, 
\begin{equation}
\left|\frac{I_{2}}{\hat{\sigma}_{\varepsilon}\sigma_{u}}\right|\leq\frac{16n\lambda_{\gamma}\eta_{\pi}s_{*}\kappa^{-2}}{c_{3}\sigma_{z}\rho_{0}^{2}\|V\|_{2}}=\frac{16\sqrt{n}\lambda\eta s_{*}\kappa^{-2}}{c_{3}\rho_{0}^{2}}\times\frac{n^{-1/2}\|Z\|_{2}}{\sigma_{z}}\overset{(i)}{=}o_{P}(1),\label{eq: size single hypo eq 4}
\end{equation}
where $(i)$ follows by $\rho_{0}^{-1}=O(1)$ and $\lambda,\eta\asymp\sqrt{n^{-1}\log p}$
with $s_{*}=o(\sqrt{n}/\log p)$ and $n^{-1/2}\|Z\|_{2}/\sigma_{z}=1+o_{P}(1)$.
Observe that by the Law of Large Numbers : $n^{-1}\|Z\sigma_{z}^{-1}\|_{2}^{2}$ is the average of $n$
independent $\chi^{2}(1)$ random variables.

For $I_{1}$, notice that under $H_{0}$ in (\ref{eq: null hypo}),
$u$ is independent of $\{V,\tilde{W}\}$. Since $\hat{\pi}$ and
$\hat{\sigma}_{\varepsilon}$ are computed using $\{V,\tilde{W}\}$,
it follows that $u$ is independent of $V-\tilde{W}\hat{\pi}$ and
$\hat{\sigma}_{\varepsilon}$. Thus, under $H_{0}$, 
$$I_{1}\hat{\sigma}_{\varepsilon}^{-1}\sigma_{u}^{-1}\mid(V,\tilde{W})\sim \mathcal N(0,1)$$
and thus $I_{1}\hat{\sigma}_{\varepsilon}^{-1}\sigma_{u}^{-1}\sim \mathcal N(0,1)$.
This, together with (\ref{eq: size single hypo eq 4}), implies that,
under $H_{0}$, 
\begin{equation}
S_{n}\frac{\hat{\sigma}_{u}}{\sigma_{u}}=\frac{n^{-1/2}(V-\tilde{W}\hat{\pi})^{\top}(Z-\tilde{W}\hat{\gamma})}{\hat{\sigma}_{\varepsilon}\sigma_{u}}\rightarrow^{d}\mathcal N(0,1).\label{eq: size single hypo eq 5}
\end{equation}

By (\ref{eq: size single hypo eq 3}) and $s_{*}=o(\sqrt{n}/\log p)$,
\begin{align}
\left|\hat{\sigma}_{u}-n^{-1/2}\|u\|_{2}\right|
&=\left|n^{-1/2}\|Z-\tilde{W}\hat{\gamma}\|_{2}-n^{-1/2}\|u\|_{2}\right|
\\
&\leq n^{-1/2}\|Z-\tilde{W}\hat{\gamma}-u\|_{2}
\\
&=n^{-1/2}\|\tilde{W}(\hat{\gamma}-\gamma_{*})\|_{2}
\\
&=O_{P}(\lambda n^{-1/2}\|Z\|_{2}\sqrt{s_{*}})=o_{P}(n^{-3/4}\|Z\|_{2}).
\end{align}
Therefore, 
\begin{align}
\frac{\left|\hat{\sigma}_{u}-\sigma_{u}\right|}{\sigma_{u}}
& \leq\frac{\left|\hat{\sigma}_{u}-n^{-1/2}\|u\|_{2}\right|}{\sigma_{u}}+\left|n^{-1/2}\|u\sigma_{u}^{-1}\|_{2}-1\right|
\\
&\overset{(i)}{=}\frac{o_{P}(n^{-3/4}\|Z\|_{2})}{\sigma_{u}}+o_{P}(1)
\\
&\overset{(ii)}{=}o_{P}(1),\label{eq: size single hypo eq 6}
\end{align}
where $(i)$ follows by the Law of Large Numbers  ($n^{-1}\|u\sigma_{u}^{-1}\|_{2}^{2}$
is the average of $n$ independent $\chi^{2}(1)$ random variables)
and $(ii)$ follows by $\sigma_{z}/\sigma_{u}\leq c_{3}^{-1}$ (Assumption
\ref{assu: regularity condition}) and $n^{-1/2}\|Z\|_{2}/\sigma_{z}=1+o_{P}(1)$
(as argued in (\ref{eq: size single hypo eq 4})).

By (\ref{eq: size single hypo eq 6}), $\hat{\sigma}_{u}/\sigma_{u}=1+o_{P}(1)$
and the desired result follows by (\ref{eq: size single hypo eq 5})
and Slutsky's lemma. 
\end{proof}

\subsection{Proof of Theorem \ref{thm: power unknown SigmaX}}

We need the some auxiliary results before we prove Theorem \ref{thm: power unknown SigmaX}.
\begin{lem}
\label{lem: sigma u characterization}Let Assumption \ref{assu: regularity condition}
hold. In (\ref{eq: graphical model}), $\sigma_{u}=(a^\top\Omega_Xa)^{-1/2}$. \end{lem}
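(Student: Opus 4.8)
The plan is to identify $u_i$ as an explicit linear functional of $x_i$ and compute its variance directly. Under Assumption \ref{assu: regularity condition}(i), $x_i$ is Gaussian, so $(z_i,\tilde w_i^\top)^\top$ is jointly Gaussian and $u_i=z_i-\tilde w_i^\top\gamma_*$ is the residual of the population linear projection of $z_i$ onto $\tilde w_i$; in particular $u_i=b^\top x_i$ for some $b\in\mathbb R^p$, and $u_i$ is uncorrelated with $\tilde w_i$. First I would pin down $b$. Using $z_i=a^\top x_i/(a^\top a)$ and the identity $\tilde w_i=U_a^\top w_i=U_a^\top x_i$ (which holds because $U_a^\top(I_p-aa^\top/(a^\top a))=U_a^\top U_aU_a^\top=U_a^\top$), matching the coefficients of $x_i$ in $u_i=z_i-\tilde w_i^\top\gamma_*$ gives
\[
b=\frac{a}{a^\top a}-U_a\gamma_*.
\]
Since $U_a^\top a=0$ — multiply $U_aU_a^\top=I_p-aa^\top/(a^\top a)$ on the right by $a$ to get $U_aU_a^\top a=0$, then multiply on the left by $U_a^\top$ and use $U_a^\top U_a=I_{p-1}$ — this yields $a^\top b=1$.

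Next I would use the orthogonality of the residual. The condition $E[\tilde w_i u_i]=U_a^\top\Sigma_X b=0$ says that $\Sigma_X b$ is orthogonal to the column space of $U_a$. That column space equals $a^\perp$, because $U_aU_a^\top$ is the orthogonal projection onto its own range and also equals $I_p-aa^\top/(a^\top a)$, the projection onto $a^\perp$. Hence $\Sigma_X b\in\mathrm{span}(a)$, i.e.\ $\Sigma_X b=\kappa a$ for some scalar $\kappa$, equivalently $b=\kappa\,\Omega_X a$. Combining this with $a^\top b=1$ gives $\kappa\,a^\top\Omega_X a=1$, so $\kappa=(a^\top\Omega_X a)^{-1}$, which is well-defined and strictly positive since $\Omega_X$ is positive definite by Assumption \ref{assu: regularity condition}(ii).

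Finally, I would conclude $\sigma_u^2=Eu_i^2=b^\top\Sigma_X b=b^\top(\kappa a)=\kappa\,(a^\top b)=\kappa=(a^\top\Omega_X a)^{-1}$, and taking square roots gives $\sigma_u=(a^\top\Omega_X a)^{-1/2}$. There is no substantive obstacle; the only points requiring a little care are the two linear-algebra facts $U_a^\top a=0$ and $\mathrm{col}(U_a)=a^\perp$, both of which follow immediately from $U_a^\top U_a=I_{p-1}$ together with $U_aU_a^\top=I_p-aa^\top/(a^\top a)$, and the fact that the Gaussianity assumption is what makes the decomposition $z_i=\tilde w_i^\top\gamma_*+u_i$ with $u_i\perp\tilde w_i$ legitimate so that $\gamma_*$ is the best linear predictor coefficient.
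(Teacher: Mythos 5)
Your proof is correct, and it reaches the result by a noticeably more economical route than the paper's. Both arguments ultimately rest on the same identification: the residual $u_i$ is a linear functional $b^\top x_i$ whose coefficient vector must satisfy $U_a^\top \Sigma_X b=0$, and since $\mathrm{col}(U_a)=a^\perp$ this forces $\Sigma_X b\in\mathrm{span}(a)$, i.e.\ $b\propto\Omega_X a$. The paper phrases this in terms of $\tilde\gamma_*=U_a\gamma_*$ and the projection $M_a=I_p-aa^\top/(a^\top a)$, obtaining $M_a\tilde\gamma_*=k_1\Omega_X a+\|a\|_2^{-2}a$ (note $M_a\tilde\gamma_*-a\|a\|_2^{-2}=-b$ in your notation, so $k_1=-\kappa$), and then has to pin down $k_1$ by evaluating the quadratic form $\tilde\gamma_*^\top M_a\Sigma_X M_a\tilde\gamma_*$ in two different ways before subtracting it from $Ez_i^2$. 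Your normalization observation $a^\top b=1$ (a one-line consequence of $U_a^\top a=0$) short-circuits all of that: it immediately gives $\kappa=(a^\top\Omega_X a)^{-1}$ and then $\sigma_u^2=b^\top\Sigma_X b=\kappa\,(a^\top b)=\kappa$ in a single step, with no need to compute $Ez_i^2$ or any auxiliary quadratic form. As a bonus, your argument makes explicit that $b=\Omega_X a/(a^\top\Omega_X a)$, i.e.\ the residual $u_i$ coincides exactly with the synthesized feature $z_i$ used in Section~\ref{sec:Methodology} when $\Sigma_X$ is known, which is a clarifying structural remark the paper's computation obscures. The only hypotheses you invoke — $U_a^\top U_a=I_{p-1}$, $U_aU_a^\top=I_p-aa^\top/(a^\top a)$, positive definiteness of $\Sigma_X$, and the defining orthogonality $E[\tilde w_iu_i]=0$ from the model for $z_i$ — are all available, so there is no gap.
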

\begin{proof}
We define $\tilde{\gamma}_{*}=U_{a}\gamma_{*}\in\mathbb{R}^{p}$ and
observe that $z_{i}=\tilde{w}_{i}^\top \gamma_{*}+u_{i}=w_{i}^\top \tilde{\gamma}_{*}+u_{i}$
and $Ew_{i}u_{i}=U_{a}^\top E\tilde{w}_{i}u_{i}=0$. Thus, $\sigma_{u}^{2}=Ez_{i}^{2}-\tilde{\gamma}_{*}^\top Ew_{i}w_{i}^\top \tilde{\gamma}_{*}$. 

Let $M_{a}=I_{p}-aa^\top/(a^\top a)$. Recall that $z_{i}=a^\top x_{i}/(a^\top a)$ and
$w_{i}=M_{a}x_{i}$. Since $Ew_{i}z_{i}=Ew_{i}w_{i}^\top \tilde{\gamma}_{*}$,
$$Ew_{i}w_{i}^\top =M_{a}\Sigma_{X}M_{a}$$
 and $Ew_{i}z_{i}=M_{a}\Sigma_{X}a\|a\|_{2}^{-2}$,
we have 
$$M_{a}\Sigma_{X}M_{a}\tilde{\gamma}_{*}=M_{a}\Sigma_{X}a\|a\|_{2}^{-2}$$
and thus, $M_{a}\Sigma_{X}(M_{a}\tilde{\gamma}_{*}-a\|a\|_{2}^{-2})=0$.
Since $M_{a}$ is the projection matrix onto the $(p-1)$-dimensional
linear space orthogonal to $a$, there exists $k_{1}\in\mathbb{R}$
with 
$$\Sigma_{X}(M_{a}\tilde{\gamma}_{*}-a\|a\|_{2}^{-2})=k_{1}a,$$
implying that $M_{a}\tilde{\gamma}_{*}=k_{1}\Omega_Xa+\|a\|_{2}^{-2}a$.
Next we aim to identify $k_1$.
Observe that
\begin{align*}
\tilde{\gamma}_{*}^\top M_{a}\Sigma_{X}M_{a}\tilde{\gamma}_{*}&\overset{(i)}{=}(M_{a}\tilde{\gamma}_{*})^\top (M_{a}\Sigma_{X}M_{a}\tilde{\gamma}_{*})
\\
&\overset{(ii)}{=}(k_{1}\Omega_Xa+\|a\|_{2}^{-2}a)^\top \Sigma_{X}a\|a\|_{2}^{-2}
\\
&=k_{1}+\|a\|_{2}^{-4}a^\top\Sigma_{X}a.
\end{align*}
where $(i)$ and $(ii)$ follow by $M_{a}^{2}=M_{a}$ and $M_{a}\Sigma_{X}M_{a}\tilde{\gamma}_{*}=M_{a}\Sigma_{X}a\|a\|_{2}^{-2}$,
respectively.
Together with
\begin{align*}
\tilde{\gamma}_{*}^\top M_{a}\Sigma_{X}M_{a}\tilde{\gamma}_{*}
&=(M_{a}\tilde{\gamma}_{*})^\top \Sigma_{X}(M_{a}\tilde{\gamma}_{*})
\\
&=(k_{1}\Omega_Xa+\|a\|_{2}^{-2}a)^\top \Sigma_{X}(k_{1}\Omega_Xa+\|a\|_{2}^{-2}a),
\end{align*}
we can solve for the unknown $k_1$. The above display allows us to obtain $k_{1}=-(a^\top\Omega_Xa)^{-1}$
and thus 
$$\tilde{\gamma}_{*}^\top M_{a}\Sigma_{X}M_{a}\tilde{\gamma}_{*}=\|a\|_{2}^{-4}a^\top\Sigma_{X}a-(a^\top\Omega_Xa)^{-1}.$$
Since 
$$\sigma_{u}^{2}=Ez_{i}^{2}-\tilde{\gamma}_{*}^\top Ew_{i}w_{i}^\top \tilde{\gamma}_{*}=Ez_{i}^{2}-\tilde{\gamma}_{*}^\top M_{a}\Sigma_{X}M_{a}\tilde{\gamma}_{*}$$
and $Ez_{i}^{2}=\|a\|_{2}^{-4}a^\top\Sigma_{X}a$, we have $\sigma_{u}^{2}=(a^\top\Sigma_{X}a)^{-1}$.
The proof is complete. \end{proof}
\begin{lem}
\label{lem: pi sparsity}Let Assumption \ref{assu: regularity condition}
hold. Suppose that at least one of the following conditions holds:\\
(1) $\|a\|_{0}\vee\|\beta_{*}\|_{0}=o(\sqrt{n}/\log p)$ or \\
(2) $\mathcal{M}(a)\bigcap\mathcal{M}(\beta_{*})=\emptyset$ and $\|\beta_{*}\|_{0}=o(\sqrt{n}/\log p)$. 

Then, $\|\pi_{*}\|_{0}=o(\sqrt{n}/\log p)$. \end{lem}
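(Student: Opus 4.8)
The plan is to make a support-adapted choice of the orthogonal complement matrix $U_a$; the bound is then a purely linear-algebraic counting argument, and Assumption \ref{assu: regularity condition} plays no role beyond fixing the setup. Recall that $U_a\in\mathbb R^{p\times(p-1)}$ is any matrix with $U_a^\top U_a=I_{p-1}$ and $U_aU_a^\top=I_p-aa^\top/(a^\top a)$, and that $\tilde w_i=U_a^\top w_i$, $\pi_*=U_a^\top\beta_*$ and $\gamma_*$ are all defined relative to a fixed admissible $U_a$; we are free to commit throughout to the particular choice constructed below.

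First I would set $S=\mathcal M(a)$, $s=\|a\|_0=|S|$, and, after relabeling coordinates (which affects nothing), assume $S=\{1,\dots,s\}$, so that $a=(\bar a^\top,0^\top)^\top$ with $\bar a\in\mathbb R^{s}$ having all entries nonzero and $\bar a^\top\bar a=a^\top a$. Then
$$I_p-\frac{aa^\top}{a^\top a}=\begin{pmatrix}I_s-\bar a\bar a^\top/(\bar a^\top\bar a)&0\\0&I_{p-s}\end{pmatrix}.$$
Let $\bar O\in\mathbb R^{s\times(s-1)}$ have orthonormal columns spanning the orthogonal complement of $\bar a$ in $\mathbb R^{s}$, so that $\bar O^\top\bar O=I_{s-1}$ and $\bar O\bar O^\top=I_s-\bar a\bar a^\top/(\bar a^\top\bar a)$, and set
$$U_a=\begin{pmatrix}\bar O&0\\0&I_{p-s}\end{pmatrix}\in\mathbb R^{p\times(p-1)}.$$
A direct block computation gives $U_a^\top U_a=I_{p-1}$ and $U_aU_a^\top=I_p-aa^\top/(a^\top a)$, so $U_a$ is admissible.

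Next I would read off $\pi_*=U_a^\top\beta_*$. Writing $\beta_*=(\bar\beta^\top,\beta^{(2)\top})^\top$ with $\bar\beta\in\mathbb R^{s}$ the restriction of $\beta_*$ to $S$ and $\beta^{(2)}\in\mathbb R^{p-s}$ its restriction to $S^c$, the block form of $U_a$ yields $\pi_*=\big((\bar O^\top\bar\beta)^\top,\ \beta^{(2)\top}\big)^\top$, hence
$$\|\pi_*\|_0=\|\bar O^\top\bar\beta\|_0+\|\beta^{(2)}\|_0\le (s-1)+\|\beta^{(2)}\|_0\le\|a\|_0-1+\|\beta_*\|_0,$$
since $\bar O^\top\bar\beta$ lies in $\mathbb R^{s-1}$ and $\|\beta^{(2)}\|_0=\#\{j\notin S:\beta_{*,j}\ne0\}\le\|\beta_*\|_0$.

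Finally I would dispatch the two cases. Under (1) the right-hand side is a sum of two quantities each $o(\sqrt n/\log p)$, hence itself $o(\sqrt n/\log p)$. Under (2), $\mathcal M(a)\cap\mathcal M(\beta_*)=\emptyset$ forces $\bar\beta=0$, so $\bar O^\top\bar\beta=0$ and $\|\pi_*\|_0=\|\beta^{(2)}\|_0\le\|\beta_*\|_0=o(\sqrt n/\log p)$. The only genuinely delicate point is the opening one: the conclusion is true only once $U_a$ is chosen well — a generic rotation of the columns of the above $U_a$ would make $\pi_*$ dense even for sparse $\beta_*$ — so the substance of the proof is really the justification that one may, without loss, work with the support-adapted $U_a$; everything after that is bookkeeping.
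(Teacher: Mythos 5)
Your proof is correct and follows essentially the same route as the paper's: after relabeling so that $\mathcal M(a)$ occupies the first $s$ coordinates, both arguments commit to the block-diagonal, support-adapted choice of $U_a$ and then count $\|\pi_*\|_0\le(\|a\|_0-1)+\|\beta_*\|_0$, with condition (2) killing the first block. Your closing observation that the bound genuinely depends on this choice of $U_a$ (a generic admissible $U_a$ would densify $\pi_*$) is a valid point the paper leaves implicit, but it does not change the substance of the argument.
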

\begin{proof}
We denote $s_{a}=\|a\|_{0}$ and $s_{\beta}=\|\beta_{*}\|_{0}$. Without
loss of generality, we assume that $a=(a_{0}^\top ,0)^\top \in\mathbb{R}^{p}$
with $\|a_{0}\|_{0}=s_{a}$. Let $U_{a_{0}}\in\mathbb{R}^{s_{a}\times(s_{a}-1)}$
satisfy $U_{a_{0}}^\top U_{a_{0}}=I_{s_{a}-1}$ and $U_{a_{0}}U_{a_{0}}^\top =I_{s_{a}}-a_{0}a_{0}^\top /(a_{0}^\top a_{0})$.
It is easy to verify that 
\[
I_{p}-aa^\top/(a^\top a)=\begin{pmatrix}I_{s_{a}}-a_{0}a_{0}^\top /(a_{0}^\top a_{0}) & 0\\
0 & I_{p-s_{a}}
\end{pmatrix}\quad{\rm and}\quad U_{a}=\begin{pmatrix}U_{a_{0}} & 0\\
0 & I_{p-s_{a}}
\end{pmatrix}\in\mathbb{R}^{p\times(p-1)}.
\]

It then follows that 
\begin{equation}
\pi_{*}=U_{a}^\top \beta_{*}=\begin{pmatrix}U_{a_{0}}^\top \beta_{*,\mathcal{M}(a)}\\
\beta_{*,\left[\mathcal{M}(a)\right]^{c}}
\end{pmatrix}\label{eq: pi sparsity eq 1}
\end{equation}

Take note  that 
$$\|\pi_{*}\|_{0}=\|U_{a_{0}}^\top \beta_{*,\mathcal{M}(a)}\|_{0}+\|\beta_{*,\left[\mathcal{M}(a)\right]^{c}}\|_{0}\leq(s_{a}-1)+\|\beta_{*}\|_{0}.$$
This proves the result under condition (1). Under condition (2), $\beta_{*,\mathcal{M}(a)}=0$
and thus $\|\pi_{*}\|_{0}=\|\beta_{*,\left[\mathcal{M}(a)\right]^{c}}\|_{0}=\|\beta_{*}\|_{0}$.
This proves the result under condition (2). \end{proof}
\begin{lem}
\label{lem: feasibility power}Let Assumption \ref{assu: regularity condition}
and $H_{1,n}$ in (\ref{eq: local alternative}) hold. Let $v_{i}=y_{i}-z_{i}g_{0}$,
$\sigma_{v}^{2}=Ev_{i}^{2}$, $\bar{\rho}_{*}=[1+c_{2}c_{1}^{-1}(c_{3}^{-1}-1)]^{-1/2}$
and $h_{n}=n^{-1/2}(a^\top\Omega_Xa)^{1/2}\sigma_{\varepsilon}d$.
Then, 
$$\sigma_{v}=O(1).$$ Moreover, there exists a constant $C>0$, such
that for any $\eta,\lambda>C\sqrt{n^{-1}\log p}$, $\rho_{0}\leq[1+c_{2}c_{1}^{-1}(c_{3}^{-1}-1)]^{-1/2}$,
we have
\[
P\left((\pi_{*}+\gamma_{*}h_{n},\bar{\rho}_{*},\gamma_{*})\ {\rm is\ feasible\ in\  \ \eqref{eq: dantzig pi}}\right)\rightarrow1.
\]
\end{lem}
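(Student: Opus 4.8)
The plan is to follow the template of the proof of Lemma~\ref{lem: feasibility null hypo}, keeping track of the extra terms generated by the local perturbation $h_{n}$. The first step is to re-express the stabilized model under $H_{1,n}$: since $a^{\top}\beta_{*}=g_{0}+h_{n}$ in \eqref{eq:consolidated} and $z_{i}=\tilde{w}_{i}^{\top}\gamma_{*}+u_{i}$ in \eqref{eq: graphical model}, the vector $V:=Y-Zg_{0}$ has coordinates $v_{i}=z_{i}h_{n}+\tilde{w}_{i}^{\top}\pi_{*}+\varepsilon_{i}=\tilde{w}_{i}^{\top}(\pi_{*}+\gamma_{*}h_{n})+\xi_{i}$, where $\xi_{i}:=h_{n}u_{i}+\varepsilon_{i}$. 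Under Assumption~\ref{assu: regularity condition}(i) the variables $u_{i}$ and $\varepsilon_{i}$ are independent Gaussians independent of $\tilde{w}_{i}$, so $\xi_{i}\mid\tilde{w}_{i}\sim\mathcal N(0,\sigma_{\xi}^{2})$ with $\sigma_{\xi}^{2}=h_{n}^{2}\sigma_{u}^{2}+\sigma_{\varepsilon}^{2}$, and in particular $\xi_{i}$ is independent of $\tilde{w}_{i}$. By Lemma~\ref{lem: sigma u characterization}, $h_{n}^{2}\sigma_{u}^{2}=n^{-1}(a^{\top}\Omega_{X}a)\sigma_{\varepsilon}^{2}d^{2}\cdot(a^{\top}\Omega_{X}a)^{-1}=n^{-1}\sigma_{\varepsilon}^{2}d^{2}=O(n^{-1})$, hence $\sigma_{\xi}^{2}=\sigma_{\varepsilon}^{2}+o(1)$, so that $c_{1}^{2}\le\sigma_{\varepsilon}^{2}\le\sigma_{\xi}^{2}\le c_{2}^{2}+o(1)$. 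This is precisely why $(\pi_{*}+\gamma_{*}h_{n},\bar{\rho}_{*},\gamma_{*})$ is the natural candidate: the residual $V-\tilde{W}(\pi_{*}+\gamma_{*}h_{n})=\xi$ is orthogonal to $\tilde{W}$, exactly as $\varepsilon$ was under $H_{0}$.

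Next I would bound $\sigma_{v}$. Writing $v_{i}=x_{i}^{\top}b+\varepsilon_{i}$ with $b=h_{n}a/(a^{\top}a)+M_{a}\beta_{*}$ and $M_{a}=I_{p}-aa^{\top}/(a^{\top}a)$, one has $\sigma_{v}^{2}=b^{\top}\Sigma_{X}b+\sigma_{\varepsilon}^{2}$. Since $a/(a^{\top}a)$ is parallel to $a$ and $M_{a}\beta_{*}\perp a$, $\|b\|_{2}^{2}=h_{n}^{2}/\|a\|_{2}^{2}+\|M_{a}\beta_{*}\|_{2}^{2}$; the first term is at most $n^{-1}c_{1}^{-1}\sigma_{\varepsilon}^{2}d^{2}=o(1)$ (using $a^{\top}\Omega_{X}a\le c_{1}^{-1}\|a\|_{2}^{2}$), while the second is at most $\|\beta_{*}\|_{2}^{2}\le c_{1}^{-1}(c_{3}^{-1}-1)\sigma_{\varepsilon}^{2}$ by the same computation as in the proof of Lemma~\ref{lem: feasibility null hypo} (from $\lambda_{\min}(\Sigma_{X})\ge c_{1}$ and $\sigma_{y}^{2}\le\sigma_{\varepsilon}^{2}/c_{3}$). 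Hence $\sigma_{v}^{2}\le[1+c_{2}c_{1}^{-1}(c_{3}^{-1}-1)]\sigma_{\varepsilon}^{2}+o(1)=\bar{\rho}_{*}^{-2}\sigma_{\varepsilon}^{2}+o(1)=O(1)$, while $\sigma_{v}^{2}\ge\sigma_{\varepsilon}^{2}\ge c_{1}^{2}$; the law of large numbers then yields $n^{-1}\|V\|_{2}^{2}=\sigma_{v}^{2}+o_{P}(1)$ and $n^{-1/2}\|V\|_{2}\ge c_{1}/2$ with probability approaching one.

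The third step verifies the three constraints of \eqref{eq: dantzig pi} at $(\pi_{*}+\gamma_{*}h_{n},\bar{\rho}_{*})$, using $V-\tilde{W}(\pi_{*}+\gamma_{*}h_{n})=\xi$. For the first constraint, the sub-exponential/Bernstein argument of Lemma~\ref{lem: feasibility null hypo}, applied to the entries of $n^{-1}\tilde{W}^{\top}\xi$ (legitimate since $\xi_{i}\sigma_{\xi}^{-1}\mid\tilde{w}_{i}\sim\mathcal N(0,1)$ independently of $\tilde{w}_{i}$ and each $\tilde{w}_{i,j}^{2}$ has bounded sub-exponential norm), produces a constant $M$ with $\|n^{-1}\tilde{W}^{\top}\xi\|_{\infty}\le M\sigma_{\xi}\sqrt{n^{-1}\log p}$ with probability approaching one; since $\bar{\rho}_{*}$ is a fixed positive constant, $\sigma_{\xi}$ is bounded, and $n^{-1/2}\|V\|_{2}\ge c_{1}/2$ w.h.p., choosing $C$ large enough forces $\eta\bar{\rho}_{*}n^{-1/2}\|V\|_{2}\ge\|n^{-1}\tilde{W}^{\top}\xi\|_{\infty}$ w.h.p. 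For the quadratic (``upper bound on $\rho$'') constraint, the law of large numbers gives $n^{-1}V^{\top}\xi=\sigma_{\xi}^{2}+o_{P}(1)$ (the cross term $E[\tilde{w}_{1}^{\top}(\pi_{*}+\gamma_{*}h_{n})\xi_{1}]$ vanishes by the orthogonality above) and $n^{-1}\|V\|_{2}^{2}=\sigma_{v}^{2}+o_{P}(1)$; combining with $\sigma_{\xi}^{2}/\sigma_{v}^{2}\ge\sigma_{\varepsilon}^{2}/\sigma_{v}^{2}\ge\bar{\rho}_{*}^{2}-o(1)$ and the elementary bound $\rho_{0}\bar{\rho}_{*}/2\le\bar{\rho}_{*}^{2}/2<\bar{\rho}_{*}^{2}$ (valid because $\rho_{0}\le\bar{\rho}_{*}$), we obtain $n^{-1}V^{\top}\xi\ge\rho_{0}\bar{\rho}_{*}n^{-1}\|V\|_{2}^{2}/2$ w.h.p. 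Finally $\bar{\rho}_{*}\in[\rho_{0},1]$ follows from the hypothesis $\rho_{0}\le\bar{\rho}_{*}$ together with $c_{3}\in(0,1)$, which forces $\bar{\rho}_{*}<1$; and the feasibility of $\gamma_{*}$ in \eqref{eq: dantzig gamma}, i.e.\ $\|n^{-1}\tilde{W}^{\top}(Z-\tilde{W}\gamma_{*})\|_{\infty}=\|n^{-1}\tilde{W}^{\top}u\|_{\infty}\le\lambda n^{-1/2}\|Z\|_{2}$ w.h.p., is verbatim the corresponding step in the proof of Lemma~\ref{lem: feasibility null hypo} (using $\sigma_{u}\le\sigma_{z}$ and $n^{-1/2}\|Z\|_{2}=\sigma_{z}(1+o_{P}(1))$).

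The main obstacle is the quadratic constraint: one must ensure the residual energy $V^{\top}\xi$ stays above the fraction $\rho_{0}\bar{\rho}_{*}/2$ of $\|V\|_{2}^{2}$ even though the perturbation $h_{n}$ inflates $\sigma_{v}^{2}$ slightly above $\sigma_{\varepsilon}^{2}$ and, a priori, could do worse since $\|a\|_{2}$ is allowed to diverge, shrinking $\sigma_{z}$ and $\sigma_{u}$. The resolution is that both scalings $h_{n}^{2}\sigma_{u}^{2}=O(n^{-1})$ (via the exact identity for $\sigma_{u}$ in Lemma~\ref{lem: sigma u characterization}) and $h_{n}^{2}/\|a\|_{2}^{2}=O(n^{-1})$ (via $a^{\top}\Omega_{X}a\le c_{1}^{-1}\|a\|_{2}^{2}$) render every term new relative to Lemma~\ref{lem: feasibility null hypo} asymptotically negligible, so the same condition $\rho_{0}\le[1+c_{2}c_{1}^{-1}(c_{3}^{-1}-1)]^{-1/2}$ that made the null feasible continues to work under the local alternative.
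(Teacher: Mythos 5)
Your proposal is correct and follows essentially the same route as the paper's proof: you identify the same candidate residual $V-\tilde{W}(\pi_{*}+\gamma_{*}h_{n})=\varepsilon+h_{n}u$, show $\sigma_{v}^{2}\le[1+c_{2}c_{1}^{-1}(c_{3}^{-1}-1)]\sigma_{\varepsilon}^{2}+o(1)$ via the same eigenvalue and signal-to-noise bounds, and verify the three constraints of \eqref{eq: dantzig pi} with the same sub-exponential concentration and law-of-large-numbers arguments, exploiting $h_{n}^{2}\sigma_{u}^{2}=n^{-1}\sigma_{\varepsilon}^{2}d^{2}$ from Lemma \ref{lem: sigma u characterization} exactly as the paper does. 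The only (cosmetic) differences are that you treat $\xi=\varepsilon+h_{n}u$ as a single Gaussian rather than bounding $\tilde{W}^{\top}\varepsilon$ and $\tilde{W}^{\top}u$ separately, and you bound $\sigma_{v}$ through the explicit coefficient vector $b$ rather than the $L^{2}$ triangle inequality.
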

\begin{proof}
Under $H_{1,n}$, $v_{i}=\tilde{w}_{i}^\top \pi_{*}+\varepsilon_{i}+z_{i}h_{n}$.
Consequently, 
$$\|v_{i}-\tilde{w}_{i}^\top \pi_{*}-\varepsilon_{i}\|_{L^{2}(P)}=\|z_{i}h_{n}\|_{L^{2}(P)}=\sqrt{Ez_{i}^{2}h_{n}^{2}}.$$
Observe that 
\begin{align}
Ez_{i}^{2}h_{n}^{2}
&=\|a\|_{2}^{-4}a^\top\Sigma_{X}ah_{n}^{2}
\\
&=\|a\|_{2}^{-4}(a^\top\Sigma_{X}a)(a^\top\Omega_Xa)(a^\top\Omega_Xa)^{-1}h_{n}^{2}
\\
&\overset{(i)}{\leq}(c_{2}c_{1}^{-1})(a^\top\Omega_Xa)^{-1}h_{n}^{2}=(c_{2}c_{1}^{-1})n^{-1}\sigma_{\varepsilon}^{2}d^{2}=o(1),
\end{align}
where $(i)$ holds by Assumption \ref{assu: regularity condition}.
Hence, by the triangular inequality applied to $L^{2}(P)$-norm, we
have $\sigma_{v}=\|v_{i}\|_{L^{2}(P)}=\|\tilde{w}_{i}^\top \pi_{*}+\varepsilon_{i}\|_{L^{2}(P)}+o(1)$.
By the same argument as in the proof of Lemma \ref{lem: feasibility null hypo},
$$\|\tilde{w}_{i}^\top \pi_{*}+\varepsilon_{i}\|_{L^{2}(P)}^{2}=\pi_{*}^\top E\tilde{w}_{i}\tilde{w}_{i}^\top \pi_{*}+\sigma_{\varepsilon}=\beta_{*}^\top \Sigma_{W}\beta_{*}+\sigma_{\varepsilon}^{2}\leq[1+c_{2}c_{1}^{-1}(c_{3}^{-1}-1)]\sigma_{\varepsilon}^{2}.$$
The first claim follows by $\sigma_{v}\leq[1+c_{2}c_{1}^{-1}(c_{3}^{-1}-1)]^{1/2}\sigma_{\varepsilon}+o(1)=O(1)$. 

Notice that under $H_{1,n}$, the analysis for the feasibility of
$\gamma_{*}$ is the same as under $H_{0}$. Thus, by the argument
in the proof of Lemma \ref{lem: feasibility null hypo}, for some
constant $M_{1}>0$, we have 
\begin{equation}
P\left(\|n^{-1}\tilde{W}^\top (Z-\tilde{W}\gamma_{*})\|_{\infty}>M_{1}\sqrt{n^{-1}\log p}n^{-1/2}\|Z\|_{2}\right)\rightarrow0.\label{eq: feasibility power eq 1}
\end{equation}

(\ref{eq: graphical model}) implies that, under $H_{1,n}$, $v_{i}=\tilde{w}_{i}^\top \pi_{*}+\varepsilon_{i}+z_{i}h_{n}=\tilde{w}_{i}^\top (\pi_{*}+\gamma_{*}h_{n})+\varepsilon_{i}+u_{i}h_{n}$.
Thus, 
$$n^{-1}\tilde{W}^\top (V-\tilde{W}(\pi_{*}+\gamma_{*}h_{n}))=n^{-1}\sum_{i=1}^{n}\tilde{w}_{i}\varepsilon_{i}+n^{-1}\sum_{i=1}^{n}\tilde{w}_{i}u_{i}h_{n}.$$
By a similar argument as in the proof of Lemma \ref{lem: feasibility null hypo},
entries of $\tilde{w}_{i}\varepsilon_{i}$ and $\tilde{w}_{i}u_{i}\sigma_{u}^{-1}$
have bounded sub-exponential norms. As in the proof of Lemma \ref{lem: feasibility null hypo},
we can use Proposition 5.16 of \citet{vershynin2010introduction}
and the union bound to conclude that for some constant $M_{2}>0$
we have $P(\|n^{-1}\tilde{W}^\top u\sigma_{u}^{-1}\|_{\infty}>M_{2}\sqrt{n^{-1}\log p})\rightarrow0$
and $P(\|n^{-1}\tilde{W}^\top \varepsilon\|_{\infty}>M_{2}\sqrt{n^{-1}\log p})\rightarrow0$.
It follows that 
\begin{eqnarray}
 &  & P\left(\|n^{-1}\tilde{W}^\top (V-\tilde{W}(\pi_{*}+\gamma_{*}h_{n}))\|_{\infty}>2M_{2}\sqrt{n^{-1}\log p}\right)\label{eq: feasibility power eq 2}\\
 & = & P\left(\|n^{-1}\tilde{W}^\top (\varepsilon+uh_{n})\|_{\infty}>2M_{2}\sqrt{n^{-1}\log p}\right)\nonumber \\
 & \leq & P\left(\|n^{-1}\tilde{W}^\top \varepsilon\|_{\infty}>M_{2}\sqrt{n^{-1}\log p}\right)\nonumber
 \\
 &+&P\left(\|n^{-1}\tilde{W}^\top u\sigma_{u}^{-1}\|_{\infty}|\sigma_{u}h_{n}|>M_{2}\sqrt{n^{-1}\log p}\right)\overset{(i)}{=}o(1),\nonumber 
\end{eqnarray}
where $(i)$ holds by $|\sigma_{u}h_{n}|=n^{-1/2}\sigma_{\varepsilon}|d|=o(1)$
(by Lemma \ref{lem: sigma u characterization} and the definition
of $h_{n}$).

Notice that 
$$Ev_{i}(u_{i}h_{n}+\varepsilon_{i})\sigma_{\varepsilon}^{-2}=E(u_{i}h_{n}+\varepsilon_{i})^{2}\sigma_{\varepsilon}^{-2}=1+\sigma_{\varepsilon}^{-2}\sigma_{u}^{2}h_{n}^{2}=1+n^{-1}d^{2}=1+o(1).$$
By the Law of Large Numbers , 
$$n^{-1}V^\top (V-\tilde{W}(\pi_{*}+\gamma_{*}h_{n}))\sigma_{\varepsilon}^{-2}
=
E(u_{i}h_{n}+\varepsilon_{i})^{2}\sigma_{\varepsilon}^{-2}+o_{P}(1)=1+o_{P}(1).$$
In the display above, the first \(o_P(1)\) term is equal to \(n^{-1}\sigma_\varepsilon^{-2}(\pi_*+\gamma_*h_n)^\top \tilde{W}^\top (\varepsilon+h_nu)\). Since \(\tilde{W}\) is uncorrelated with \((\varepsilon,u)\), this term is the partial sum of zero-mean independent random variables. Since \(\pi_*+h_n\gamma_* \) has bounded \(L^2\)-norm  by Bernstein's inequality, we have  that this term is \(o_P(1)\).
The Law of Large Numbers also implies that $n^{-1}\|V\|_{2}^{2}\sigma_{v}^{-2}=1+o_{P}(1)$.
Hence, 
\begin{align}
&P\left(\frac{n^{-1}V^\top (V-\tilde{W}(\pi_{*}+\gamma_{*}h_{n}))}{n^{-1}\|V\|_{2}^{2}}>\frac{1}{2}\rho_{0}\bar{\rho}_{*}\right)
\\
&=P\left(\frac{\sigma_{\varepsilon}^{2}(1+o_{P}(1))}{\sigma_{v}^{2}(1+o_{P}(1))}>\frac{1}{2}\rho_{0}\bar{\rho}_{*}\right)\\
&\overset{(i)}{\geq}P\left(\frac{(1+o_{P}(1))}{[1+c_{2}c_{1}^{-1}(c_{3}^{-1}-1)](1+o_{P}(1))}>\frac{1}{2}\rho_{0}\bar{\rho}_{*}\right)\overset{(ii)}{\geq}1+o(1),\label{eq: feasibility power eq 3}
\end{align}
where $(i)$ follows by $\sigma_{v}\leq[1+c_{2}c_{1}^{-1}(c_{3}^{-1}-1)]^{1/2}\sigma_{\varepsilon}+o(1)$
(shown at the beginning of the proof) and $(ii)$ follows by $\rho_{0}\leq\bar{\rho}_{*}=[1+c_{2}c_{1}^{-1}(c_{3}^{-1}-1)]^{-1/2}$.
According to (\ref{eq: feasibility power eq 1}), (\ref{eq: feasibility power eq 2})
and (\ref{eq: feasibility power eq 3}), $P\left((\pi_{*}+\gamma_{*}h_{n},\bar{\rho}_{*},\gamma_{*})\ {\rm is\ feasible\ in\  \ \eqref{eq: dantzig pi}}\right)\rightarrow1.$
The proof is complete. 
\end{proof}
Now we are ready to prove Theorem \ref{thm: power unknown SigmaX}. 
\begin{proof}[\textbf{Proof of Theorem \ref{thm: power unknown SigmaX}}]
 Let $V=Y-Zg_{0}$, $s_{*}=\|\gamma_{*}\|_{0}+\|\pi_{*}\|_{0}$,
$h_{n}=n^{-1/2}(a^\top\Omega_Xa)^{1/2}\sigma_{\varepsilon}d$,
$\lambda_{\gamma}=\lambda n^{-1/2}\|Z\|_{2}$, $\eta_{\pi}=\eta n^{-1/2}\|V\|_{2}$
and $\sigma_{v}^{2}=EV^\top V/n$ . Notice that $\|\gamma_{*}\|_{0}\leq s_{*}$
and $\|\pi_{*}+\gamma_{*}h_{n}\|_{0}\leq s_{*}$. By Lemmas \ref{lem: sigma u characterization}
and \ref{lem: pi sparsity}, 
\begin{equation}
s_{*}=o(\sqrt{n}/\log p)\quad{\rm and}\quad h_{n}=n^{-1/2}\sigma_{u}^{-1}\sigma_{\varepsilon}d.\label{eq: power unknown SigmaX eq 0}
\end{equation}

Under $H_{1,n}$, $V=Zh_{n}+\tilde{W}\pi_{*}+\varepsilon=uh_{n}+\tilde{W}(\gamma_{*}h_{n}+\pi_{*})+\varepsilon$
and %
thus
\begin{align}
&n^{-1/2}(V-\tilde{W}\hat{\pi})^{\top}(Z-\tilde{W}\hat{\gamma}) \label{eq: power unknown SigmaX eq 0.5}
\\
&\qquad =n^{-1/2}(V-\tilde{W}\hat{\pi})^\top \tilde{W}(\gamma_{*}-\hat{\gamma})+n^{-1/2}(V-\tilde{W}\hat{\pi})^\top u\nonumber\\
&\qquad =\underbrace{n^{-1/2}(V-\tilde{W}\hat{\pi})^\top \tilde{W}(\gamma_{*}-\hat{\gamma})}_{I_{1}}+\underbrace{n^{-1/2}\varepsilon^\top u}_{I_{2}}\nonumber
\\ \nonumber
&\qquad \qquad \qquad +\underbrace{n^{-1/2}h_{n}u^\top u}_{I_{3}}+\underbrace{n^{-1/2}(\pi_{*}-\hat{\pi}+\gamma_{*}h_{n})^\top \tilde{W}^\top u}_{I_{4}}.
\end{align}
We next treat each of the four terms in the decomposition above separately.

As argued in the proof of Theorem \ref{thm: unknown variance size X},
there exists a constant $\kappa>0$ such that $P(\mathcal{D}_{n}(s_{*},\kappa))\to1$.
Define the event $\mathcal{M}=\{(\pi_{*}+\gamma_{*}h_{n},\bar{\rho}_{*},\gamma_{*})\ {\rm is\ feasible\ in\  \ \eqref{eq: dantzig pi}}\}$.
By Lemma \ref{lem: feasibility power}, $P(\mathcal{M})\rightarrow1$
and thus
\begin{equation}
P\left(\mathcal{M}\bigcap\mathcal{D}_{n}(s_{*},\kappa)\right)\rightarrow1.\label{eq: power unknown sigmaX eq 1}
\end{equation}

Since $\hat{\gamma}$ does not depend on whether $h_{n}=0$, we conclude,
as argued in the proof of Theorem \ref{thm: unknown variance size X},
that $\hat{\sigma}_{u}/\sigma_{u}=1+o_{P}(1)$ and that on the event
$\mathcal{M}\bigcap\mathcal{D}_{n}(s_{*},\kappa)$, 
\begin{equation}
\|\hat{\gamma}-\gamma_{*}\|_{1}\leq8\lambda_{\gamma}s_{*}\kappa^{-2}\ {\rm and}\ n^{-1/2}\|\tilde{W}(\hat{\gamma}-\gamma_{*})\|_{2}\leq4\lambda_{\gamma}\sqrt{s_{*}}\kappa^{-1}.
\label{eq: power unknown sigmaX eq 2}
\end{equation}

By the definition of $\hat{\pi}$, 
$$\|n^{-1}\tilde{W}^\top (V-\tilde{W}\hat{\pi})\|_{\infty}\leq\eta\hat{\rho}n^{-1/2}\|V\|_{2}\leq\eta_{\pi};$$  thus, by (\ref{eq: power unknown sigmaX eq 2}),
\begin{align}
\frac{|I_{1}|}{\hat{\sigma}_{u}\sigma_{\varepsilon}}
&\leq
\frac{\sqrt{n}\|n^{-1}\tilde{W}^\top (V-\tilde{W}\hat{\pi})\|_{\infty}\|\hat{\gamma}-\gamma_{*}\|_{1}}{\hat{\sigma}_{u}\sigma_{\varepsilon}}
\\
&\leq\frac{8\sqrt{n}\eta_{\pi}\lambda_{\gamma}s_{*}\kappa^{-2}}{\sigma_{u}(1+o_{P}(1))\sigma_{\varepsilon}}\\
&\overset{(i)}{=}\frac{(s_{*}n^{-1/2}\log p)O_{P}(\sigma_{v}\sigma_{z})}{\sigma_{u}\sigma_{\varepsilon}}\overset{(ii)}{=}o_{P}(1),\label{eq: power unknown sigmaX eq 3}
\end{align}
where $(i)$ follows by $n^{-1}\|Z\|_{2}^{2}\sigma_{z}^{-2}=1+o_{P}(1)$
and $n^{-1}\|V\|_{2}^{2}\sigma_{v}^{-2}=1+o_{P}(1)$ (by the Law of Large Numbers  since
both $n^{-1}\|Z\|_{2}^{2}\sigma_{z}^{-2}$ and $n^{-1}\|V\|_{2}^{2}\sigma_{v}^{-2}$
are averages of $n$ independent $\chi^{2}(1)$ random variables)
and $(ii)$ holds by (\ref{eq: power unknown SigmaX eq 0}),$\sigma_{u}/\sigma_{z}\geq c_{3}$,
$\sigma_{\varepsilon}\geq c_{1}$ and $\sigma_{v}=O(1)$ (Lemma \ref{lem: feasibility power}).

By CLT, $I_{2}/\sigma_{u}\rightarrow^{d}\mathcal N(0,\sigma_{\varepsilon}^{2})$.
Since $\hat{\sigma}_{u}/\sigma_{u}=1+o_{P}(1)$, the Slutsky's lemma
implies that 
\begin{equation}
\frac{I_{2}}{\hat{\sigma}_{u}\sigma_{\varepsilon}}\rightarrow^{d}\mathcal N(0,1).\label{eq: power unknown sigmaX eq 4}
\end{equation}

By (\ref{eq: power unknown SigmaX eq 0}), 
$$n^{-1/2}h_{n}u^\top u/(\hat{\sigma}_{u}\sigma_{\varepsilon})=d(\sigma_{u}/\hat{\sigma}_{u})(n^{-1}u^\top u\sigma_{u}^{-2}).$$
Notice that $n^{-1}u^\top u\sigma_{u}^{-2}$ is the average of $n$ independent
$\chi^{2}(1)$ random variables. It follows, by the Law of Large Numbers  and $\sigma_{u}/\hat{\sigma}_{u}=1+o_{P}(1)$,
that 
\begin{equation}
\frac{I_{3}}{\hat{\sigma}_{u}\sigma_{\varepsilon}}=d+o_{P}(1).\label{eq: power unknown sigmaX eq 5}
\end{equation}

On the event $\mathcal{M}\bigcap\mathcal{D}_{n}(s_{*},\kappa)$, we
have that 
$$\|\hat{\pi}\|_{1}\leq\|\pi_{*}+\gamma_{*}h_{n}\|_{1},$$
$$\|n^{-1}\tilde{W}^\top (V-\tilde{W}\hat{\pi})\|_{\infty}\leq\eta\hat{\rho}n^{-1/2}\|V\|_{2}\leq\eta_{\pi}$$
and 
$$\|n^{-1}\tilde{W}^\top (V-\tilde{W}(\pi_{*}+\gamma_{*}h_{n}))\|_{\infty}\leq\eta\rho_{*}n^{-1/2}\|V\|_{2}\leq\eta_{\pi}.$$
We apply Lemma \ref{lem: DS1} with $(Y,X,\xi_{*})$, replaced by $(V,\tilde{W},\pi_{*}+\gamma_{*}h_{n})$,
and obtain
\begin{equation}
\left\|\hat{\pi}-(\pi_{*}+\gamma_{*}h_{n})\right\|_{1}\leq8\eta_{\pi}s_{*}\kappa^{-2}\ {\rm and} \ n^{-1/2} \left \|\tilde{W}[\hat{\pi}-(\pi_{*}+\gamma_{*}h_{n})] \right\|_{2}\leq4\eta_{\pi}\sqrt{s_{*}}\kappa^{-1}.\label{eq: power unknown sigmaX eq 6}
\end{equation}

Observe that, on the event $\mathcal{M}\bigcap\mathcal{D}_{n}(s_{*},\kappa)$,
$\|n^{-1}\tilde{W}^\top u\|_{\infty}=\|n^{-1}\tilde{W}^\top (Z-\tilde{W}\gamma_{*})\|_{\infty}\leq\lambda_{\gamma}$
and thus
\begin{align}
\frac{|I_{4}|}{\hat{\sigma}_{u}\sigma_{\varepsilon}}
&\leq\frac{\sqrt{n}\|n^{-1}\tilde{W}^\top u\|_{\infty}\|\hat{\pi}-(\pi_{*}+\gamma_{*}h_{n})\|_{1}}{\sigma_{u}\sigma_{\varepsilon}}\cdot\frac{\sigma_{u}}{\hat{\sigma}_{u}}\\
&\overset{(i)}{\leq}\frac{8\sqrt{n}\lambda_{\gamma}\eta_{\pi}s_{*}\kappa^{-2}}{\sigma_{u}\sigma_{\varepsilon}}\cdot(1+o_{P}(1))
\\
&=O_{P}(s_{*}n^{-1}\log p)\frac{n^{-1/2}\|V\|_{2}}{\sigma_{\varepsilon}}\cdot\frac{n^{-1/2}\|Z\|_{2}}{\sigma_{u}}\overset{(ii)}{=}o_{P}(1),\label{eq: power unknown sigmaX eq 7}
\end{align}
where $(i)$ follows by (\ref{eq: power unknown sigmaX eq 6}) and
$\hat{\sigma}_{u}/\sigma_{u}=1+o_{P}(1)$ and $(ii)$ follows by the
same argument as (\ref{eq: power unknown sigmaX eq 3}). By Slutszky's
lemma, together with (\ref{eq: power unknown SigmaX eq 0.5}), (\ref{eq: power unknown sigmaX eq 3}),
(\ref{eq: power unknown sigmaX eq 4}), (\ref{eq: power unknown sigmaX eq 5}),
(\ref{eq: power unknown sigmaX eq 7}), we have 
\begin{equation}
S_{n}\frac{\hat{\sigma}_{\varepsilon}}{\sigma_{\varepsilon}}=\frac{n^{-1/2}(V-\tilde{W}\hat{\pi})^{\top}(Z-\tilde{W}\hat{\gamma})}{\hat{\sigma}_{u}\sigma_{\varepsilon}}\rightarrow^{d}\mathcal N(d,1).\label{eq: power unknown sigmaX eq 8}
\end{equation}

Since $\sigma_{\varepsilon}$ is bounded away from zero, it remains
to show that $\hat{\sigma}_{\varepsilon}=\sigma_{\varepsilon}+o_{P}(1)$.
Note that
\begin{align*}
\left|\hat{\sigma}_{\varepsilon}-n^{-1/2}\|\varepsilon\|_{2}\right|
&=\left|n^{-1/2}\|V-\tilde{W}\hat{\pi}\|_{2}-n^{-1/2}\|\varepsilon\|_{2}\right|
\\
&\leq n^{-1/2}\|V-\tilde{W}\hat{\pi}-\varepsilon\|_{2}\\
&\overset{(i)}{\leq}n^{-1/2}\|\tilde{W}(\pi_{*}+\gamma_{*}h_{n}-\hat{\pi})\|_{2}+n^{-1/2}\|u\sigma_{u}^{-1}\|_{2}|\sigma_{u}h_{n}|
\\
&\overset{(ii)}{=}O_{P}(\eta_{\pi}\sqrt{s_{*}})+O_{P}(1)n^{-1/2}\sigma_{\varepsilon}|d|\overset{(iii)}{=}o_{P}(1),
\end{align*}
where $(i)$ holds by $V=\tilde{W}(\pi_{*}+\gamma_{*}h_{n})+uh_{n}+\varepsilon$,
$(ii)$ holds by (\ref{eq: power unknown sigmaX eq 6}) and $n^{-1/2}\|u\sigma_{u}^{-1}\|_{2}=O_{P}(1)$
(by the Law of Large Numbers ) and $(iii)$ holds by $\eta_{\pi}\sqrt{s_{*}}=n^{-1/2}\|V\|_{2}\sqrt{s_{*}n^{-1}\log p}$,
(\ref{eq: power unknown SigmaX eq 0}), $n^{-1/2}\|V\|_{2}=O_{P}(1)$
(argued in (\ref{eq: power unknown sigmaX eq 3})).

Law of Large Numbers also
implies that $n^{-1/2}\|\varepsilon\|_{2}^{2}=\sigma_{\varepsilon}^{2}+o_{P}(1)$.
This, together with the above display, implies that $\hat{\sigma}_{\varepsilon}=\sigma_{\varepsilon}+o_{P}(1)$.
Hence, by Slutszky's lemma and (\ref{eq: power unknown sigmaX eq 8}),
$S_{n}\rightarrow^{d}\mathcal N(d,1)$. It follows that 
\begin{eqnarray*}
P\left(|S_{n}|>\Phi\left(1-\frac{\alpha}{2}\right)\right) & = & P\left(S_{n}<-\Phi^{-1}\left(1-\frac{\alpha}{2}\right)\right)+P\left(S_{n}>\Phi^{-1}\left(1-\frac{\alpha}{2}\right)\right)\\
 & = & P\left(S_{n}-d<-\Phi^{-1}\left(1-\frac{\alpha}{2}\right)-d\right)
 \\
 & & \qquad +
   P\left(S_{n}-d>\Phi^{-1}\left(1-\frac{\alpha}{2}\right)-d\right)\\
 & \rightarrow & \Phi\left(-\Phi^{-1}\left(1-\frac{\alpha}{2}\right)-d\right)
 \\
 && \qquad +1-\Phi\left(\Phi^{-1}\left(1-\frac{\alpha}{2}\right)-d\right).
\end{eqnarray*}

The desired result follows by noticing that 
$$1-\Phi\left(\Phi^{-1}\left(1-\frac{\alpha}{2}\right)-d\right)=\Phi\left(-\Phi^{-1}\left(1-\frac{\alpha}{2}\right)+d\right).$$ 
\end{proof}

\end{document}